\definecolor{gris}{rgb}{0.85,0.85,0.85}
\def\classP{$\boldsymbol P$ }
\def\classNP{$\boldsymbol N \! \boldsymbol P$ }
\newtheorem{theorem}{Theorem}[section]
\newtheorem{definition}[theorem]{Definition}
\newtheorem{lemma}{{Lemma}}[section]
\newtheorem{property}{{Property}}[section]
\newtheorem{corollary}{{Corollary}}[section]
\def\len{\mbox{\em len}}
\def\combi{\mbox{\em combi}}
\def\pred{\mbox{\em pred}}
\def\succ{\mbox{\em succ}}
\newcommand{\comment}[1]{}
\begin{document}

%----------------------------------------------------------------------
% Title Information, Abstract and Keywords
%----------------------------------------------------------------------

\title[Hard 3-CNF-SAT problems are in \classP.]{A first step in proving `NP=P' :\linebreak 
Hard 3-CNF-SAT problems are in P}

% format author this way for journal articles.
\author[M.R\'emon & J.Barth\'elemy]{%
      Prof. Marcel R\'emon\authorinfo{% 
M.R\'emon, Department of Mathematics,
      Namur University, Belgium;
        \mbox{marcel.remon@unamur.be}}
\and and Johan Barth\'elemy\authorinfo{%
J.Barth\'elemy, SMART Infrastructure Facility, University of Wollongong, Australia; \mbox{johan@uow.edu.au}} 
      }

\maketitle

\thispagestyle{empty}
\begin{abstract}
\noindent The relationship between the complexity classes \classP and
\classNP is an unsolved question in the field of theoretical computer
science. 
In the first part of this paper, a lattice framework is proposed to handle the 3-CNF-SAT problems, known to be in \classNP.  In the second section, we define a multi-linear descriptor function ${\cal H}_\varphi$ for any 3-CNF-SAT
problem $\varphi$ of size $n$, in the sense that ${\cal H}_\varphi : \{0,1\}^n
\rightarrow \{0,1\}^n$ is such that $Im \; {\cal H}_\varphi$ is the set
of all the solutions of $\varphi$.  A new ``merge" operation ${\cal H}_\varphi \bigwedge {\cal H}_{\psi}$ is defined, where $\psi$ is a single 3-CNF clause.
Given ${\cal H}_\varphi$ [but this can be of exponential complexity], the complexity needed for the
computation of $Im \; {\cal
  H}_\varphi$, the set of all solutions, is shown to be polynomial for ``hard" 3-CNF-SAT problems, i.e. the one with few ($\leq 2^k$) or no solutions. 
The third part uses the relation between ${\cal H}_\varphi$ and the indicator function $\mathbbm{1}_{{\cal S}_\varphi}$ for the set of solutions, to develop a {\em greedy} polynomial algorithm to solve ``hard" 3-CNF-SAT problems. 
%Numeric results are given in parallel to the theoretical proofs of polynomial complexity. 
\end{abstract}

%\category{F.1.3}{Complexity Classes}{Relations among complexity classes}
%\terms{Theory, Algorithms}
\keywords{Algorithm Complexity, \classP $\! \! - \! \!$ \classNP problem,  3-CNF-SAT problem}

\section*{Introduction}
\addcontentsline{toc}{section}{Introduction}
\section{Lattice framework for 3-CNF-SAT problems}
\subsection{The 3-CNF-SAT problem, a \classNP reference problem}
\noindent {\it Boolean formulae} are built in the usual way from
propositional variables $x_i$ and three logical connectives $\wedge$, $\vee$ and $\neg$, which are interpreted as conjunction, disjunction, and negation, respectively.  A {\it literal} is a propositional variable or the negation of a propositional variable, and a {\it clause} is a disjunction of literals.  A Boolean formula is {\it in conjunctive normal form} if and only if it is a conjunction of clauses. \\[12pt]
\noindent A {\it 3-CNF formula} $\varphi$ is a Boolean formula in
conjunctive normal form with exactly three literals per clause, like
$\varphi := (x_1 \vee x_2 \vee \neg x_3) \wedge (\neg x_2 \vee  x_3
\vee \neg x_4):= \psi_1 \wedge \psi_2 $.  A {\it 3-CNF formula} is
composed of $n$ propositional variables $x_i$ and $m$ clauses
$\psi_j$.  \\[12pt]
\noindent The {\it 3-CNF-satisfiability or 3-CNF-SAT problem} is to
decide whether there exists or not logical values for the
propositional variables, so that $\varphi$ can be true.  Until now, we
do not know whether it is possible or not to
check the satisfiability of any given {\it 3-CNF} formula $\varphi$ in
a polynomial time with respect of $n$, as the {\it 3-CNF-SAT} problem
is known to belong to the hardest problems in the class \classNP \hspace{-3pt}. See \cite{cormen2001} for details. 
\subsection{A matrix representation of the set of solutions for a 3-CNF formula}
\subsubsection{Definitions}
\noindent The {\it size} of a 3-CNF formula $\varphi$ is defined
as the size of the corresponding {\it Boolean circuit}, i.e. the
number of logical connectives in  $\varphi$.  Let us note
the following property :
\begin{eqnarray}
 \mbox{\it size($\varphi$)} = {\cal O}(m) = {\cal O}(\Delta \times n)
\label{no1}
\end{eqnarray}
\noindent where $\Delta = m / n $ is the {\it ratio} of
clauses with respect to variables. It seems that $\Delta \approx 4.258$ 
gives the most difficult 3-CNF-SAT problems.  See \cite{Crawford199631}. \\[12pt] 
\noindent Let $\varphi(x_1,x_2,\cdots,x_n)$ be a 3-CNF formula. 
 The set ${\cal S}_{\varphi}$ of all {\it satisfying} solutions is 
\begin{eqnarray}
 {\cal S}_{\varphi} = \{ (x_1,\cdots,x_n) \in
\{0,1\}^n \;  | \; \varphi(x_1,\cdots,x_n) =1 \} 
\label{no2}
\end{eqnarray}
Let $\Sigma_{\varphi} = \# \; {\cal S}_{\varphi}$ and $\bar{s}_1, \cdots,
\bar{s}_{\Sigma_{\varphi}}$ be the sorted [with respect to the binary order] elements of $ {\cal
  S}_{\varphi}$.  For $1 \leq j \leq \Sigma_{\varphi} : \bar{s}_j =
(s_j^1,\cdots,s_j^i,\cdots,s_j^n)$. We define the {\it ${\cal S}_\varphi$-matrix} representation of ${\cal S}_{\varphi} $ as
$[{\cal S}_{\varphi}]$ :
\begin{eqnarray}
[{\cal S}_{\varphi}] =  \left( \begin{array}{ccc}
x_1 & x_i & x_n \\
\hline
s_1^1 & \cdots & s_1^n \\
\vdots & s_j^i & \vdots  \\
s_{\Sigma_{\varphi}}^1& \cdots & s_{\Sigma_{\varphi}}^n \end{array}
\right)
\label{no3}
\end{eqnarray}
\subsubsection{Examples}
\noindent The set of solutions for any single clause $\psi_i$ will be represented by a $7 \times 3$
matrix. For example,
\begin{eqnarray*}
[{\cal S}_{\psi_1}] =  [{\cal S}_{x_1 \vee x_2 \vee \neg x_3}] =
 \left( \begin{array}{ccc}
x_1 & x_2 & x_3 \\
\hline
0 & 0 & 0 \\
%0 & 0 & 1 \\
0 & 1 & 0 \\
0 & 1 & 1 \\
1 & 0 & 0 \\
1 & 0 & 1 \\
1 & 1 & 0 \\
1 & 1 & 1 \\
\end{array} \right) \mbox{  and  } 
 [{\cal S}_{\psi_2}] = [{\cal S}_{\neg x_2 \vee  x_3
\vee \neg x_4}] =
 \left( \begin{array}{ccc}
x_2 & x_3 & x_4 \\
\hline
0 & 0 & 0 \\
0 & 0 & 1 \\
0 & 1 & 0 \\
0 & 1 & 1 \\
1 & 0 & 0 \\
%1 & 0 & 1 \\
1 & 1 & 0 \\
1 & 1 & 1 \\
\end{array} \right)  
\end{eqnarray*}
\mbox{}\\[12pt]
${\cal S}_{\psi_1 \wedge \psi_2}$ will be represented by a $12
\times 4$ matrix :
\begin{eqnarray*}
 [{\cal S}_{\psi_1 \wedge \psi_2}] = [{\cal S}_{ (x_1 \vee x_2 \vee \neg x_3) \wedge (\neg x_2 \vee  x_3
\vee \neg x_4)}] =
 \left( \begin{array}{cccc}
x_1 & x_2 & x_3 & x_4\\
\hline
0 & 0 & 0 & 0 \\
0 & 0 & 0 & 1 \\
%0 & 0 & 1 & 0 \\
%0 & 0 & 1 & 1 \\
0 & 1 & 0 & 0 \\
%0 & 1 & 0 & 1 \\
0 & 1 & 1 & 0 \\
0 & 1 & 1 & 1 \\
1 & 0 & 0 & 0 \\
1 & 0 & 0 & 1 \\
1 & 0 & 1 & 0 \\
1 & 0 & 1 & 1 \\
1 & 1 & 0 & 0 \\
%1 & 1 & 0 & 1 \\
1 & 1 & 1 & 0 \\
1 & 1 & 1 & 1 \\
\end{array} \right)  
\end{eqnarray*}
\mbox{}
\subsection{First properties for ${\cal S}_\varphi$-matrices}
\subsubsection{Extension to new variables}
\noindent Let $A$ be a ${\cal S}_\varphi$-matrix, $A$ can be {\it extended} to new
propositional variables by adding columns filled with the neutral sign
``.'', meaning that the corresponding variable can be set either to 0 or 1. This new matrix $\overline{A}$ is equivalent to $A$.
\begin{eqnarray}
 A = 
 \left( \begin{array}{ccc}
x_1 & x_2 & x_4\\
\hline
a_1^1 & a_1^2 & a_1^4 \\ 
a_j^1 & a_j^2 & a_j^4 \\ 
a_{\Sigma_{\varphi}}^1 & a_{\Sigma_{\varphi}}^2 & a_{\Sigma_{\varphi}}^4 \\ 
\end{array} \right)
\equiv 
\left( \begin{array}{cccc}
x_1 & x_2 & x_3 & x_4\\
\hline
a_1^1 & a_1^2 & . [_1^0] & a_1^4 \\ 
a_j^1 & a_j^2 & . & a_j^4 \\ 
a_{\Sigma_{\varphi}}^1 & a_{\Sigma_{\varphi}}^2 & . & a_{\Sigma_{\varphi}}^4 \\ 
\end{array} \right)=  \overline{A}  \label{extension}
\end{eqnarray}
\mbox{}
\subsubsection{The {\bf join} operation of ${\cal S}_\varphi$-matrices}
\noindent Let $A$ and $B$ be two ${\cal S}_\varphi$-matrices and $\{x_1, \cdots, x_n\}$ the
union of their support variables.  Let $\overline{A}$ and $\overline{B}$ be their
extensions over $\{x_1, \cdots, x_n\}$.  Then we define the {\it
  join operation} of $A$ and $B$ by
\begin{eqnarray}
 A \vee B = 
 \left( \begin{array}{c}
x_1 \; \cdots \; \; x_n\\
\hline
\overline{A} \\ 
\overline{B} \\ 
\end{array} \right)
\end{eqnarray}
Of course, this new matrix should be reordered so that the lines are
in a ascending binary order, which can yield sometimes in replacing a line with
a neutral sign by two lines with a one and a zero.
\subsubsection{The {\bf meet} operation of ${\cal S}_\varphi$-matrices}
\noindent Let $A$ and $B$ be two ${\cal S}_\varphi$-matrices,
 $\overline{A}$ and $\overline{B}$
their extensions to the joint set of propositional variables. 
Let $\overline{A}_k$ and $\overline{B}_l$ be the {\it one line matrices}
such that :
\begin{eqnarray}
\overline{A} =
\underset{k=1}{\overset{\Sigma_{\overline{A}}}{\bigvee}}
\overline{A}_k \mbox{ and }
\overline{B} =
\underset{l=1}{\overset{\Sigma_{\overline{B}}}{\bigvee}}
\overline{B}_l 
\end{eqnarray}
 We
define {\it the meet operation} of $A$ and $B$ as 
\begin{eqnarray}
 A \wedge B \equiv \overline{A} \wedge \overline{B} = 
\left( \underset{k=1}{\overset{\Sigma_{\overline{A}}}{\bigvee}}
\overline{A}_k \right) \wedge 
\left( 
\underset{l=1}{\overset{\Sigma_{\overline{B}}}{\bigvee}}
\overline{B}_l \right)=
\underset{k=1}{\overset{\Sigma_{\overline{A}}}{\bigvee}} \;
\underset{l=1}{\overset{\Sigma_{\overline{B}}}{\bigvee}}
\left( \overline{A}_k \wedge \overline{B}_l \right) = 
\underset{k=1}{\overset{\Sigma_{\overline{A}}}{\bigvee}} \;
\underset{l=1}{\overset{\Sigma_{\overline{B}}}{\bigvee}}
\overline{C}_{k,l} 
\end{eqnarray}
where 
\begin{eqnarray}
\overline{C}_{k,l} = 
\left( \begin{array}{ccc}
x_1 & x_i & x_n \\
\hline
a_k^1 & a_k^i & a_k^n \\
\end{array}
\right) \wedge 
\left( \begin{array}{ccc}
x_1 & x_i & x_n \\
\hline
b_l^1 & b_l^i & b_l^n \\
\end{array}
\right)
= \left\{ \begin{array}{l}
\; \emptyset \mbox{ if } \exists \; c_m^i = \mbox{\it ``NaN"}\\
\left( \begin{array}{ccc}
x_1 & x_i & x_n \\
\hline
c_m^1 & c_m^i & c_m^n \\
\end{array}
\right) \mbox{ otherwise }
\end{array} \right.
\end{eqnarray}
with
\begin{eqnarray}
c_m^i = 
\left\{ \begin{array}{l}
a_k^i \mbox{ if } a_k^i = b_l^i \\
a_k^i \mbox{ if } b_l^i = ``\cdot"  \\
b_l^i \mbox{ if } a_k^i = ``\cdot"  \\
\mbox{\it ``NaN"} \mbox{ otherwise} 
\end{array} \right.
\end{eqnarray}
\mbox{}
\subsubsection{The empty and full ${\cal S}_\varphi$-matrices}
\noindent Let us call $\emptyset$, the {\it empty matrix}, with no
line at all.  The empty matrix is neutral for the join operator
$\vee$ and absorbing for the meet operator $\wedge$.\\[12pt]
\noindent Let us define $\Omega$, the {\it full matrix}, as a one line matrix
with only neutral signs $``\cdot"$ in it. The full matrix is neutral for $\wedge$ and absorbing for $\vee$.
\mbox{}
\subsubsection{Lattice structure of ${\cal S}_\varphi$-matrices}
\noindent A {\it semi-lattice} $(X,\vee)$ is a pair consisting of a set X
and a binary operation $\vee$ which is associative, commutative, and
idempotent. \\[12pt]
\noindent Let us note ${\cal A}$ the set of all the
${\cal S}_\varphi$-matrices. Then $({\cal A},\vee)$ and $({\cal A},\wedge)$ are
both semi-lattices, respectively called {\it join} and {\it meet}
semi-lattices. \\[12pt] 
\noindent Let us define {\it the two absorption laws} as  $x = x \vee
(x \wedge y)$  and its dual $x = x \wedge (x \vee y)$.
A {\it lattice} is an algebra $(X, \vee, \wedge)$ satisfying equations
expressing associativity, commutativity, and idempotence of $\vee$ and
$\wedge$, and satisfying the two absorption equations. \\[12pt]
\noindent Therefore, {\bf \em $({\cal A},\vee, \wedge)$ is a lattice} over the set of
${\cal S}_\varphi$-matrices with respect to the join and meet
operators. Indeed, ${\cal S}_\varphi$-matrices satisfy the absorption equations as 
${\cal S}_\varphi = {\cal S}_{\varphi \vee ( \varphi \wedge \varphi')} = {\cal S}_{\varphi \wedge ( \varphi \vee \varphi')}$. \\[12pt]
Moreover, $({\cal A},\vee, \wedge)$ is {\bf \em a
  distributive bounded
lattice} as $\wedge$ is distributive with respect to $\vee$ and $A
\vee \Omega = \Omega \; \; \& \;\; A \wedge \emptyset = \emptyset \;\; \forall A
\in {\cal A}$.  See \cite{burris2012a} for more details over lattices. 
\subsection{``Hard'' 3-CNF-SAT problems}
\begin{definition}
A ``hard'' 3-CNF-SAT problem $\varphi$ is defined in this paper as a problem
with a small or limited set of solutions, in the sense that the number
of solutions is bounded :
\begin{eqnarray}
\Sigma_\varphi = 2^{k}   =
2^{ {\cal O}(1)} \; \mbox{ [for some $k$]} \label{hard}
\end{eqnarray}
Note : the problem is said to be ``hard'' in the sense that the probability to get a
solution at random $[= \frac{\Sigma_\varphi}{2^n}]$ tends to zero as
$n$ tends to infinity. The hardiest 3-CNF-SAT problems are the one
without solution. This paper only considers ``hard" 3-CNF-SAT problems.
\end{definition}
\section{The multi-linear descriptor function  ${\cal H}_\varphi$}
\subsection{Characterization theorem of ${\cal S}_\varphi$ via the descriptor function ${\cal H}_\varphi$}
\begin{theorem}\label{thm_caracterization}{\bf Every non empty ${\cal S}_\varphi$-matrix of $n$ literals  can be
    characterized by a single $n$-dimensional descriptor function ${\cal H}_\varphi$ : $\{0,1\}^n
\rightarrow \{0,1\}^n$ such that $Im \; {\cal H}_\varphi = {\cal S}_\varphi$.} 
\begin{eqnarray}
\forall \;  [{\cal S}_\varphi] &=&  \left( \begin{array}{ccc}
x_1 & x_i & x_n \\
\hline
s_1^1 & \cdots & s_1^n \\
\vdots & s_j^i & \vdots  \\
s_{\Sigma_{\varphi}}^1& \cdots & s_{\Sigma_{\varphi}}^n \end{array}
\right)\neq \emptyset
\; , \; \exists \; n \mbox{ functions } h_i : \{0,1\}^i
\rightarrow \{0,1\} \mbox{ such that } \nonumber \\[12pt]
\mbox{}
[{\cal S}_\varphi] &=&  \underset{(\alpha_1,\cdots,\alpha_n) \in \{0,1\}^n}{\bigvee} \!\!\!
 \left( \begin{array}{ccccc}
x_1 & \cdots & x_i & \cdots & x_n \\
\hline
h_1(\alpha_1) & \cdots & h_i(\alpha_1,\cdots,\alpha_i) & \cdots &
h_n(\alpha_1, \cdots, \alpha_n) \\
\end{array}
\right) \label{h-def}\\
\mbox{} \nonumber\\
&\stackrel{notation}{\equiv}& \left[ \begin{array}{ccc}
h_1(\alpha_1) &
\hdots &
h_n(\alpha_1,\cdots,\alpha_n)
\end{array}
\right]\stackrel{notation}{\equiv} \left[ \begin{array}{c}
{\cal H}_\varphi(\alpha_1,\cdots,\alpha_n)
\end{array}
\right] \label{no12} 
\end{eqnarray}
\end{theorem}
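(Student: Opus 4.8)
The plan is to construct the $n$ functions $h_i$ explicitly by a recursion that walks down the \emph{prefix tree} of the rows of $[{\cal S}_\varphi]$, and then to verify the two inclusions $Im \; {\cal H}_\varphi \subseteq {\cal S}_\varphi$ and ${\cal S}_\varphi \subseteq Im \; {\cal H}_\varphi$ separately. Write $\sigma_i := h_i(\alpha_1,\dots,\alpha_i)$ for the $i$-th coordinate of the image. Assume the first $i-1$ coordinates $(\sigma_1,\dots,\sigma_{i-1})$ have already been produced and form a \emph{valid prefix}, i.e.\ there is at least one row $\bar{s}_j$ of $[{\cal S}_\varphi]$ with $s_j^1=\sigma_1,\dots,s_j^{i-1}=\sigma_{i-1}$; let ${\cal S}_\varphi(\sigma_1,\dots,\sigma_{i-1})$ denote the (non-empty) subset of such rows. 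I would then define
\begin{eqnarray*}
h_i(\alpha_1,\dots,\alpha_i) =
\begin{cases}
b & \text{if every row of } {\cal S}_\varphi(\sigma_1,\dots,\sigma_{i-1}) \text{ has } i\text{-th entry equal to } b, \\
\alpha_i & \text{if both } 0 \text{ and } 1 \text{ occur as } i\text{-th entries},
\end{cases}
\end{eqnarray*}
where the base case $i=1$ uses the whole set ${\cal S}_\varphi$. The crucial point for well-definedness is that the prefix $(\sigma_1,\dots,\sigma_{i-1})$ is itself a function of $(\alpha_1,\dots,\alpha_{i-1})$ alone, each $\sigma_j$ being $h_j(\alpha_1,\dots,\alpha_j)$, so the right-hand side depends only on $(\alpha_1,\dots,\alpha_i)$, exactly as the signature $h_i:\{0,1\}^i\rightarrow\{0,1\}$ requires. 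This I would establish by a clean induction on $i$.

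For the first inclusion I would show, again by induction on $i$, that the running prefix $(\sigma_1,\dots,\sigma_i)$ stays \emph{valid} for every choice of $\alpha$: in the ``forced'' case the common bit $b$ keeps us on an already existing branch of ${\cal S}_\varphi(\sigma_1,\dots,\sigma_{i-1})$, and in the ``free'' case both children of that branch are non-empty by definition, so either value of $\alpha_i$ lands on a genuine row prefix. After $n$ steps the full output $\mathcal H_\varphi(\alpha)=(\sigma_1,\dots,\sigma_n)$ is therefore a row of $[{\cal S}_\varphi]$, giving $Im\;{\cal H}_\varphi \subseteq {\cal S}_\varphi$. For surjectivity, given an arbitrary solution $\bar{s}_j=(s_j^1,\dots,s_j^n)$, I would feed in $\alpha=\bar{s}_j$ and prove by induction that $\sigma_i=s_j^i$ for all $i$: in the forced case $h_i$ returns the common bit, which the row $\bar{s}_j$ necessarily shares, and in the free case $h_i=\alpha_i=s_j^i$. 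Hence $\mathcal H_\varphi(\bar{s}_j)=\bar{s}_j$ and ${\cal S}_\varphi \subseteq Im\;{\cal H}_\varphi$, so the two sets coincide.

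Finally, since the join operator $\bigvee$ is idempotent and merely reorders rows (so that multiplicities and order are irrelevant), the join over all $\alpha\in\{0,1\}^n$ of the one-line matrices built from $(h_1(\alpha_1),\dots,h_n(\alpha_1,\dots,\alpha_n))$ collects precisely the set $Im\;{\cal H}_\varphi = {\cal S}_\varphi$, which is $[{\cal S}_\varphi]$; this is exactly equation~(\ref{h-def}), and the notation~(\ref{no12}) is then just an abbreviation. I expect the main obstacle to be the bookkeeping in the well-definedness step, namely showing rigorously that the forced-versus-free dichotomy makes each $h_i$ a genuine function of only its first $i$ arguments while simultaneously maintaining the invariant that the computed prefix never leaves a non-empty branch of the tree; the two image inclusions are then short inductions riding on that invariant.
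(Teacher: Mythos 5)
Your construction is correct, but it is organized differently from the paper's. The paper proves existence by induction on $n$: it splits $[{\cal S}_\varphi]$ into the block of rows beginning with $0$ and the block beginning with $1$, applies the induction hypothesis to obtain descriptors $f_i$ and $g_i$ for the two $(n-1)$-dimensional sub-matrices, and then sets $h_1(\alpha_1)=\alpha_1$ and $h_i=(\alpha_1+1)f_i+\alpha_1 g_i$ modulo $2$. You instead define all $n$ functions at once by walking the prefix tree of the rows, with a forced/free dichotomy at each node, and then verify the two inclusions $Im\;{\cal H}_\varphi \subseteq {\cal S}_\varphi$ and ${\cal S}_\varphi \subseteq Im\;{\cal H}_\varphi$ by separate inductions on the depth $i$. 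The two constructions produce essentially the same object (your ``free'' case is the paper's $(\alpha_1+1)f_i+\alpha_1 g_i$ unrolled level by level), but yours is more careful on one point: when every row of the current branch shares the same $i$-th bit you output that constant bit, whereas the paper's inductive step as written always sets $h_1(\alpha_1)=\alpha_1$, which fails when one of the two blocks is empty (e.g.\ when all solutions have $x_1=0$, the step would generate spurious rows with $x_1=1$); the paper only treats constant branches in its $n=1$ base case and silently assumes both blocks are non-empty in the step, so your version actually repairs a gap. Your explicit surjectivity argument --- that ${\cal H}_\varphi(\bar{s}_j)=\bar{s}_j$ for every solution, i.e.\ ${\cal H}_\varphi$ fixes ${\cal S}_\varphi$ pointwise --- is only implicit in the paper, although it is exactly the property invoked later when the indicator function is derived. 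The one thing the paper's recursion buys that your case-defined $h_i$ does not give immediately is the explicit multilinear mod-$2$ polynomial form of $h_i$, which the subsequent corollary reads off ``by definition''; with your construction you would need to add the standard remark that every function $\{0,1\}^i\rightarrow\{0,1\}$ has a (unique) multilinear representation over the field with two elements.
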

\mbox{}\\[12pt]
\noindent So, the knowledge of ${\cal H}_\varphi(\alpha_1, \cdots,
\alpha_n) $ characterizes fully $ [{\cal S}_\varphi] $.  ${\cal H}_\varphi(\alpha_1, \cdots, \alpha_n)$ is
called {\bf the descriptor function of ${\cal S}_\varphi$.}  {\em All operations are done in a {\bf \em mod(2)} framework.}  Before
proving the existence of such a function, let us consider some
examples.
\mbox{}\\[12pt]
\noindent {\it Examples of ${\cal H}_\varphi$ : } 
\begin{flalign}
\bullet  \hspace{1cm} \varphi & = (\neg x_1 \vee \neg x_2 \vee \neg x_3) \label{example1}\\
 [{\cal S}_\varphi]& = \hspace{-0.7cm}  \underset{(\alpha_1,\cdots,\alpha_3) \in
  \{0,1\}^3}{\bigvee}  
 \left( \begin{array}{ccc}
x_1 & x_2 & x_3  \\
\hline
\alpha_1 & \alpha_2 & \alpha_1 \alpha_2 \alpha_3+\alpha_3 \\
\end{array}
\right)_{\mbox{\tiny (mod 2)}} \hspace{-0.8cm} \equiv \left[ \begin{array}{ccc}
\alpha_1 &
\alpha_2 &
\alpha_1 \alpha_2 \alpha_3 + \alpha_3
\end{array}
\right] \equiv \left[ \begin{array}{c}
{\cal H}_\varphi
\end{array}
\right] \nonumber \\
 [{\cal S}_\varphi]& = 
 \left[ {\tiny \begin{array}{ccccc}
&\alpha_1 & \alpha_2 & \alpha_3  \\
\hline
{\cal H}_\varphi&(\; 0 & 0 & 0 \;) \\
{\cal H}_\varphi&(\; 0 & 0 & 1 \;) \\
{\cal H}_\varphi&(\; 0 & 1 & 0 \;) \\
{\cal H}_\varphi&(\; 0 & 1 & 1 \;) \\
{\cal H}_\varphi&(\; 1 & 0 & 0 \;) \\
{\cal H}_\varphi&(\; 1 & 0 & 1 \;) \\
{\cal H}_\varphi&(\; 1 & 1 & 0 \;) \\
{\cal H}_\varphi&(\; 1 & 1 & 1 \;) \\
\end{array}
}
\right]  =  \left[ {\tiny \begin{array}{ccc}
x_1 & x_2 & x_3 \\
\hline
0 & 0 & 0 \\
0 & 0 & 1 \\
0 & 1 & 0 \\
0 & 1 & 1 \\
1 & 0 & 0 \\
1 & 0 & 1 \\
\fcolorbox{gris}{gris}{1} & \fcolorbox{gris}{gris}{1} & \fcolorbox{gris}{gris}{0} \\
\fcolorbox{gris}{gris}{1} & \fcolorbox{gris}{gris}{1} & \fcolorbox{black}{gris}{0} \\
\end{array}
}
\right]  =  \left[ {\tiny \begin{array}{cccc}
x_1 & x_2 & x_3 \\
\hline
0 & 0 & 0 \\
0 & 0 & 1 \\
0 & 1 & 0 \\
0 & 1 & 1 \\
1 & 0 & 0 \\
1 & 0 & 1 \\
\fcolorbox{gris}{gris}{1} & \fcolorbox{gris}{gris}{1} & \fcolorbox{gris}{gris}{0} \\
\end{array}
}
\right] \nonumber
\end{flalign}
\mbox{}
\begin{flalign}
\bullet  \hspace{1cm} \varphi & = (x_1 \vee x_2 \vee \neg x_3) \wedge (\neg x_2 \vee  x_3
\vee \neg x_4) \label{example2}\\
 [{\cal S}_\varphi]& = \hspace{-0.7cm} \underset{(\alpha_1,\cdots,\alpha_4) \in
  \{0,1\}^4}{\bigvee}  
 \left( \begin{array}{cccc}
x_1 & x_2 & x_3 & x_4 \\
\hline
\alpha_1 & \alpha_2 & (\alpha_1+1)(\alpha_2+1)\alpha_3+\alpha_3 &
\alpha_2 (\alpha_3+1) \alpha_4 +\alpha_4\\
\end{array}
\right)_{\mbox{\tiny (mod 2)}} \nonumber \\[12pt]
& \equiv  \left[ \begin{array}{cccc}
\alpha_1 & 
\alpha_2 & 
(\alpha_1+1)(\alpha_2+1)\alpha_3+\alpha_3 &
\alpha_2 (\alpha_3+1) \alpha_4 +\alpha_4
\end{array}
\right]  \equiv \left[ \begin{array}{c}
{\cal H}_\varphi
\end{array}
\right] \nonumber \\[12pt] 
& =  
 \left[ {\tiny \begin{array}{ccccc}
& \alpha_1 & \alpha_2 & \alpha_3 & \alpha_4 \\
\hline
{\cal H}_\varphi & (\; 0 & 0 & 0 & 0 \;) \\
{\cal H}_\varphi & (\; 0 & 0 & 0 & 1 \;) \\
{\cal H}_\varphi & (\; 0 & 0 & 1 & 0 \;) \\
{\cal H}_\varphi & (\; 0 & 0 & 1 & 1 \;) \\
{\cal H}_\varphi & (\; 0 & 1 & 0 & 0 \;) \\
{\cal H}_\varphi & (\; 0 & 1 & 0 & 1 \;) \\
{\cal H}_\varphi & (\; 0 & 1 & 1 & 0 \;) \\
{\cal H}_\varphi & (\; 0 & 1 & 1 & 1 \;) \\
{\cal H}_\varphi & (\; 1 & 0 & 0 & 0 \;) \\
{\cal H}_\varphi & (\; 1 & 0 & 0 & 1 \;) \\
{\cal H}_\varphi & (\; 1 & 0 & 1 & 0 \;) \\
{\cal H}_\varphi & (\; 1 & 0 & 1 & 1 \;) \\
{\cal H}_\varphi & (\; 1 & 1 & 0 & 0 \;) \\
{\cal H}_\varphi & (\; 1 & 1 & 0 & 1 \;) \\
{\cal H}_\varphi & (\; 1 & 1 & 1 & 0 \;) \\
{\cal H}_\varphi & (\; 1 & 1 & 1 & 1 \;) \\
\end{array}
}
\right]  =  \left[ {\tiny \begin{array}{cccc}
x_1 & x_2 & x_3 & x_4 \\
\hline
0 & 0 & 0 & 0 \\
0 & 0 & 0 & 1 \\
\fcolorbox{gris}{gris}{0} & \fcolorbox{gris}{gris}{0} & \fcolorbox{black}{gris}{0} & \fcolorbox{gris}{gris}{0} \\
\fcolorbox{gris}{gris}{0} & \fcolorbox{gris}{gris}{0} & \fcolorbox{black}{gris}{0} & \fcolorbox{gris}{gris}{1} \\
0 & 1 & 0 & 0 \\
\fcolorbox{gris}{gris}{0} & \fcolorbox{gris}{gris}{1} & \fcolorbox{gris}{gris}{0} & \fcolorbox{black}{gris}{0} \\
0 & 1 & 1 & 0 \\
0 & 1 & 1 & 1 \\
1 & 0 & 0 & 0 \\
1 & 0 & 0 & 1 \\
1 & 0 & 1 & 0 \\
1 & 0 & 1 & 1 \\
1 & 1 & 0 & 0 \\
\fcolorbox{gris}{gris}{1} & \fcolorbox{gris}{gris}{1} & \fcolorbox{gris}{gris}{0} & \fcolorbox{black}{gris}{0} \\
1 & 1 & 1 & 0 \\
1 & 1 & 1 & 1 \\
\end{array}
}
\right]  =  \left[ {\tiny \begin{array}{cccc}
x_1 & x_2 & x_3 & x_4 \\
\hline
\fcolorbox{gris}{gris}{0} & \fcolorbox{gris}{gris}{0} & \fcolorbox{gris}{gris}{0} & \fcolorbox{gris}{gris}{0} \\
\fcolorbox{gris}{gris}{0} & \fcolorbox{gris}{gris}{0} & \fcolorbox{gris}{gris}{0} & \fcolorbox{gris}{gris}{1} \\
\fcolorbox{gris}{gris}{0} & \fcolorbox{gris}{gris}{1} & \fcolorbox{gris}{gris}{0} & \fcolorbox{gris}{gris}{0} \\
0 & 1 & 1 & 0 \\
0 & 1 & 1 & 1 \\
1 & 0 & 0 & 0 \\
1 & 0 & 0 & 1 \\
1 & 0 & 1 & 0 \\
1 & 0 & 1 & 1 \\
\fcolorbox{gris}{gris}{1} & \fcolorbox{gris}{gris}{1} & \fcolorbox{gris}{gris}{0} & \fcolorbox{gris}{gris}{0} \\
1 & 1 & 1 & 0 \\
1 & 1 & 1 & 1 \\
\end{array}
}
\right] \nonumber
\end{flalign}
\mbox{}\\[24pt]
\newpage
\begin{proof} (Existence of ${\cal H}_\varphi$) \hspace{24pt} [$\;$ Remember : all operations in {\em mod(2)}$\;$ ]
\mbox{}\\[12pt]
\noindent $\bullet$ The theorem is satisfied for $n=1$ as 
\begin{eqnarray*}
\left[ \begin{array}{c}
x_1 \\
\hline
1  \\
\end{array} \right]
= 
\left( \begin{array}{c}
x_1 \\
\hline
h_1(\alpha_1) \equiv 1  \\
\end{array} \right) \; ; \; 
\left[ \begin{array}{c}
x_1 \\
\hline
0  \\
\end{array} \right]
= 
\left( \begin{array}{c}
x_1 \\
\hline
h_1(\alpha_1) \equiv 0  \\
\end{array} \right) \; ; \; 
\left[ \begin{array}{c}
x_1 \\
\hline
0  \\
1 \\
\end{array} \right]
= 
\underset{\alpha_1 \in
  \{0,1\}}{\bigvee}
 \left( \begin{array}{c}
x_1 \\
\hline
\alpha_1   \\ 
\end{array} \right)
\end{eqnarray*}
$\bullet$
Let the theorem be true for $n-1$ and $[{\cal S}]$ be a ${\cal S}_\varphi$-matrix of
dimension $n$.  There exist two ${\cal S}_\varphi$-matrices $[{\cal S}_1]$ and
$[{\cal S}_2]$ of size $n-1$ such that :
\begin{flalign*}
[{\cal S}] =& \left[ \begin{array}{c|c}
x_1 & x_2 \cdots x_n \\
\hline
0 & [{\cal S}_1]   \\ 
\end{array} \right]
\vee \left[ \begin{array}{c|c}
x_1 & x_2 \cdots x_n \\
\hline
1 & [{\cal S}_2]   \\
\end{array} \right] \\
& \hspace{-12pt} \mbox{as  $[{\cal S}]$ can be divided in two sets of lines, the ones beginning with $0$ and the ones with $1$.}   \\
& \hspace{-12pt} \mbox{Using the recurrence hypothesis :} \\
[{\cal S}] =& \underset{\alpha_i \in
  \{0,1\}}{\bigvee} \left( \begin{array}{c|c}
x_1 & x_2 \cdots x_n \\
\hline
0 & f_2(\alpha_2) \cdots f_n(\alpha_2,\cdots,\alpha_n)   \\ 
\end{array} \right)
\underset{\alpha_i \in
  \{0,1\}}{\bigvee} \left( \begin{array}{c|c}
x_1 & x_2 \cdots x_n \\
\hline
1 & g_2(\alpha_2) \cdots g_n(\alpha_2,\cdots,\alpha_n)   \\
\end{array} \right) \\
\end{flalign*}
\begin{eqnarray*}
\mbox{Thus \hspace{1cm}}  [{\cal S}] = \underset{\alpha_i \in
  \{0,1\}}{\bigvee} \left( \begin{array}{c}
 x_1 \cdots x_n \\
\hline
h_1(\alpha_1) \cdots h_n(\alpha_1,\cdots,\alpha_n)   \\ 
\end{array} \right) \mbox{\hspace{2,8cm}}
\end{eqnarray*}
\mbox{\hspace{1cm}} where 
\begin{eqnarray*}
 h_1(\alpha_1) & = & \alpha_1 \\
h_i(\alpha_1,\cdots,\alpha_i) & = &  \underbrace{(\alpha_1+1) f_i(\alpha_2,\cdots,\alpha_i)}_\text{for lines where $x_1 = 0$} \;\;+ \underbrace{\alpha_1
g_i(\alpha_2,\cdots,\alpha_i)}_\text{for lines where $x_1 = 1$} \; \;  \; \; \mbox{ for } i \neq 1 
\end{eqnarray*}
\end{proof}
\begin{definition}{\bf Length of ${\cal H}_\varphi$ and $h_i(\alpha_1,\cdots,\alpha_i)$.}\\[12pt]
\noindent Let $\len(h_i)$ be defined as the
  {\bf number of terms} in $h_i(\alpha_1,\cdots,\alpha_i).$ \label{len-def} \\
\noindent Let $\len({\cal H}_\varphi)$ be defined as the {\bf maximum length} of  $h_i(\alpha_1,\cdots,\alpha_i) : \len({\cal H}_\varphi) = \max_i \len(h_i)$
\end{definition}
\mbox{} \\
\vspace{-12pt}
\begin{corollary}{\bf The descriptor function ${\cal H}_\varphi$ is a $n$-dimensional modulo-2 multi-linear  combination of $\alpha_i$.}
\end{corollary}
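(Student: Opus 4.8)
The plan is to prove the corollary by induction on $n$, reusing the recursive construction of the $h_i$ established in the proof of Theorem~\ref{thm_caracterization}. First I would fix the meaning of ``multi-linear'' in the present mod(2) setting: a polynomial in $\alpha_1,\ldots,\alpha_i$ over $\mathbb{F}_2$ is multi-linear when every monomial contains each $\alpha_j$ to a power at most $1$; equivalently, it is invariant under the reduction $\alpha_j^2 \equiv \alpha_j$, which holds for every $\alpha_j \in \{0,1\}$. Since each $h_i : \{0,1\}^i \to \{0,1\}$ is built entirely from the operations $+$ and $\times$ taken mod $2$, it is a genuine polynomial function, and the only thing left to verify is that no variable survives with an exponent exceeding $1$.

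For the base case $n=1$ we have $h_1(\alpha_1) = \alpha_1$ (or one of the constants $0,1$), which is trivially multi-linear. For the inductive step I would assume that the functions $f_i$ and $g_i$ attached to the two sub-matrices $[{\cal S}_1]$ and $[{\cal S}_2]$ are multi-linear in $\alpha_2,\ldots,\alpha_i$, and then examine the defining relation
\[
h_i(\alpha_1,\ldots,\alpha_i) = (\alpha_1+1)\,f_i(\alpha_2,\ldots,\alpha_i) + \alpha_1\,g_i(\alpha_2,\ldots,\alpha_i).
\]
The key observation is that the newly introduced variable $\alpha_1$ enters only through the two degree-one factors $(\alpha_1+1)$ and $\alpha_1$, whose single variable is disjoint from the variables $\alpha_2,\ldots,\alpha_i$ occurring in $f_i$ and $g_i$. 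Consequently each product $(\alpha_1+1)\,f_i$ and $\alpha_1\,g_i$ carries $\alpha_1$ to the first power and leaves every $\alpha_j$ with $j \geq 2$ at degree at most $1$; their mod(2) sum therefore remains multi-linear in $\alpha_1,\ldots,\alpha_i$. Applying this to every component $h_i$ shows that ${\cal H}_\varphi$ is an $n$-tuple of multi-linear combinations of the $\alpha_i$, as claimed.

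The only place where care is needed — and the single point I would treat as the ``main obstacle'' — is the implicit claim that multiplication never raises an exponent above $1$. This can fail in general when two factors share a variable; here it does not, precisely because of the disjointness just noted, but I would make this explicit and add the safety remark that even if an intermediate expansion produced a square $\alpha_j^2$ (for instance after distributing a product such as $(\alpha_1+1)(\alpha_2+1)\alpha_3$ as in example~\eqref{example2}), the identity $\alpha_j^2 \equiv \alpha_j \pmod 2$ immediately restores the multi-linear normal form. This reduction also guarantees uniqueness of the representation, so that ${\cal H}_\varphi$ is not merely expressible as, but canonically equal to, an $n$-dimensional modulo-2 multi-linear combination of the $\alpha_i$.
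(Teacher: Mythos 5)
Your proof is correct and follows essentially the same route as the paper: the paper disposes of the corollary in one line as ``a mere consequence of the definition of $h_i$ in Theorem~\ref{thm_caracterization},'' and what you have done is simply to spell that consequence out as an explicit induction on the recursion $h_i = (\alpha_1+1)f_i + \alpha_1 g_i$, which is exactly the intended argument. The extra care you take about exponents never exceeding $1$ (and the $\alpha_j^2 \equiv \alpha_j$ reduction on $\{0,1\}$) is a welcome clarification but not a different method.
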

\begin{proof}
This is a mere consequence of the definition of
$h_i(\alpha_1,\cdots,\alpha_i)$ in {\em Theorem
  \ref{thm_caracterization}}. \end{proof} 
\mbox{} \\
\vspace{-12pt}
\begin{corollary} Let $A \subseteq \{\alpha_1, \cdots,
\alpha_n\}$, with $\alpha_i \in \{0,1\}$ and 
$h_i \in \combi(A) \stackrel{notation}{\Leftrightarrow} h_i \;$ is a multi-linear combination
  of $\alpha_i \in A$, modulo 2, then 
 \[h_i(\alpha_1, \cdots, \alpha_i) \in \combi(\{\alpha_1, \cdots, \alpha_i\}) \mbox{ , } \;\; \len(h_i) \leq 2^{i} \;\; \mbox{ and } \;\;  \len({\cal H}_\varphi) \leq 2^n \]
\end{corollary}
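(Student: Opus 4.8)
The plan is to prove the three assertions about the descriptor function $\mathcal{H}_\varphi$ by induction on the index $i$, mirroring the inductive construction of the $h_i$ already given in the proof of \emph{Theorem \ref{thm_caracterization}}. The first claim, $h_i \in \combi(\{\alpha_1, \cdots, \alpha_i\})$, is almost immediate from the recurrence: the construction defines $h_1(\alpha_1) = \alpha_1$, which is trivially a multi-linear combination of $\{\alpha_1\}$, and defines
\begin{eqnarray*}
h_i(\alpha_1,\cdots,\alpha_i) = (\alpha_1+1)\, f_i(\alpha_2,\cdots,\alpha_i) + \alpha_1\, g_i(\alpha_2,\cdots,\alpha_i).
\end{eqnarray*}
By the induction hypothesis applied to the two $(n-1)$-dimensional sub-matrices $[\mathcal{S}_1]$ and $[\mathcal{S}_2]$, both $f_i$ and $g_i$ are multi-linear combinations of $\{\alpha_2, \cdots, \alpha_i\}$. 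Multiplying each by the degree-one factors $(\alpha_1+1)$ and $\alpha_1$ keeps the expression multi-linear in $\{\alpha_1, \cdots, \alpha_i\}$, since no variable acquires an exponent above one (here one uses $\alpha_1^2 = \alpha_1$ in the $\mathit{mod}\,2$ framework to collapse any repeated factor of $\alpha_1$). So the first part follows directly.

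For the length bound $\len(h_i) \leq 2^i$, I would again induct on $i$. The base case is $\len(h_1) = 1 = 2^0 \leq 2^1$. For the inductive step, I expand the recurrence above: $h_i$ is the sum of $(\alpha_1+1)f_i$ and $\alpha_1 g_i$. Writing $(\alpha_1+1)f_i = \alpha_1 f_i + f_i$, the number of terms is at most $\len(f_i) + \len(f_i) + \len(g_i) = 2\len(f_i) + \len(g_i)$ before any $\mathit{mod}\,2$ cancellation. Applying the induction hypothesis $\len(f_i), \len(g_i) \leq 2^{i-1}$ gives a crude bound of $3 \cdot 2^{i-1}$, which exceeds $2^i$, so this naive count is not quite sharp enough and is where the main obstacle lies.

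The cleanest way around this is to observe that a multi-linear combination over $\{\alpha_1, \cdots, \alpha_i\}$ in the $\mathit{mod}\,2$ framework is precisely a sum of a subset of the $2^i$ distinct square-free monomials in these variables (each variable appearing with exponent $0$ or $1$), because $\alpha_j^2 = \alpha_j$ forces every monomial into square-free form. Hence \emph{any} such combination has at most $2^i$ terms simply by counting the monomials available, independently of the recurrence. This gives $\len(h_i) \leq 2^i$ directly and bypasses the loose inductive count entirely; the recurrence is then only needed to establish membership in $\combi(\{\alpha_1, \cdots, \alpha_i\})$, which the first part already secured. Finally, $\len(\mathcal{H}_\varphi) = \max_i \len(h_i) \leq \max_i 2^i = 2^n$, since the index $i$ ranges over $1, \cdots, n$, which closes the third claim and completes the proof.

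The one point deserving care is the interaction between the monomial-counting argument and the $\mathit{mod}\,2$ reduction: I must confirm that reducing exponents via $\alpha_j^2 = \alpha_j$ never \emph{increases} the number of distinct monomials but only merges or cancels them, so that the bound of $2^i$ square-free monomials is a genuine ceiling. This is clear because reduction maps each raw monomial to a square-free one and terms appearing an even number of times vanish $\mathit{mod}\,2$; neither operation creates a monomial outside the square-free family. With this observation the length bound is unconditional, and the membership and maximum statements follow as above.
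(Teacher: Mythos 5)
Your proposal is correct and follows essentially the same route as the paper: the paper offers no formal proof of this corollary, only the worked example for $h_2$, which is precisely your monomial-counting argument (a multi-linear combination of $i$ variables modulo 2 is a $\{0,1\}$-coefficient sum over the $2^i$ square-free monomials, hence has at most $2^i$ terms), while the membership claim $h_i \in \combi(\{\alpha_1,\cdots,\alpha_i\})$ is, as you note, already secured by the recurrence from Theorem \ref{thm_caracterization}. Your explicit observation that the naive inductive term count ($3\cdot 2^{i-1}$) is too weak and must be replaced by the monomial ceiling is a useful clarification the paper leaves implicit.
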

\mbox{}\\
{\em Example :}
\vspace{-12pt}
\begin{eqnarray*}
h_2(\alpha_1,\alpha_2) \in \combi(\{\alpha_1,\alpha_2\}) & \Rightarrow &
h_2(\alpha_1,\alpha_2) = \delta_{(0,0)} \alpha_1^0 \alpha_2^0 + \delta_{(1,0)} 
\alpha_1 + \delta_{(0,1)} \alpha_2 + \delta_{(1,1)} \alpha_1 \alpha_2 \\
& \Rightarrow & \len(h_2) \leq 2^{(\# \{\alpha_1,\alpha_2\})} = 2^2 \\
\end{eqnarray*}

\subsection{Computation of ${\cal H}_\varphi$}
\begin{theorem}{\bf Simple characterization theorem (one-clause 3-CNF formula)} \\[12pt] 
Consider the 3-CNF formula, consisting of only one clause $\psi \equiv [\neg] x_r
\vee [\neg] x_s \vee [\neg] x_t$ where $1 \leq r < s < t \leq
n$. 
$[{\cal S}_\psi]$ can be characterized by the
following $[ {\cal H}_\psi ] \equiv [ h_i(\alpha_1, \cdots, \alpha_i) ]$
descriptor function where : 
\begin{eqnarray}
h_i(\alpha_1, \cdots, \alpha_i) & = & \alpha_i \;\;\; \forall \; i
\neq t \; (1 \leq i \leq n) 
\nonumber \\
h_t(\alpha_r,\alpha_s,\alpha_t) & = & \left\{
\begin{array}{ll}
(\alpha_{r}+1)(\alpha_s+1)(\alpha_t+1)+\alpha_t \;\;  &
\mbox{if} \;\; \psi = x_r \vee x_s \vee x_t \\
(\alpha_{r}+1)(\alpha_s+1)\; \alpha_t+\alpha_t \;\;  &
\mbox{if} \;\; \psi = x_r \vee x_s \vee \neg x_t \\
(\alpha_{r}+1)\; \alpha_s \; (\alpha_t+1)+\alpha_t \;\;  &
\mbox{if} \;\; \psi = x_r \vee \neg  x_s \vee x_t \\
(\alpha_{r}+1)\; \alpha_s \; \alpha_t +\alpha_t \;\;  &
\mbox{if} \;\; \psi = x_r \vee \neg x_s \vee \neg x_t \\
\alpha_{r} \; (\alpha_s+1)(\alpha_t+1)+\alpha_t \;\;  &
\mbox{if} \;\; \psi = \neg x_r \vee x_s \vee x_t \\
\alpha_{r}\;(\alpha_s+1)\; \alpha_t+\alpha_t \;\;  &
\mbox{if} \;\; \psi = \neg x_r \vee x_s \vee \neg x_t \\
\alpha_{r}\; \alpha_s \; (\alpha_t+1)+\alpha_t \;\;  &
\mbox{if} \;\; \psi = \neg x_r \vee \neg  x_s \vee x_t \\
\alpha_{r}\; \alpha_s \; \alpha_t +\alpha_t \;\;  &
\mbox{if} \;\; \psi = \neg x_r \vee \neg x_s \vee \neg x_t \\
\end{array} \right. \label{def_h} \\
\nonumber
\end{eqnarray}
\end{theorem}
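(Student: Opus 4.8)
The plan is to verify directly that $Im\,\mathcal{H}_\psi = \mathcal{S}_\psi$, using that a single clause fixes only the three variables $x_r, x_s, x_t$ and leaves every other coordinate free through $h_i(\alpha)=\alpha_i$. First I would compress the eight cases of \eqref{def_h} into one identity. Writing $\ell_r,\ell_s,\ell_t$ for the three literals and setting $F_j(\alpha_j)=\alpha_j+1$ when $\ell_j$ is positive and $F_j(\alpha_j)=\alpha_j$ when $\ell_j$ is negated, each displayed formula is exactly $h_t = F_r(\alpha_r)\,F_s(\alpha_s)\,F_t(\alpha_t) + \alpha_t \pmod 2$. The product $F_r F_s F_t$ equals $1$ precisely when every $F_j=1$, i.e. on the single input $\beta=(\beta_r,\beta_s,\beta_t)$ with $\beta_j=0$ for a positive literal and $\beta_j=1$ for a negated one; this $\beta$ is exactly the unique assignment that makes all three literals, hence $\psi$, false. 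So the identity reads $h_t = \alpha_t + \mathbbm{1}[(\alpha_r,\alpha_s,\alpha_t)=\beta] \pmod 2$.

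Second, I would read off the geometric meaning on the three clause coordinates. Since $x_r=\alpha_r$ and $x_s=\alpha_s$, the map $(\alpha_r,\alpha_s,\alpha_t)\mapsto(x_r,x_s,x_t)$ is the identity on all eight triples except that it flips the $t$-bit on the single input $\beta$. Flipping the $t$-bit of $\beta$ turns the one falsifying triple into its Hamming-neighbour, which is a genuine solution; thus $\beta$ is removed from the image and its neighbour is hit twice, while the remaining six triples map to themselves. The image over $(x_r,x_s,x_t)$ is therefore exactly the seven satisfying triples. Equivalently, one may split on $P:=F_r(\alpha_r)F_s(\alpha_s)$: if $P=0$ then $h_t=\alpha_t$ and $x_t$ is free (a preceding literal is already true); if $P=1$ then $h_t=F_t(\alpha_t)+\alpha_t$ is constant in $\alpha_t$, forcing $x_t=1$ for a positive $\ell_t$ and $x_t=0$ for a negated one, i.e. the unique value satisfying $\ell_t$.

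Third, I would settle the two inclusions. For $Im\,\mathcal{H}_\psi\subseteq\mathcal{S}_\psi$: any $\alpha$ gives $x_r=\alpha_r$, $x_s=\alpha_s$ and, by the case split, an $x_t$ for which some literal of $\psi$ holds, so $\psi=1$. For $\mathcal{S}_\psi\subseteq Im\,\mathcal{H}_\psi$: given a solution $\bar s=(s^1,\dots,s^n)$, choose $\alpha=\bar s$; then $h_i(\alpha)=\alpha_i=s^i$ for every $i\neq t$, and for $i=t$ either $P=0$, whence $h_t=\alpha_t=s^t$, or $P=1$, in which case $\bar s$ being a solution forces $s^t$ to equal the value $h_t$ returns. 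Combining the inclusions, and recalling that the free coordinates $h_i=\alpha_i$ ($i\neq t$) account for all other variables, gives $Im\,\mathcal{H}_\psi=\mathcal{S}_\psi$ and hence \eqref{def_h}.

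I expect no real obstacle: the entire content is the single identity $h_t=F_rF_sF_t+\alpha_t$ together with the remark that $F_rF_sF_t$ is the indicator of the unique falsifying assignment $\beta$. The only point requiring a little care is the $P=1$ branch, where one must confirm that $F_t(\alpha_t)+\alpha_t$ is genuinely independent of $\alpha_t$ — equal to $1$ in the positive case and $0$ in the negated case — so that $x_t$ is \emph{forced} rather than free. Once this is checked, recognising all eight lines of \eqref{def_h} as instances of the one identity turns the remaining verification into a direct substitution.
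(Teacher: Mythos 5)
Your proof is correct. The paper itself offers no argument here beyond the single line ``The proof is straightforward. See (\ref{example1}) for an example,'' so your write-up supplies exactly the verification the paper omits: the compression of the eight cases into $h_t = F_r F_s F_t + \alpha_t$ with $F_rF_sF_t$ the indicator $\mathbbm{1}[(\alpha_r,\alpha_s,\alpha_t)=\beta]$ of the unique falsifying triple, the observation that the map is the identity except for flipping the $t$-bit at $\beta$ (so $\beta$ drops out of the image and its neighbour is hit twice), and the two inclusions. This is the direct verification the authors are gesturing at, carried out in full; the $P=F_rF_s$ case split, and in particular the check that $F_t(\alpha_t)+\alpha_t$ is constant when $P=1$, is precisely the point that needed to be made explicit.
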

\begin{proof}
The proof is straightforward. See (\ref{example1}) for an example. \\
\end{proof}
\newpage
\begin{theorem}{\bf General descriptor function theorem} \label{algorithm} \\[12pt] 
The descriptor function ${\cal H}_{\varphi \wedge \psi}(\alpha_1, \cdots, \alpha_n), \;$ for the conjunction of a
  3-CNF formulae $\varphi$ with ${\cal F}_\varphi$ as descriptor function and a 3-CNF clause $\psi$ associated
to ${\cal G}_\psi$ can be
computed via a general algorithm. \\  
Let $\Lambda = \{ (\alpha_1,
\cdots, \alpha_n) \in \{0,1\}^n \; : {\cal F}_\varphi(\alpha_1, \cdots, \alpha_n) =
{\cal G}_\psi(\alpha_1, \cdots, \alpha_n) \}$.
Then the {\bf following algorithm} will give the exact ${\cal
  H}_{\varphi \wedge \psi}(\alpha_1, \cdots, \alpha_n) \;$ :
\begin{eqnarray*}
\forall (\alpha_1, \cdots, \alpha_n) \in \Lambda \; : \; {\cal H}_{\varphi \wedge \psi}(\alpha_1, \cdots, \alpha_n) &:=&{\cal F}_\psi(\alpha_1, \cdots,
\alpha_n) =  {\cal G}_\psi(\alpha_1, \cdots,\alpha_n) \\
 \forall (\alpha_1, \cdots, \alpha_n) \not \in \Lambda \; : \;  {\cal H}_{\varphi \wedge \psi}(\alpha_1, \cdots, \alpha_n) &:=&  {\cal H}_{\varphi \wedge \psi}(\alpha^*_1, \cdots,
\alpha^*_n) \;\; \mbox{
for some } \; (\alpha^*_1, \cdots,
\alpha^*_n)  \in \Lambda 
\end{eqnarray*}
The algorithm defines $(\alpha^*_1, \cdots,
\alpha^*_n)$ as the {\em ``nearest''} line of $(\alpha_1, \cdots, \alpha_n)$ in $\Lambda$.  This
 depends on the clause $\psi$.  Let $\psi =  [\neg] x_r
\vee [\neg] x_s \vee [\neg] x_t \;\; (1 \leq r < s < t \leq n)$, then   
\begin{eqnarray*}
(\alpha^*_1,\cdots,\alpha^*_n) :=
\left\{ \begin{array}{l}
(\alpha_1,
\cdots, \alpha_t+1, \cdots,\alpha_n) \; \; \mbox{ if }
(\alpha_1,
\cdots, \alpha_t+1, \cdots,\alpha_n)  \in \Lambda \;\; \\
\mbox{else }\\
\mbox{\hspace{0.6cm} } (\alpha_1,
\cdots, \alpha_{t-1}+1, \alpha_t, \cdots,\alpha_n) \; \; \mbox{ if }
(\alpha_1,
\cdots, \alpha_{t-1}+1, \alpha_t, \cdots,\alpha_n)  \in \Lambda \;\;\\
\mbox{\hspace{0.6cm} else } \\
\mbox{\hspace{1.3cm} }\cdots
\end{array}
\right.
\end{eqnarray*}
\end{theorem}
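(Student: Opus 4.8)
The plan is to show that the function produced by the algorithm is a genuine descriptor function for $\varphi \wedge \psi$, i.e. that its image equals ${\cal S}_{\varphi \wedge \psi} = {\cal S}_\varphi \cap {\cal S}_\psi$. The engine of the whole argument is a \emph{fixed-point lemma}: every descriptor function built by the recursion of Theorem \ref{thm_caracterization} acts as the identity on its own image, i.e. ${\cal H}_\chi(s) = s$ for all $s \in {\cal S}_\chi$. I would prove this by induction on $n$ along the very recursion that defines ${\cal H}_\chi$. The base case $n=1$ is immediate from the three displays in the existence proof. For the step, split ${\cal S}_\chi$ into the lines beginning with $0$ and those beginning with $1$; for $s$ with $s_1 = 0$ the formula gives $h_i(s) = (s_1+1)f_i(s_2,\ldots,s_i) + s_1\, g_i(s_2,\ldots,s_i) = f_i(s_2,\ldots,s_i)$, and since $(s_2,\ldots,s_n)$ lies in the $0$-branch the induction hypothesis yields $f_i(s_2,\ldots,s_i) = s_i$; the case $s_1=1$ is symmetric. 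I would also verify the lemma directly for the clause descriptor ${\cal G}_\psi$ of the Simple characterization theorem: for $s \in {\cal S}_\psi$ the product $(s_r+1)(s_s+1)(s_t+1)$, or whichever variant matches $\psi$, vanishes precisely because $s$ satisfies the clause, so $h_t(s) = s_t$, while $h_i(s) = s_i$ for $i \neq t$ is trivial.

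Given the lemma, \emph{soundness} is immediate. The image of the algorithm is $\{\,{\cal F}_\varphi(\alpha) : \alpha \in \Lambda\,\}$, since off-$\Lambda$ values are merely copied from $\Lambda$. For $\alpha \in \Lambda$ the value ${\cal F}_\varphi(\alpha) = {\cal G}_\psi(\alpha)$ lies in $Im\,{\cal F}_\varphi \cap Im\,{\cal G}_\psi = {\cal S}_\varphi \cap {\cal S}_\psi$, so $Im\,{\cal H}_{\varphi \wedge \psi} \subseteq {\cal S}_{\varphi \wedge \psi}$, and this holds no matter how the off-$\Lambda$ values are assigned.

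For \emph{completeness} the fixed-point lemma does the work. Take any $s \in {\cal S}_{\varphi \wedge \psi} = {\cal S}_\varphi \cap {\cal S}_\psi$. Because $s \in {\cal S}_\varphi$ we get ${\cal F}_\varphi(s) = s$, and because $s \in {\cal S}_\psi$ we get ${\cal G}_\psi(s) = s$; hence ${\cal F}_\varphi(s) = {\cal G}_\psi(s)$, so $s \in \Lambda$ and ${\cal H}_{\varphi \wedge \psi}(s) = s \in Im\,{\cal H}_{\varphi \wedge \psi}$. Thus ${\cal S}_{\varphi \wedge \psi} \subseteq Im\,{\cal H}_{\varphi \wedge \psi}$, and together with soundness the image is exactly ${\cal S}_{\varphi \wedge \psi}$. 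The same computation shows $\Lambda = \emptyset$ if and only if ${\cal S}_{\varphi \wedge \psi} = \emptyset$, so whenever a solution exists the agreement set on which the algorithm anchors its values is non-empty.

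The part I expect to be the real obstacle is not the image identity but the \emph{well-definedness and legitimacy of the off-$\Lambda$ extension}. One must show: (i) that the flip-cascade — flip $\alpha_t$, else flip $\alpha_{t-1}$, and so on — always terminates at a point of $\Lambda$, so that the copied value ${\cal H}_{\varphi \wedge \psi}(\alpha^*)$ is defined for every $\alpha \not\in \Lambda$; and (ii) that the resulting map still has the triangular, multi-linear mod-$2$ form demanded by Theorem \ref{thm_caracterization} and its corollaries, so that it is a bona fide descriptor function with the associated length bounds. For (i) I would analyse $\Lambda$ through the membership conditions $f_i(\alpha) = \alpha_i$ for $i \neq t$ together with $f_t(\alpha) = g_t(\alpha_r,\alpha_s,\alpha_t)$, and exploit the special shape of ${\cal G}_\psi$ to argue that a disagreement at $\alpha$ can be repaired by correcting coordinates drawn from the clause support $\{r,s,t\}$, which is exactly the motion the cascade performs. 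Establishing (i) and (ii) rigorously — rather than the image equality, which the fixed-point lemma settles cleanly — is where the genuine technical weight of the theorem lies.
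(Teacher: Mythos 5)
Your fixed-point lemma is correct, and it is genuinely sharper than what the paper writes at the corresponding point: the paper merely asserts that the common lines of $[{\cal S}_\varphi]$ and $[{\cal S}_\psi]$ are ``by definition'' the elements of $Im\,{\cal F}_\varphi \cap Im\,{\cal G}_\psi$, whereas your induction along the recursion $h_i = (\alpha_1+1)f_i + \alpha_1 g_i$ of Theorem \ref{thm_caracterization} actually proves that ${\cal H}_\chi(s)=s$ for every $s\in{\cal S}_\chi$, hence that every common solution lies in $\Lambda$ and is reproduced by the algorithm, and that the algorithm's image is exactly $\{{\cal F}_\varphi(\alpha):\alpha\in\Lambda\}\subseteq{\cal S}_\varphi\cap{\cal S}_\psi$. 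That half of your argument I would accept as written; it is cleaner than the paper's treatment, which instead proceeds by enumerating Situations A, B and C and re-describing the algorithm case by case.

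The genuine gap is the one you name in your last paragraph and then leave open, and it cannot be deferred, because it is the actual content of the theorem rather than a technical afterthought. The off-$\Lambda$ rule inspects only the candidates obtained from $\alpha$ by flipping a \emph{single} coordinate, $\alpha_t$, then $\alpha_{t-1}$, and so on --- at most $n$ points besides $\alpha$ itself. Nothing you prove (and nothing the paper proves) guarantees that one of these single-flip neighbours lies in $\Lambda$ whenever $\Lambda\neq\emptyset$; $\Lambda$ could perfectly well be nonempty yet meet none of the $n+1$ inspected points, in which case ${\cal H}_{\varphi\wedge\psi}(\alpha)$ is simply undefined even though $\varphi\wedge\psi$ is satisfiable. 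The paper's own proof does not close this either: Situation C ends with the bare assertion that if no suitable $\alpha_j$ exists then ``$\varphi\wedge\psi$ has no solution,'' which is precisely the claim requiring proof. The same applies to your point (ii): that the map so defined is again a triangular multi-linear mod-$2$ function of the form demanded by Theorem \ref{thm_caracterization} (so that the construction can be iterated over the $m$ clauses and the length bounds used later actually apply) is nowhere established in your proposal; the paper only addresses it separately, in the computational Theorem \ref{computation_algo}. In short, your soundness/completeness skeleton is correct and improves on the paper's exposition, but it settles the easy half of the statement and explicitly postpones the half on which the theorem stands or falls.
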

\mbox{}\\
 For instance, if  
\begin{flalign}
 [{\cal S}_\varphi]& =  
  \left[ {\tiny \begin{array}{ccccc}
& \alpha_1 & \alpha_2 & \alpha_3  \\
\hline
{\cal F}_\varphi & (\; 0 & 0 & 0 \;) \\
{\cal F}_\varphi & (\; 0 & 0 & 1 \;) \\
{\cal F}_\varphi & (\; 0 & 1 & 0 \;) \\
{\cal F}_\varphi & (\; 0 & 1 & 1 \;) \\
{\cal F}_\varphi & (\; 1 & 0 & 0 \;) \\
{\cal F}_\varphi & (\; 1 & 0 & 1 \;) \\
{\cal F}_\varphi & (\; 1 & 1 & 0 \;) \\
{\cal F}_\varphi & (\; 1 & 1 & 1 \;) \\
\end{array}
}
\right]
  =  \left[ {\tiny \begin{array}{ccc}
x_1 & x_2 & x_3 \\
\hline
\fcolorbox{gris}{gris}{0} & \fcolorbox{gris}{gris}{0} & \fcolorbox{black}{gris}{1} \\
\fcolorbox{gris}{gris}{0} & \fcolorbox{gris}{gris}{0} & \fcolorbox{gris}{gris}{1} \\
0 & 1 & \fbox{1} \\
0 & 1 & 1 \\
1 & 0 & 0 \\
1 & 0 & 1 \\
1 & 1 & \fbox{1} \\
1 & 1 & 1 \\
\end{array}
}
\right]  =  \left[ {\tiny \begin{array}{cccc}
x_1 & x_2 & x_3 \\
\hline
\fcolorbox{gris}{gris}{0} & \fcolorbox{gris}{gris}{0} & \fcolorbox{gris}{gris}{1} \\
0 & 1 & 1 \\
1 & 0 & 0 \\
1 & 0 & 1 \\ 
1 & 1 & 1 \\
\end{array}
}
\right] \nonumber \\[12pt]
\mbox{and } {\cal G}_{\psi} &\equiv \left[ \begin{array}{ccc}
g_1(\alpha_1) &
g_2(\alpha_1, \alpha_2) &
g_3(\alpha_1, \alpha_2, \alpha_3)
\end{array}
\right] \equiv \left[ \begin{array}{ccc}
\alpha_1 &
\alpha_2 &
\alpha_1 \alpha_3 + \alpha_2 \alpha_3 +\alpha_1 \alpha_2 \alpha_3
\end{array}
\right] \label{example}
\end{flalign}
\begin{flalign}
 \Rightarrow \; [{\cal S}_{\psi}]& =
\left[ {\tiny \begin{array}{ccccc}
& \alpha_1 & \alpha_2 & \alpha_3  \\
\hline
{\cal G}_\psi & (\; 0 & 0 & 0 \;) \\
{\cal G}_\psi & (\; 0 & 0 & 1 \;) \\
{\cal G}_\psi & (\; 0 & 1 & 0 \;) \\
{\cal G}_\psi & (\; 0 & 1 & 1 \;) \\
{\cal G}_\psi & (\; 1 & 0 & 0 \;) \\
{\cal G}_\psi & (\; 1 & 0 & 1 \;) \\
{\cal G}_\psi & (\; 1 & 1 & 0 \;) \\
{\cal G}_\psi & (\; 1 & 1 & 1 \;) \\
\end{array}
}
\right]
  =  \left[ {\tiny \begin{array}{ccc}
x_1 & x_2 & x_3 \\
\hline
0 & 0 & 0 \\
0 & 0 & \fbox{0} \\
0 & 1 & 0 \\
0 & 1 & 1 \\
1 & 0 & 0 \\
1 & 0 & 1 \\
1 & 1 & 0 \\
1 & 1 & 1 \\
\end{array}
}
\right]  =  \left[ {\tiny \begin{array}{cccc}
x_1 & x_2 & x_3 \\
\hline
0 & 0 & 0 \\
0 & 1 & 0 \\
0 & 1 & 1 \\ 
1 & 0 & 0 \\ 
1 & 0 & 1 \\ 
1 & 1 & 0 \\
1 & 1 & 1 \\ 
\end{array}
}
\right] \nonumber \\
& \mbox{\em Remark : The forbidden values $(\alpha_r^*,\alpha_s^*,\alpha_t^*)$ for $\psi$ are $\fcolorbox{gris}{gris}{\em (0\;,\;0\;,\;1)}$.} \nonumber
\end{flalign} 
\begin{flalign}
\mbox{then} \;\; [{\cal S}_{\varphi \wedge \psi}]& =  
 \left[ {\tiny \begin{array}{ccccc}
& \alpha_1 & \alpha_2 & \alpha_3  \\
\hline
{\cal H}_{\varphi \wedge \psi} & (\; 0 & 0 & 0 \;) \\
{\cal H}_{\varphi \wedge \psi} & (\; 0 & 0 & 1 \;) \\
{\cal H}_{\varphi \wedge \psi} & (\; 0 & 1 & 0 \;) \\
{\cal H}_{\varphi \wedge \psi} & (\; 0 & 1 & 1 \;) \\
{\cal H}_{\varphi \wedge \psi} & (\; 1 & 0 & 0 \;) \\
{\cal H}_{\varphi \wedge \psi} & (\; 1 & 0 & 1 \;) \\
{\cal H}_{\varphi \wedge \psi} & (\; 1 & 1 & 0 \;) \\
{\cal H}_{\varphi \wedge \psi} & (\; 1 & 1 & 1 \;) \\
\end{array}
}
\right] 
 =  \left[ {\tiny \begin{array}{cccc}
&x_1 & x_2 & x_3 \\
\hline
{\cal F}_{\varphi} (0,1,0) = & 0 & 1 & 1  \\   
{\cal F}_{\varphi} (0,1,0) = & 0 & 1 & 1  \\   
{\cal F}_{\varphi} (0,1,0) = & 0 & 1 & 1  \\   
{\cal F}_{\varphi} (0,1,1) = & 0 & 1 & 1  \\   
{\cal F}_{\varphi} (1,0,0) = & 1 & 0 & 0  \\   
{\cal F}_{\varphi} (1,0,1) = & 1 & 0 & 1  \\   
{\cal F}_{\varphi} (1,1,0) = & 1 & 1 & 1  \\   
{\cal F}_{\varphi} (1,1,1) = & 1 & 1 & 1  \\   
\end{array}
}
\right] 
{\tiny \begin{array}{c}
 \\
 \\
\mbox{\tiny ($1^{st}$ nearest line)}\\
\mbox{\tiny ($2^{nd}$ nearest line)}\\
\\
\\
\\
\\   
\\
\end{array}
} \nonumber \\
& =  \left[ {\tiny \begin{array}{cccc}
x_1 & x_2 & x_3 \\
\hline
0 & 1 & 1 \\
1 & 0 & 0 \\
1 & 0 & 1 \\
1 & 1 & 1 \\
\end{array}
}
\right] \nonumber
\end{flalign}
\mbox{}\\
\begin{proof} \\
The merging of $[{\cal S}_{\varphi}]$ and $[{\cal S}_{\psi}]$
should correspond to the intersection of the sets of solutions
$[{\cal S}_{\varphi}] \cap [{\cal S}_{\psi}]$.  In terms of ${\cal
  S}_\varphi$-matrices, this means that
only the lines common to  $[{\cal S}_{\varphi}]$ and $[{\cal
  S}_{\psi}]$ should be retained in $[{\cal S}_{\varphi \wedge
  \psi}]$.  As these lines are, by definition of the descriptor
function, the elements of
$Im \; {\cal F}_{\varphi}$ and $Im \; {\cal G}_{\psi}$, we have to
get $Im \; {\cal
H}_{\varphi \wedge \psi} = Im \; {\cal F}_{\varphi} \; \cap \; Im
\; {\cal G}_{\psi}$.  
\mbox{} \\
Let $\psi = [\neg] x_r
\vee [\neg] x_s \vee [\neg] x_t$ where $1 \leq r < s < t \leq
n$ be a 3-CNF clause and $(\alpha_r^*,\alpha_s^*,\alpha_t^*)$ be the
unique triplet of non satisfying
values for $x_r,\; x_s$ and $x_t$.   The clause $\psi$ puts a sole
constraint over $\varphi$, in the sense that we have to discard
 the lines of
$[{\cal S}_\varphi]$ including the forbidden values $(\alpha_r^*,\alpha_s^*,\alpha_t^*)$.  Three situations can occur. \\
\mbox{}\\
\noindent $\bullet$ {\bf Situation A} : All lines of  $[{\cal S}_{\varphi}]$  can be
kept to build $[{\cal S}_{\varphi \wedge \psi}]$, as
none of them includes the forbidden values  
$(\alpha_r^*,\alpha_s^*,\alpha_t^*)$. So, $\psi$ does not introduce any new constraint with respect to $\varphi$ and 
$\;[{\cal S}_{\varphi \wedge \psi}] := [{\cal S}_\varphi]\;$ or $\;{\cal H}_{\varphi \wedge \psi} := {\cal F}_\varphi \;\;$.\\
\mbox{}\\
Let us look
at situations where some lines of $[{\cal S}_{\varphi}]$ includes the forbidden values 
$(\alpha_r^*,\alpha_s^*,\alpha_t^*)$. We define the {\em ``nearest''}
line in $\{0,1\}^n$ as the line having all the same $\alpha_i$, except
for $\alpha_t$ where we have $\alpha_t +1$. Let us look at a line containing the forbidden values $(\alpha_r^*,\alpha_s^*,\alpha_t^*)$. Two situations can occur :  \\
\mbox{}\\
\noindent $\bullet$ {\bf Situation B} : The image by ${\cal F}_\varphi$ of its nearest line 
{\em does not include} the forbidden va\-lues
$(\alpha_r^*,\alpha_s^*,\alpha_t^*)$.  All the $\alpha_i \; (i \neq
t)$ are the same as in the original line, except $\alpha_t$ that becomes $\alpha_t+1$. The algorithm
give us the solution : ${\cal H}_{\varphi \wedge \psi}(\alpha_1, \cdots,\alpha_t,\cdots, \alpha_n) := {\cal F}_{\psi} (\alpha_1,\cdots,\alpha_t+1,\cdots,\alpha_n)$  \\[12pt]
\noindent $\bullet$ {\bf Situation C} : The image by ${\cal F}_\varphi$ of the nearest line 
{\em does include} the forbidden values
$(\alpha_r^*,\alpha_s^*,\alpha_t^*)$.   This corresponds to the above example where the {\em first nearest line} of the line in grey [${\cal F}_\varphi (0,0,0) = (0,0,1)$] also includes the forbidden values [${\cal F}_\varphi (0,0,1) = (0,0,1)$].  We need to find a {\em second nearest line}, defined as the line having the same $\alpha_i$ as the original line, except for
$\alpha_{t-1}$ where it is the opposite. In our example, this is the third line, where ${\cal F}_\varphi (0,1,0) = (0,1,0)$, a line corresponding to a solution.  
{\bf \em But} it might be necessary to
look at successive nearest lines before finding a line without the forbidden values and thus a solution
for ${\cal H}_{\varphi \wedge
  \psi}$. {\em Otherwise, the algorithm stops and the 3-CNF formula $\varphi \wedge \psi$  is
without solution.} \\[12pt]
\noindent  The three situations can be summarized.  The solution for ${\cal H}_{\varphi \wedge
  \psi}$ will correspond to the following algorithm :
\begin{eqnarray*}
{\cal H}_{\varphi \wedge \psi}(\cdot) &:=& \left\{ \begin{array}{l}
{\cal F}_{\varphi}(\alpha_1,\cdots,\alpha_t, \cdots, \alpha_n) \;\;\; \mbox{\hspace{4.5cm} (situation A)}\\
 \mbox{\hspace{0.5cm}  when } (f_r(\cdot),f_s(\cdot),f_t(\cdot)) \neq
(\alpha_r^*,\alpha_s^*,\alpha_t^*) \\
 \mbox{} \stackrel{\mbox{\tiny \em see (\ref{example})}}{\Leftrightarrow}\; 
g_t(f_r(\alpha_1,\cdots,\alpha_r),f_s(\alpha_1,\cdots,\alpha_s),f_t(\alpha_1,\cdots,\alpha_t)) = f_t(\alpha_1,\cdots,\alpha_t) \\
\\
{\cal F}_{\varphi}(\alpha_1,\cdots,\alpha_t+1, \cdots, \alpha_n) \;\;\; \mbox{\hspace{3.9cm} (situation B)}\\
 \mbox{\hspace{0.5cm}  when } (f_r(\cdot),f_s(\cdot),f_t(\cdots,\alpha_t)) =
(\alpha_r^*,\alpha_s^*,\alpha_t^*) \\
 \mbox{\hspace{0.5cm}  and } (f_r(\cdot),f_s(\cdot),f_t(\cdots,\alpha_t+1)) \neq
(\alpha_r^*,\alpha_s^*,\alpha_t^*) \\
\mbox{} \stackrel{\mbox{\tiny \em see (\ref{example})}}{\Leftrightarrow}\;  g_t(f_r(\alpha_1,\cdots,\alpha_r),f_s(\alpha_1,\cdots,\alpha_s),f_t(\alpha_1,\cdots,\alpha_t)) \neq f_t(\alpha_1,\cdots,\alpha_t) \\
 \mbox{\hspace{1cm} and } g_t(f_r(\cdot),f_s(\cdot),f_t(\alpha_1,\cdots,\alpha_t+1)) = f_t(\alpha_1,\cdots,\alpha_t+1) \\
\\
{\cal F}_{\varphi}(\alpha_1,\cdots,\alpha_j+1,\cdots,\alpha_t, \cdots,\alpha_n) \;\;\; \mbox{\hspace{2.9cm} (situation C)}\\
\mbox{\hspace{0.5cm}  when } (f_r(\cdot),f_s(\cdot),f_t(\cdots,\alpha_t)) =
(\alpha_r^*,\alpha_s^*,\alpha_t^*) \\
 \mbox{\hspace{0.5cm}  and } (f_r(\cdot),f_s(\cdot),f_t(\cdots,\alpha_t+1)) =
(\alpha_r^*,\alpha_s^*,\alpha_t^*) \\
 \mbox{\hspace{0.5cm}  and } (f_r(\cdot),f_s(\cdot),f_t(\cdots,\alpha_j+1,\cdots,\alpha_t)) \neq
(\alpha_r^*,\alpha_s^*,\alpha_t^*) \\
 \mbox{\hspace{1cm} for some $\alpha_{j} (j<t)$ 
   so that situation A or B arises. We take the} \\
\mbox{\hspace{1cm} unique highest such $\alpha_j$. If not existing, $\varphi \wedge \psi$ has  no solution.}
\end{array}
\right.\\
\end{eqnarray*}
\end{proof}
\mbox{}\\
\begin{theorem}{\bf General computation theorem for 3-CNF formula} \label{computation_algo} \\[12pt] 
{\bf The descriptor function for the conjunction of a
  3-CNF formulae $\varphi$ and a 3-CNF clause $\psi =  [\neg] x_r
\vee [\neg] x_s \vee [\neg] x_t \;\;(1 \leq r < s < t \leq n)$ can be
numerically computed as the merging of their
  descriptor functions : $[ {\cal H}_{\varphi \wedge \psi} ] = [
  {\cal F}_{\varphi} ] \wedge [ {\cal G}_{\psi} ]$.}
\end{theorem}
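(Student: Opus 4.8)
The plan is to recognise this statement not as a genuinely new computation but as the synthesis of the lattice structure developed in Section~1 with the algorithm already established in Theorem~\ref{algorithm}. The crucial observation is that logical conjunction of formulae corresponds, at the level of solution sets, to \emph{intersection}, and that this intersection is realised on ${\cal S}_\varphi$-matrices precisely by the meet operator $\wedge$.

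First I would establish the set-theoretic identity. A valuation $(x_1,\cdots,x_n)$ satisfies $\varphi \wedge \psi$ if and only if it satisfies both $\varphi$ and $\psi$, so that ${\cal S}_{\varphi \wedge \psi} = {\cal S}_\varphi \cap {\cal S}_\psi$. Since the meet operation on ${\cal S}_\varphi$-matrices was defined so as to retain exactly the lines common to both operands (a line survives $\overline{C}_{k,l}$ without producing a ``NaN'' iff it belongs to both sets of solutions), this identity translates directly into the matrix equality $[{\cal S}_{\varphi \wedge \psi}] = [{\cal S}_\varphi] \wedge [{\cal S}_\psi]$.

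Next I would pass from matrices to descriptor functions. By Theorem~\ref{thm_caracterization}, the non-empty matrices $[{\cal S}_\varphi]$ and $[{\cal S}_\psi]$ are fully characterised by the descriptor functions ${\cal F}_\varphi$ and ${\cal G}_\psi$, with $Im \; {\cal F}_\varphi = {\cal S}_\varphi$ and $Im \; {\cal G}_\psi = {\cal S}_\psi$; the notation $[{\cal F}_\varphi] \wedge [{\cal G}_\psi]$ is by definition the meet applied to these representatives. Invoking Theorem~\ref{algorithm}, the displayed algorithm produces a descriptor function ${\cal H}_{\varphi \wedge \psi}$ whose image satisfies $Im \; {\cal H}_{\varphi \wedge \psi} = Im \; {\cal F}_\varphi \cap Im \; {\cal G}_\psi = {\cal S}_\varphi \cap {\cal S}_\psi$. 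Combining this with the previous paragraph yields $Im \; {\cal H}_{\varphi \wedge \psi} = {\cal S}_{\varphi \wedge \psi}$, and since the represented matrix is recovered by sweeping the descriptor function over all $(\alpha_1,\cdots,\alpha_n) \in \{0,1\}^n$, we obtain $[{\cal H}_{\varphi \wedge \psi}] = [{\cal F}_\varphi] \wedge [{\cal G}_\psi]$.

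The main obstacle is the interpretation of the equality sign, because Theorem~\ref{thm_caracterization} guarantees the \emph{existence} but not the \emph{uniqueness} of a descriptor function: two functions can differ pointwise yet share the same image. The identity must therefore be read at the level of the matrices they represent, i.e. their images, and not as functional equality between the $h_i$. All of the real work is consequently the verification — already carried out in the case analysis (Situations A, B, and C) of Theorem~\ref{algorithm} — that the algorithm's output has image exactly the intersection ${\cal S}_\varphi \cap {\cal S}_\psi$, including the terminal case in which no admissible ``nearest line'' exists and $\varphi \wedge \psi$ is unsatisfiable; the present theorem merely repackages that verification as a single merge $\wedge$ acting numerically on descriptor functions.
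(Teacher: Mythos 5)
Your argument establishes only the semantic half of the statement --- that whatever object $[{\cal F}_\varphi]\wedge[{\cal G}_\psi]$ denotes, its image ought to be ${\cal S}_\varphi\cap{\cal S}_\psi$ --- and this half is indeed already contained in Theorem~\ref{algorithm}, as you observe. But that is not what Theorem~\ref{computation_algo} is actually asserting. The meet $\wedge$ of Section~1 is defined on ${\cal S}_\varphi$-\emph{matrices} (lists of lines), not on descriptor functions; when you write that ``the notation $[{\cal F}_\varphi]\wedge[{\cal G}_\psi]$ is by definition the meet applied to these representatives,'' you make the theorem circular and empty, because computing that matrix meet requires enumerating the up-to-$2^n$ lines of each matrix, which is exactly what the descriptor formalism is meant to avoid. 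The word ``numerically computed'' in the statement refers to the explicit mod-2 polynomial formula (\ref{merge}), together with the recursive correction (\ref{recursive})--(\ref{g*_j}) and the final substitution (\ref{updating}), which builds each $h_l(\alpha_1,\cdots,\alpha_l)$ \emph{symbolically} from the component functions $f_l(\cdot,0)$, $f_l(\cdot,1)$, $g_l(\cdot,0)$, $g_l(\cdot,1)$ without ever listing the matrices. That formula is the entire content of the theorem, and it is what the subsequent complexity analysis (Theorem~\ref{complexity_1}) is priced against, via $\len(f_t)\cdot\len(g_t)$.

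Correspondingly, the paper's proof is not a repackaging of the set-theoretic argument but a case-by-case Boolean verification that the expression (\ref{merge}) evaluates correctly in each of the four configurations at the index $t$: when $f_t(\cdot,0)=g_t(\cdot,0)$ and $f_t(\cdot,1)=g_t(\cdot,1)$ it collapses to $f_t(\cdot,\alpha_t)$ (Situation A); when exactly one branch agrees it collapses to the constant branch $f_t(\cdot,0)$ or $f_t(\cdot,1)$ (Situation B); and when neither agrees it collapses to $\alpha_t$ while the product $[f_t(\cdot,0)+g_t(\cdot,0)]\cdot[f_t(\cdot,1)+g_t(\cdot,1)]$ becomes the propagated constraint $g_j^*$ on a predecessor variable, with the descending order in $t$ guaranteeing that this back-propagation does not invalidate already-computed $h_l$ (Situation C). None of this algebra appears in your proposal, so the step from ``the merge should realise the intersection'' to ``this particular formula realises the merge'' is missing. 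Your closing remark about non-uniqueness of descriptor functions and reading the equality at the level of images is a legitimate caveat about how the theorem must be interpreted, but it does not substitute for verifying the formula itself.
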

%
%\mbox{}\\
\begin{eqnarray*} \mbox{ More precisely, if } [{\cal F}_\varphi] \equiv \raisebox{5ex}{t} \left[ \begin{array}{c}
f_1(\alpha_1)\\
\vdots\\
f_n(\alpha_1,\cdots,\alpha_n)
\end{array}
\right] \mbox{ and } [{\cal G}_{\psi}] \stackrel{\mbox{\tiny \em see (\ref{def_h})}}{\equiv} \raisebox{9ex}{t} \left[ \begin{array}{c}
g_1(\alpha_1) =\alpha_1 \\
\vdots\\
g_t(\alpha_r,\alpha_s,\alpha_t) \\
\vdots \\
g_n(\alpha_1,\cdots,\alpha_n) = \alpha_n
\end{array}
\right] \\
\end{eqnarray*}
\begin{eqnarray*} \mbox{ Then } [ {\cal H}_{\varphi \wedge \psi} ]  &\equiv& \raisebox{5ex}{t} \left[ \begin{array}{c} 
h_1(\alpha_1)\\
\vdots\\ 
h_n(\alpha_1,\cdots,\alpha_n)
\end{array}
\right] \\
& & \\
&\stackrel{\tiny notation}{=}&
\raisebox{5ex}{t} \left[ \begin{array}{c} 
f_1(\alpha_1) \wedge g_1(\alpha_1)\\
\vdots\\ 
f_n(\alpha_1,\cdots,\alpha_n) \wedge g_n(\alpha_1,\cdots,\alpha_n)
\end{array}
\right] \mbox{\hspace{1cm}}
\end{eqnarray*}
\mbox{}\\
\noindent where, in a loop for $l$ going from $n$ to $1$ : 
\begin{eqnarray}
%& & \hspace{-3.8cm} h_l(\alpha_1,\cdots, \alpha_l) = f_l(\alpha_1,\cdots, %\alpha_l) \;\; \mbox{ if } \; g_l(\alpha_1,\cdots, \alpha_l) =
%\alpha_l \hspace{4.4cm} \mbox{[Situation A]}\nonumber \\
& & \hspace{-3.8cm} \mbox{Let } \beta_i \stackrel{\mbox{\tiny \em notation}}{\equiv} f_i(\alpha_1,\cdots, \alpha_i) \;,  \hspace{6.5cm} \mbox{[Situations A and B]}\nonumber \\
h_l(\alpha_1,\cdots, \alpha_l) 
:= & (\alpha_l +1) &\; \cdot \;  \{ \; [f_l(\alpha_1,\cdots, \alpha_{l-1},0) +  
 g_l(\beta_1,\cdots, \beta_{l-1},0)] \label{merge} \\
& &  \; \; \;  \; \;\cdot \; [ f_l(\alpha_1,\cdots, \alpha_{l-1},1) \cdot  
 g_l(\beta_1,\cdots, \beta_{l-1},1)]  \nonumber \\
& &  \; \; + \; [f_l(\alpha_1,\cdots, \alpha_{l-1},0) \cdot  
 g_l(\beta_1,\cdots, \beta_{l-1},0)] \; \} \nonumber \\
&  \; +\;\;  \; \alpha_l &\; \cdot \;
   \{ \; [f_l(\alpha_1,\cdots, \alpha_{l-1},1) +  
 g_l(\beta_1,\cdots, \beta_{l-1},1)]  \nonumber \\
& &  \; \;\;  \; \;\cdot \; [f_l(\alpha_1,\cdots, \alpha_{l-1},0) +  
 g_l(\beta_1,\cdots, \beta_{l-1},0) ]  \nonumber \\
& &  \; \; \; + \; [f_l(\alpha_1,\cdots, \alpha_{l-1},1) +  
 g_l(\beta_1,\cdots, \beta_{l-1},1)]   \nonumber \\
& &  \;  \; \; \;  \; \cdot \; [f_l(\alpha_1,\cdots, \alpha_{l-1},0) \cdot  
 g_l(\beta_1,\cdots, \beta_{l-1},0)]   \nonumber \\
& & \;  \; \;+   \; [f_l(\alpha_1,\cdots, \alpha_{l-1},1) \cdot  
 g_l(\beta_1,\cdots, \beta_{l-1},1) ] \} \nonumber
\end{eqnarray}
\begin{eqnarray}
\mbox{Moreover if there exists} & \mbox{ a } & j < l \mbox{  ,
   related to the highest $\alpha_j$ (j is thus unique), such
   that :} \; \; \nonumber \\
  g^*_j(\alpha_1,\cdots, \alpha_j)& \equiv  & 
  [f_l(\alpha_1,\cdots, \alpha_{l-1},0) +  
 g_l(\beta_1,\cdots, \beta_{l-1},0)]  \; \cdot \; \hspace{1.5cm} \mbox{[Situation C]}\nonumber \\
& & [ f_l(\alpha_1,\cdots, \alpha_{l-1},1) +  
 g_l(\beta_1,\cdots, \beta_{l-1},1)] \;\; \mbox{\bf  = 1}  \label{recursive} \\
& & \mbox{[(\ref{recursive}) corresponds to situations where ${\cal F}_\varphi(\alpha_1,\cdots,\alpha_{l-1}+1,\alpha_l)$} \nonumber \\
& & \mbox{includes the forbidden values.  We look at the {\em nearest} $l$-uple } \nonumber \\
& & \mbox{not including the forbidden values,  where we put $\alpha_j:=\alpha_j+1$]} \nonumber \\
& & \hspace{-1.4cm} \Rightarrow \;\; \mbox{\bf then replace } f_j(\alpha_1, \cdots, \alpha_j) \; \mbox{ in } \; [{\cal F}_\varphi] \nonumber \\
& & \hspace{-1cm} \; \; \mbox{by this new merging } 
\; \; f_j(\alpha_1, \cdots, \alpha_j) \;\; \wedge
  \;\{\; g^*_j(\alpha_1, \cdots, \alpha_j) + \alpha_j \; \} \label{g*_j} \\
& & \hspace{-1cm} \; \; \mbox{computed by using a recursive call to definition
  (\ref{merge})}. \nonumber 
\end{eqnarray}
\noindent Recursivity will end as soon as there is no longer such
$g^*_j(\alpha_1,\cdots, \alpha_j) = 1$. When $
g^*_j(\alpha_1,\cdots, \alpha_j)$ is no longer a function of
$\alpha_i$, that means that the 3-CNF-SAT $\varphi \wedge \psi$ has no solution.
\begin{flalign}
&\mbox{At the end of the loop, we replace 
$\alpha_i$ by $h_i(\alpha_1, \cdots,
\alpha_i)$ in $h_j(\cdot)$ where $1 \leq i < j \leq
n$.} & \label{updating} 
\end{flalign}
\mbox{} \\
\begin{proof} 
% \\
\noindent This computational formula  gives the same
answer for ${\cal H}_{\varphi \wedge \psi}$ as the algorithm in {\em Theorem \ref{algorithm}}. {\em Remember that $(\alpha_r^*,\alpha_s^*,\alpha_t^*)$ are the forbidden values for $\psi$.}
\mbox{}\\
\noindent More formally, four situations in equation
(\ref{merge}) should be considered for the index $t$ : 
\begin{eqnarray*}
\bullet \; f_t(\alpha_1,\cdots,0)&=& g_t(\beta_r,\beta_s,0) \;\;
\mbox{ and } \; f_t(\alpha_1,\cdots,1) = g_t(\beta_r,\beta_s,1) \hspace{1.5cm} \mbox{[as in situation A]}\\ 
&\Downarrow & \\
h_t(\alpha_1,\cdots,\alpha_{t}) & := & 
(\alpha_t +1) \cdot \;  \{ \; [f_t(\cdot,0) +  g_t(\cdot,0)]  \; \cdot \; [ f_t(\cdot,1) \cdot  g_t(\cdot,1)] \; + \; [f_t(\cdot,0) \cdot  
 g_t(\cdot,0)] \; \}  \; + \\
& & \; \alpha_t \cdot   \{ \; [f_t(\cdot,1) +  
 g_t(\cdot,1)]  \; \cdot \; [f_t(\cdot,0) +  
 g_t(\cdot,0)  ]   \; + \\
& & \; \;\;\;\;\;\;\;\;\;\;\; [f_t(\cdot,1) +  
 g_t(\cdot,1)]  \; \cdot \; [f_t(\cdot,0) \cdot  
 g_t(\cdot,0)]  \; +   \; [f_t(\cdot,1) \cdot  
 g_t(\cdot,1) ] \;  \} \\
& := & (\alpha_t +1) \cdot  f_t(\cdot,0) +
 \alpha_t \cdot  f_t(\cdot,1)   \;\;\; \mbox{[as $f_t()+ g_t()=0$ ;
   $f_t() \cdot g_t() = f_t^2()=f_t()$]} \\
& := & f_t(\alpha_1,\cdots,\alpha_{t}) \\
& & \hspace{-2.5cm} \mbox{[$h_t()$ is thus the merging of the cells $f_t()$ and
  $g_t()$ which are equivalent.]} \\[24pt]
\end{eqnarray*}
\begin{eqnarray*}
\bullet \; f_t(\alpha_1,\cdots,0) \;\; &=& g_t(\beta_r,\beta_s,0)  \;\; \mbox{ but } \;\;
f_t(\alpha_1,\cdots,1) \neq g_t(\beta_r,\beta_s,1)  \hspace{0.9cm} \mbox{[as in situation B]}\\ 
&\Downarrow & \\
h_t(\alpha_1,\cdots,\alpha_t) & := & (\alpha_t +1) \cdot  f_t(\cdot,0) +
 \alpha_t \cdot  f_t(\cdot,0)   \;\;\; \mbox{[as $f_t(\cdot,1)+ g_t(\cdot,1)=1$ and} \\
& & \hspace{5cm} \mbox{   $f_t(\cdot,1) \cdot g_t(\cdot,1) = 0$]} \\
& := & f_t(\alpha_1,\cdots,0)\\
& & \hspace{-2.8cm} \mbox{[$h_t()$ sends $\alpha_t$ to a value where the cells $f_t()$ and
  $g_t()$ are the same  in
  $[{\cal F}_{\varphi}]$ and $[{\cal G}_{\psi}]$]} \\[24pt]
%\end{eqnarray*}
%\begin{eqnarray*}
\bullet \; f_t(\alpha_1,\cdots,1) \;\; &=& g_t(\beta_r,\beta_s,1)  \;\; \mbox{ but } \;\;
f_t(\alpha_1,\cdots,0) \neq g_t(\beta_r,\beta_s,0) \hspace{0.9cm} \mbox{[as in situation B]} \\ 
&\Downarrow & \\
h_t(\alpha_1,\cdots,\alpha_t) & := & (\alpha_t +1) \cdot  f_t(\cdot,1) +
 \alpha_t \cdot  f_t(\cdot,1)   \;\;\; \mbox{[as $f_t(\cdot,0)+
   g_t(\cdot,0)=1$ and } \\
& & \hspace{5cm} \mbox{   $f_t(\cdot,0) \cdot g_t(\cdot,0) = 0$]} \\
& := & f_t(\alpha_1,\cdots,1) \\
& & \hspace{-2.8cm}  \mbox{[$h_t()$ sends $\alpha_t$ to a value where the cells $f_t()$ and
  $g_t()$ are the same  in 
  $[{\cal F}_{\varphi}]$ and $[{\cal G}_{\psi}]$]} \\[24pt]
%\end{eqnarray*}
%\begin{eqnarray*}
%
\bullet \; f_t(\alpha_1,\cdots,1) \;\; & \neq &g_t(\beta_r,\beta_s,1)  \;\; \mbox{ and } \;\;
f_t(\alpha_1,\cdots,0) \neq g_t(\beta_r,\beta_s,0) \hspace{0.9cm} \mbox{[as in situation C]} \\ 
&\Downarrow & \;\;
\mbox{[Impossibility to find a common cell between $\; f_t() \;$ and $\; g_t()
    \;$]} \\
&    \Downarrow & \;\;
\mbox{[No constraint for $h_t(\cdot,\alpha_t)$ but  a induced constraint} \\
& & \;\; \mbox{ over
  some {\em predecessor} $\alpha_j \; (j<t)$]} \\
h_t(\alpha_1,\cdots,\alpha_t) & := &  \alpha_t    \;\;\; \mbox{[as $f_t()+ g_t()=1$ and
   $f_t() \cdot g_t() = 0$]} \\
&\mbox{\bf but} & \;\; 
[f_t(\cdot,0) +  
 g_t(\cdot,0)]  \; \cdot \;   [f_t(\cdot,1) +  g_t(\cdot,1)] = \mbox{function}(\alpha_1,\cdots, \alpha_j) =1 \\
\mbox{\bf and} \;\;\;
  &g_j(\alpha_1,&\cdots, \alpha_j) \leftarrow   
  [f_t(\cdot,0) +  
 g_t(\cdot,0)]  \; \cdot \;   [f_t(\cdot,1) +  
 g_t(\cdot,1)] +  g_j(\alpha_1,\cdots, \alpha_j) \\
& &  \hspace{-1.5cm} \mbox{[New additional constraint over
  $g_j(\cdot,\alpha_j) \; ( j < t)$ as $g_j(\cdot, \alpha_j) \leftarrow   g_j(\cdot, \alpha_j) +1.$} \\
& & \hspace{-1.5cm} \mbox{{\bf A descending order with respect of $t$} for the
  computations ensures us  } \\
& & \hspace{-1.5cm} \mbox{that the new additional constraint over
  $g_j(\cdot,\alpha_j)$ has no repercussion over the } \\
& & \hspace{-1.5cm} \mbox{already-computed function  $h_l(\cdot,\alpha_l) \;[l\ge t]$, as $\alpha_l$ is not
  involved in $g_j(\cdot,\alpha_j)$.]}
\end{eqnarray*}
\end{proof}
\noindent {\bf Note :} The code for this merging operation is available at
{\em  https://github.com/3cnf/} in the {\em descriptor-solver} directory. \\
\subsection {Examples of computation} 
$1) \;\;${\it Example of a simple merging of two clauses} \\
\noindent Let $\;  \varphi  = (x_1 \vee x_2 \vee \neg x_3) \;\; \mbox{
  and } \;\; \psi  =  (\neg x_2 \vee  x_3
\vee \neg x_4) $ \hspace{12pt}  (see \ref{example2}) \\
\begin{flalign*}
 [{\cal F}_\varphi]& = 
\raisebox{7ex}{t}  
\left[ \begin{array}{c}
\alpha_1 \\
 \alpha_2 \\
 (\alpha_1+1)(\alpha_2+1)\alpha_3+\alpha_3 \\
\alpha_4
\end{array}
\right] = \raisebox{7ex}{t} \left[ \begin{array}{c}
\beta_1 \\
\beta_2 \\
\beta_3 \\
\beta_4
\end{array}
\right] 
\hspace{12pt} [{\cal G}_{\psi}] =  \raisebox{7ex}{t} \left[ \begin{array}{c}
g_1(\cdot) \\ 
g_2(\cdot) \\ 
g_3(\cdot) \\ 
g_4(\cdot)  
\end{array}
\right] =  \raisebox{7ex}{t} \left[ \begin{array}{c}
\alpha_1 \\ 
\alpha_2 \\ 
\alpha_3 \\
\alpha_2 (\alpha_3+1) \alpha_4 +\alpha_4
\end{array}
\right]  \\
\\
& \Rightarrow [{\cal H}_{\varphi \wedge \psi}] =  \raisebox{7ex}{t} \left[ \begin{array}{c}
h_1(\cdot) \\ 
h_2(\cdot) \\ 
h_3(\cdot) \\ 
h_4(\cdot)  
\end{array}
\right] = \raisebox{7ex}{t} \left[ \begin{array}{c}
\alpha_1 \\ 
\alpha_2 \\ 
 (\alpha_1+1)(\alpha_2+1)\alpha_3+\alpha_3 \\
\alpha_2 (\alpha_3+1) \alpha_4 +\alpha_4
\end{array}
\right] 
\end{flalign*}
\mbox{}\\ 
\mbox{}\\ 
\noindent {\bf Computations for} $h_t(\cdot)$ [Descending order for $t$] :  \\
$\bullet \; \; \underline{t=4} \;\;: \;\;
g_4(\beta_i)= \alpha_2 \cdot \{ [
  (\alpha_1+1)(\alpha_2+1)\alpha_3+\alpha_3]+1\} \cdot \alpha_4
+\alpha_4 = \alpha_2 \alpha_3 \alpha_4 + \alpha_2 \alpha_4 +\alpha_4$\\
$h_4(\cdot)  \stackrel{(\ref{merge})}{=}
(\alpha_4+1) \{[0+0] \cdot [1 \cdot (\alpha_2 \alpha_3 + \alpha_2 + 1)] +
[0 \cdot 0] \} 
+ \alpha_4 \{[1 + \alpha_2 \alpha_3 + \alpha_2 + 1] \cdot [0+0] +
[1 + \alpha_2 \alpha_3 + \alpha_2 + 1] \cdot [0 \cdot 0]
+ [1 \cdot (\alpha_2 \alpha_3 + \alpha_2 + 1)]\} = \alpha_2 \alpha_3
\alpha_4 + \alpha_2 \alpha_4 +\alpha_4 = g_4(\cdot) $ \\
Indeed, the merging of $f_4(\cdot)$ and $g_4(\cdot)$ should give
$g_4(\cdot)$ as $f_4(\cdot)$ puts no constraint over $x_4$.\\[12pt]
%\mbox{}\\
$\bullet \;\;  \underline{t=3} \;\;: \;\; h_3(\alpha_1,\alpha_2,\alpha_3) =
f_3(\alpha_1,\alpha_2,\alpha_3) $ as $g_3(\beta_i)=\beta_3=f_3(\cdot)$\\
Here, $g_3(\cdot)$ puts no constraint over $x_3$.  Idem for
$h_2(\cdot)$ and $h_1(\cdot)$.\\

$2) \;\; $ {\it Example of a uniformly distributed 3-CNF-SAT problem :} 
\begin{eqnarray}
\mbox{Let }\;\; \varphi := \bigwedge_{i=1}^{8} \psi_i = \bigwedge \left\{ \begin{array}{l}
x_1 \vee \neg x_2 \vee \neg x_3 \\
 x_1 \vee  x_2 \vee \neg x_3 \\
\neg x_1 \vee \neg x_2 \vee \neg x_3 \\
\neg x_1 \vee  x_2 \vee \neg x_3 \\
x_1 \vee \neg x_2 \vee  x_3 \\
x_1 \vee x_2 \vee x_3 \\
\neg x_1 \vee \neg x_2 \vee x_3 \\
\neg x_1 \vee x_2 \vee x_3 
\end{array}
\right. \label{8-clauses}
\end{eqnarray}
We
have here  :
\begin{eqnarray*}
\begin{array}{llllccc}
\mbox{Step \hspace{1cm} } &h_1(\cdot)&h_2(\cdot)&h_3(\cdot)  & \# \{
{\cal S}_\varphi \} \\
\hline \hline
\varphi = \psi_1& \alpha_1 & \alpha_2 &
(\alpha_1 + 1) \alpha_2 \alpha_3 +\alpha_3  &7 \\
\varphi = \psi_1 \wedge \psi_2 &\alpha_1 &  \alpha_2 &
\alpha_1 \alpha_3 & 6\\
\varphi = \wedge_{i=1}^{3} \psi_i & \alpha_1 & \alpha_2
& \alpha_1 \alpha_2 \alpha_3 + \alpha_1 \alpha_3 & 5\\ 
\varphi = \wedge_{i=1}^{4} \psi_i  & \alpha_1 & \alpha_2 &
 0 &  4 \\
\varphi = \wedge_{i=1}^{5} \psi_i  & \alpha_1 & \alpha_1 \alpha_2 & 0
& 3 \\
\varphi = \wedge_{i=1}^{6} \psi_i  & 1 &  \alpha_2 & 0 &  2 \\
\varphi = \wedge_{i=1}^{7} \psi_i  & 1 & 0 & 0 &  1 \\
\varphi = \wedge_{i=1}^{8} \psi_i  & \nexists & \nexists & \nexists  &
0 \\
\end{array} 
\\[12pt]
\end{eqnarray*}
\begin{property}Let $A, B \subseteq \{\alpha_1,\cdots,\alpha_n\}.$  If $f_t(\cdot) \in \combi(A)$ and $g_t(\cdot) \in \combi(B)$, then
\begin{eqnarray}
& &f_t(\cdot) \wedge g_t(\cdot) \in \combi(A \cup
B) \label{combi_wedge} \\
&\mbox{and}& g_j^*(\cdot) \in \combi([A \cup B] \setminus
\{\alpha_{j+1}, \cdots, \alpha_n\}) \label{combi_g} \\ \nonumber
\end{eqnarray}
\end{property}
\begin{proof}
This is straightforward from the definition of $f_t(\cdot) \wedge
g_t(\cdot)$ in (\ref{merge}) and $g_{j}^*(\cdot)$ in (\ref{recursive}).
\end{proof}
\section{Complexity analysis for computing ${\cal H}_\varphi$, the
    descriptor function}
\subsection{Some definitions} 
\begin{definition}{\bf Sorted clauses (to ensure the descendant order of $t$)} \label{sorted} \\[12pt]
\noindent Let  $\; \varphi = \bigwedge_{k=1}^m \psi_k \;$ be a
  3-CNF formula.  We suppose, without any loss of generality, that
 these $m$ 3-CNF clauses are {\bf sorted}, in the following way :
\begin{eqnarray}
\varphi &=& \bigwedge_{k=1}^m \psi_k \nonumber \\
\mbox{ where  } &\;& \psi_k = [\neg] x_{r_k} \vee [\neg] x_{s_k} \vee [\neg] x_{t_k}
\;\; (1 \leq r_k < s_k < t_k \leq n) \nonumber \\
\mbox{ and } & \;&
\left\{ \begin{array}{rr}
 & t_k < t_{k'} \\
\mbox{or} \; \;\; t_k = t_{k'} &\mbox{ where } \psi_k = [\neg] x_{r_k} \vee [\neg] x_{s_k}
          \vee \neg \; x_{t_k} \\
&\mbox{ while } \psi_{k'} = [\neg]
          x_{r_{k'}} \vee [\neg] x_{s_{k'}} \vee  x_{t_{k'}} 
\end{array} \right\}   \; \; \Rightarrow \;\;  k < k' 
 \label{sorted-clauses}
\end{eqnarray}
\end{definition}
\mbox{}\\
\begin{definition}{\bf Set of predecessors $P(t)$} \\[12pt]
\noindent Let us compute $f_t(\cdot) \wedge g_t(\cdot)$.  We define
the indice $j$ of
the highest $\alpha_j$ found in the recursive call (\ref{recursive}) 
as the {\em predecessor} of $t$, $t$ being the successor of $j$.  In the same way, $x_j$ and $\alpha_j$ are
called the {\em predecessors} of $x_t$ and $\alpha_t$.  We denote them by :
\begin{eqnarray*}
j &\stackrel{def}{=}&  \pred (t) \; \mbox{ and } \; t
\stackrel{def}{=}  \succ(j)\\
x_j &\stackrel{def}{=}&  \pred (x_t) \; \mbox{ and } \;
\alpha_j \stackrel{def}{=}  \pred (\alpha_t) \\
j' &=& \pred^{\;2}(t) \; \mbox{ iff } \; j' =
\pred (\pred (t) ) \; \mbox{ and so on.} 
\end{eqnarray*}
\noindent Let us define $P(t)$ as {\em the set of the  predecessors} of $t$ :
\begin{eqnarray}
P(t) = \{ j < t \;|\; \exists k \; :\; j=\pred^{\;k}(t) \}  \label{P_t}
\end{eqnarray}
\end{definition}
\mbox{}\\
\begin{definition}{\bf Set of connected variables $V(x_t)$ and $W(x_t)$} \\[12pt]
\noindent We define for all
 $t$ in $\{3,\cdots,n\}$ : 
\begin{eqnarray} 
Cl(x_t)  &=& \{ \psi_k \; \mbox{ where the variable $x_t$ appears 
with the highest indice} \}  \label{Cl_t} \\
V(x_t)  &=& \{\alpha_i  \; (i \leq t)  \; | \; x_i \mbox{ appears in some } \psi_k  \in Cl(x_t) \} \label{V_t}
\\
V_{|\alpha_i}(x_t) &=& V(x_t) \setminus
\{\alpha_{i+1},\cdots\}  \label{V_i_t} \\
W^*(x_t) &=&  \left[ \bigcup_{u > t \; : \;  
  t \in P(u)} V(x_u) \right] \; \cup V(x_t)  \label{W*_t} \\
 W(x_t) &=& W^*(x_t) \setminus \{\alpha_{t+1}, \cdots,
\alpha_n \} \;\; [ \mbox{ with } W(x_n) \stackrel{def}{=} V(x_n) \; ] \label{W_t} \\
\nonumber
\end{eqnarray}
%\mbox{}\\
\noindent $V(x_t)$ is the set of $\alpha_i$ corresponding to the
variables connected to $x_t$ by at least one clause $\psi_k$ where $x_t$ is
  the highest indexed variable.  $W(x_t)$ is the union of $\alpha_i $
  connected to the successors of $x_t$, excluding $\alpha_i$
  with indice $i$ higher than $t$. 
\end{definition}
\mbox{}\\
\begin{definition}{\bf Sub-problem $\varphi^{(L)}$} \\[12pt]
\noindent We define the {\em sub-problem $\varphi^{(L)} $ associated to the subset $L \subseteq
  \{1, \cdots,m\} \;$ } by :
\begin{eqnarray}
\varphi^{(L)} = \bigwedge_{k\in L} \psi_k  \label{sub-problem}
\end{eqnarray}
 and $Cl^{(L)}(x_t)$, $V^{(L)}(x_t)$ and $W^{(L)}(x_t)$
 being the respective sets  $Cl(x_t), V(x_t)$ and $W(x_t)$
 for $\varphi^{(L)}$. In the same way, we define $P^{(L)}(t)$ as the
 set of the predecessors of $t$ for $\varphi^{(L)}$. See
 (\ref{P_t}). \\[12pt]
\noindent Finally, we define :
\[{\cal H}_{\varphi^{(L)}} = \left[ \begin{array}{c}
h_1^{(L)}(\cdot)\\
\vdots\\
h_n^{(L)}(\cdot) \\
\end{array}
\right] \]
\mbox{}\\
\end{definition}
\subsection{Complexity theorem for computing ${\cal H}_{\varphi}$} 
\begin{theorem} {\bf 3-CNF complexity theorem for ${\cal H}_{\varphi}$} \label{complexity_1}\\
\noindent The complexity of the descriptor
    approach for a 3-CNF problem $\varphi$ with $m$ clauses and $n$
    propositional variables is 
\[{\cal
  O}( m \; n^2 \; \max_{1 \leq t \leq n} \; \max_{1 \leq l \leq m} \len(h_t^{(\{l,\cdots,m\})}) \; ) \]
See (\ref{len-def}) for the definition of $\len(h_t)$.\\
\end{theorem}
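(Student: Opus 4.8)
The plan is to read the running time directly off the constructive procedure of Theorem~\ref{computation_algo}, so I would first reduce the global bound to a bound on a single merge and then multiply by the number of clauses. Writing the sorted clauses (Definition~\ref{sorted}) as $\psi_1,\dots,\psi_m$, the descriptor is obtained as the iterated meet $\mathcal{H}_{\varphi}=(\cdots((\mathcal{G}_{\psi_1}\wedge\mathcal{G}_{\psi_2})\wedge\mathcal{G}_{\psi_3})\cdots)\wedge\mathcal{G}_{\psi_m}$, i.e.\ through $O(m)$ applications of the merge formula (\ref{merge}). Hence it suffices to show that one merge $[\mathcal{F}_\varphi]\wedge[\mathcal{G}_\psi]$ costs $O\big(n^2\,\max_{t}\max_{l}\len(h_t^{(\{l,\cdots,m\})})\big)$, where the inner descriptors range over the intermediate sub-problems $\varphi^{(\{l,\cdots,m\})}$.

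For a single merge I would bound the outer loop ``for $l$ from $n$ to $1$''. This loop runs $O(n)$ times, and in iteration $l$ the rule (\ref{merge}) assembles $h_l$ from a fixed, constant number of mod-$2$ sums and products of the four cells $f_l(\cdot,0),f_l(\cdot,1),g_l(\cdot,0),g_l(\cdot,1)$. By the closure statements (\ref{combi_wedge})--(\ref{combi_g}), each of these cells is a multi-linear combination in $\combi(\cdot)$ whose number of terms is at most the $\len$ of the corresponding sub-problem descriptor $h_l^{(\{l,\cdots,m\})}$; charging each elementary symbolic operation $O(\len)$, iteration $l$ by itself costs $O(\len)$.

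Next I would account for the recursive branch (Situation C). Whenever the test (\ref{recursive}) returns $g_j^*(\cdot)=1$, the procedure modifies the predecessor cell $f_j$ via (\ref{g*_j}) and re-enters (\ref{merge}). By the definition of the predecessor relation and of $P(t)$ in (\ref{P_t}), each recursive call strictly decreases the index of the variable being corrected, so the chain $t,\pred(t),\pred^2(t),\dots$ has length at most $n$; the recursion therefore terminates after $O(n)$ re-entries, each again costing $O(\len)$. Together with the final substitution pass (\ref{updating}), which replaces $\alpha_i$ by $h_i$ for the $O(n)$ predecessors, a single merge costs the product $O(n)\cdot O(n)\cdot O(\len)=O(n^2\,\len)$. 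Multiplying by the $O(m)$ merges and taking the worst case over every index $t$ and every intermediate sub-problem yields the claimed $O\big(m\,n^2\,\max_{t}\max_{l}\len(h_t^{(\{l,\cdots,m\})})\big)$.

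The hard part will be the middle step: justifying that each symbolic mod-$2$ multi-linear operation may legitimately be charged only $O(\len)$ rather than $O(\len^2)$, and, more importantly, controlling the number of terms of the intermediate descriptors so that the length factor genuinely coincides with $\max_{t}\max_{l}\len(h_t^{(\{l,\cdots,m\})})$. This is precisely where any exponential blow-up would hide, which is why the bound is stated with $\len$ left as an explicit parameter rather than a constant. The companion obstacle is to verify, using the chain structure of $P(t)$, that a Situation-C recursion launched while processing clause $l$ never revisits an already-finalised cell $h_{l'}$ with $l'>l$; the descending order imposed in (\ref{sorted-clauses}) is exactly what guarantees this, so that the three multiplicative factors $m$, $n$ and $n$ do not interact to produce additional cost.
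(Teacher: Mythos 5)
Your decomposition --- $O(m)$ merges, each consisting of an $O(n)$ loop over the indices $l$ with at most $O(n)$ Situation-C recursive calls per index, each call charged the length of the relevant intermediate descriptor --- is exactly the accounting the paper's own proof uses, including the identification of the per-call cost with $\len(h_t^{(\{l,\cdots,m\})})$ via the observation that the cost of one application of (\ref{merge}) and the length of its output are of the same order. The caveat you flag (that charging each symbolic operation $O(\len)$ rather than something larger is the load-bearing unproven step, which is why $\len$ is left as an explicit parameter in the statement) is likewise where the paper leaves matters, so your proposal matches the paper's argument both in structure and in what it does and does not establish.
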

\begin{proof}
\mbox{}\\
Let us compute the complexity of $f_t(\cdot) \wedge g_t(\cdot)$ in
(\ref{merge}) for the most general case.  First of all, one has to compute the four functions in square
brackets :
$[f_t(\cdot,0)+g_t(\cdot,0)]$, $[f_t(\cdot,1)+g_t(\cdot,1)]$, $[f_t(\cdot,0) \cdot
g_t(\cdot,0)]$ and $[f_t(\cdot,1) \cdot g_t(\cdot,1)$].  We have :
\begin{eqnarray*}
\len(f_t(\cdot,0)) \leq \len(f_t(\cdot,\alpha_t)) \;\; &\mbox{and}& \;\;
\len(f_t(\cdot,1)) \leq \len(f_t(\cdot,\alpha_t)) \; [\equiv \len(f_t)] \\
\len(g_t(\cdot,0)) \leq \len(g_t(\cdot,\alpha_t)) \;\; &\mbox{and}& \;\; 
\len(g_t(\cdot,1)) \leq \len(g_t(\cdot,\alpha_t)) \; [\equiv \len(g_t)] \\
\len(f_t + g_t)  \leq \len(f_t) +
\len(g_t) &\leq& 
\len(f_t) \cdot \len(g_t) \;\;\;\; \mbox{when} \;\;
\len(f_t) > 2 \mbox{ and } \len(g_t) > 2
\end{eqnarray*} 
The complexity for the four functions is then ${\cal O}(
\len(f_t) \cdot \len(g_t ))$. \\
%\mbox{}\\
\noindent The complexity for computing $h_t(\cdot)$ in (\ref{merge})
is :
\begin{eqnarray}
{\cal O}(&3 \; \cdot& [( \len(f_t) \cdot \len(g_t) )^2 + ( \len(f_t) \cdot
\len(g_t) ) ] + 2 \cdot [ 2 ( \len(f_t) \cdot \len(g_t) )^2 + ( \len(f_t) \cdot
\len(g_t) )] ) \nonumber \\
&=&  {\cal O}( 7 \cdot ( \len(f_t) \cdot \len(g_t) )^2 + 5 \cdot ( \len(f_t) \cdot
\len(g_t) ) ) \nonumber \\
&=& {\cal O}( [\len(f_t) \cdot \len(g_t)]^2 ) \;\;\; \mbox{ for large }
\len(f_t) \cdot \len(g_t)   \label{complexity}
\end{eqnarray}
\noindent Note : it needs three
runs over the formula in the brackets to do the product with
$(\alpha_t +1)$ :
one to compute the formula, one to multiply it by $\alpha_t$ and one to
add both results. Similarly, it takes two runs to compute the product with $\alpha_t$.\\[12pt]
\noindent Using the same argumentation, we have :
\begin{eqnarray}
\len(h_t) &=& {\cal O}( [\len(f_t) \cdot \len(g_t)]^2 ) \;\;\; \mbox{ for large }
\len(f_t) \cdot \len(g_t)   \label{len_h} \\
& &\hspace{-3cm} \mbox{ and  for the recursive call with } j < t  \;\;\; \mbox{[see
  (\ref{recursive})]} \nonumber \\
 \len(g_j^*) &=& {\cal O}( [\len(f_t) \cdot \len(g_t)]^2 ) \;\;\; \mbox{ for large }
\len(f_t) \cdot \len(g_t)   \label{len_g}
\end{eqnarray}
\noindent To solve the 3-CNF problem, one should compute all $n$
functional descriptors $h_t^{(\{l,\cdots,m\})}(\cdot)$ at each step of integration of
the $m$ clauses, i.e. for each $\varphi^{(\{l,\cdots,m\})}$ with $l$
decreasing from $m$ to $1$. Each  $h_t^{(\{l,\cdots,m\})}(\cdot)$
could yield to at most $n$
recursive calls with similar complexity.  So, using the equivalence between (\ref{len_h}) and (\ref{complexity}), the overall complexity of the functional
approach to 3-CNF problem will be of order 
{$\displaystyle {\cal
  O}( m \; n^2 \;\max_{1 \leq t \leq n} \; \max_{1 \leq l \leq m} \len(h^{(\{l,\cdots,m\})}_t) \; )$}
\end{proof}
\subsection{Cluster effect}
\noindent From numerical tests, we see that {\em a cluster effect} appears in the middle of the
algorithm.
%, as $W^{(n/2+1)}(x_{n/2}) \rightarrow \{\alpha_1, \cdots, \alpha_{n/2}\}$
This is understandable as the variables with smaller
indices are subject to more and more constraints coming from the first treated
clauses. When
$W^{(n-t+1)}(x_t) = \{\alpha_1, \cdots, \alpha_{t}\}$, we see a linear
decrease of $\# W^{(n-t+1)}(x_t)$ as $t$ decreases from $\sim \frac{n}{3}$ to $1$. This {\em
  cluster effect} is at the hart of {\em hard} 3-CNF-SAT problems. 
\mbox{}\\[12pt]
\noindent See numerical results
in {\em Figure 1} for the
3-CNF-SAT problem {\em uuf50-02.cnf} taken from {\it
  http://www.cs.ubc.ca/$\sim$hoos/SATLIB/benchm.html}.  Both graphs shows the complexity (in log scale and in normal scale) as the descriptor algorithm goes from the first clause to the last one (in the sorted 3-CNF-SAT problem).  \\
\begin{figure}[htb]
\centering
\includegraphics[height=5cm]{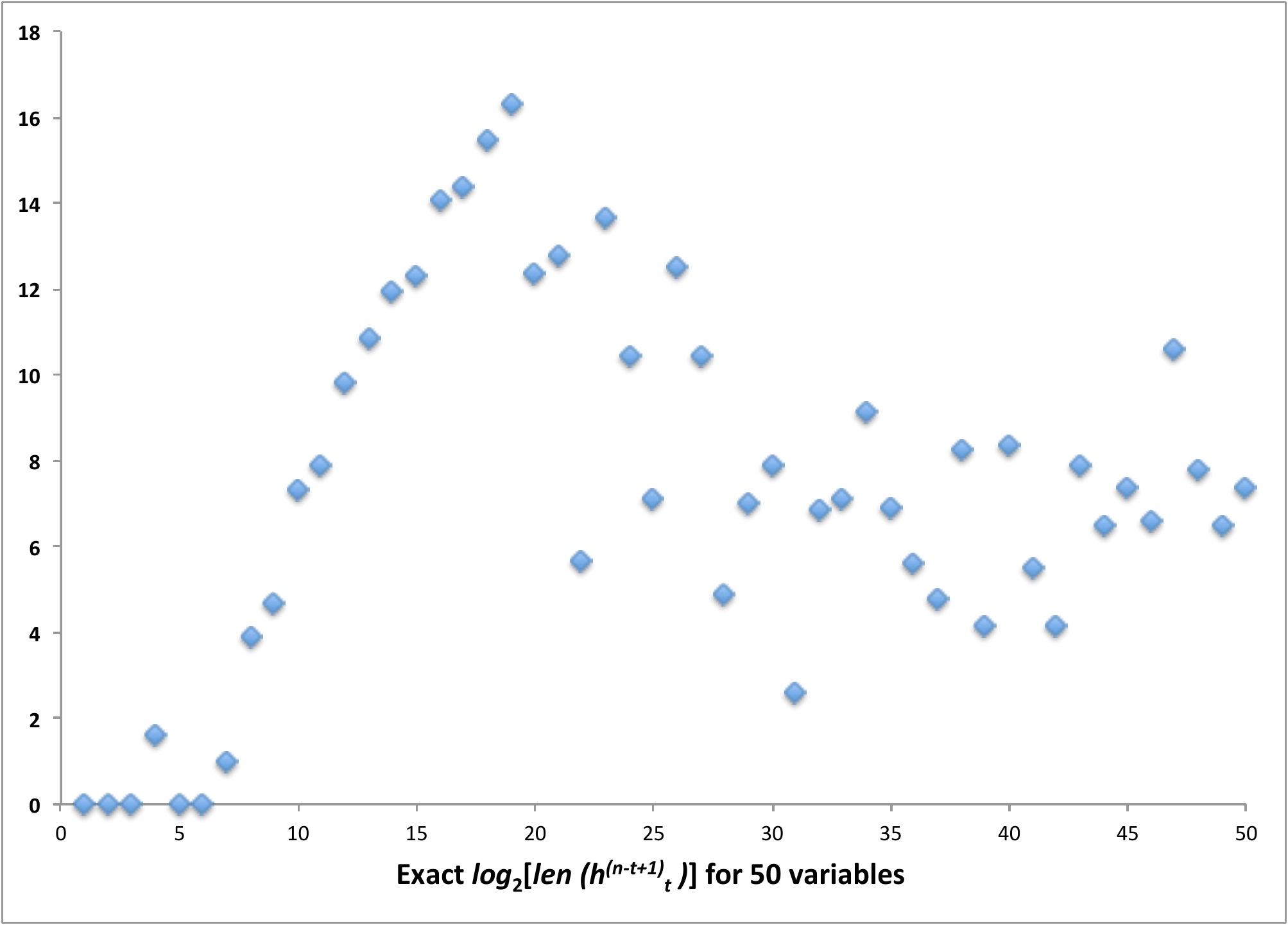}
\hspace{18pt} \includegraphics[height=5cm]{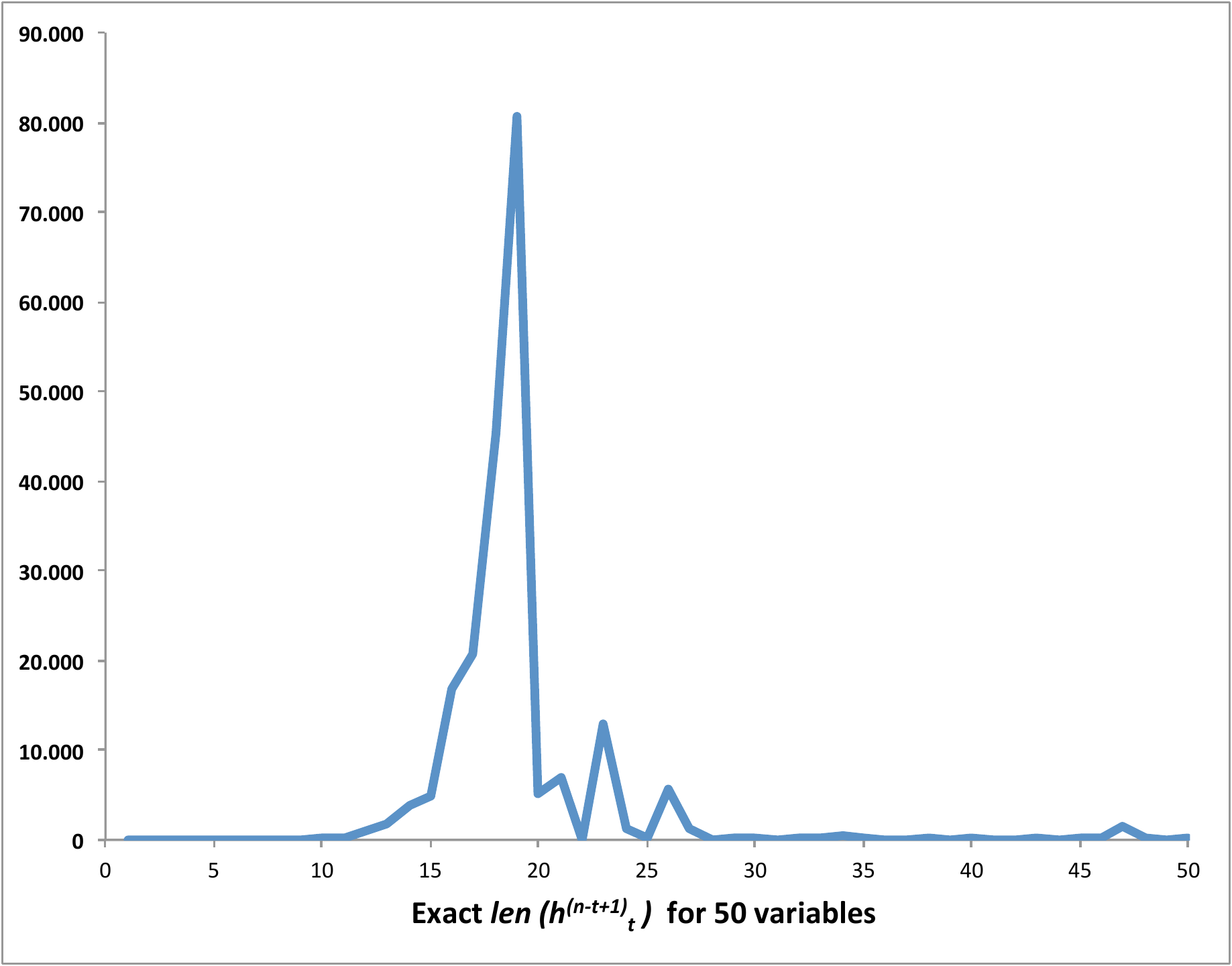} \\
{\small Figure 1 : $\log_2 \len(h_t(\cdot))$ and $\len(h_t(\cdot))$ for 3-CNF-SAT problem uuf50-02.cnf}
\end{figure}
%
%\mbox{}\\[12pt]
\section{Complexity theorems for listing the solutions, given  ${\cal H}_{\varphi}$} 
%\mbox{}\\
\begin{theorem} {The complexity for {\bf \em listing} the solutions of a ``hard''
    3-CNF-SAT $\varphi$, given ${\cal
  H}_{\varphi}$, is {\bf \em polynomial}.} \label{complexity_3} \\[12pt]
\noindent We consider a ``hard'' 3-CNF-SAT problem $\varphi$ with $n$ propositional
variables. We suppose ${\cal
  H}_{\varphi}$ is computed and available.  Let  $\Sigma_\varphi = \#
\; {\cal S}_\varphi \; = 2^{{\cal O}(1)} \;$
 be the number of solutions for $\varphi$. Then the complexity needed
 to list all $\Sigma_\varphi$ solutions from ${\cal H}_{\varphi}$ is
 ${\cal O}(2 \; n \; \Sigma_\varphi) = n \; 2^{{\cal O}(1)} = {\cal O}(n)$.\\
\end{theorem}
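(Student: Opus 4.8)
The plan is to read ${\cal H}_\varphi$ as an arithmetized decision tree over the input bits $\alpha_1,\dots,\alpha_n$ and to list ${\cal S}_\varphi = Im\,{\cal H}_\varphi$ by a depth-first traversal of that tree. The starting observation is that each $h_i(\alpha_1,\dots,\alpha_i)$ is multi-linear modulo $2$, so for any fixed prefix $(\alpha_1,\dots,\alpha_{i-1})$ the restriction $\alpha_i \mapsto h_i(\alpha_1,\dots,\alpha_{i-1},\alpha_i)$ is affine, hence one of the four maps $0$, $1$, $\alpha_i$, $\alpha_i+1$. Thus, given the prefix, the output bit $x_i$ is either \emph{forced} (the two values $h_i(\cdot,0)$ and $h_i(\cdot,1)$ coincide) or \emph{free} (they are the two distinct bits and $\alpha_i \mapsto x_i$ is a bijection). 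This is exactly the information needed to branch.

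First I would describe the traversal. Maintain a partial input $(\alpha_1,\dots,\alpha_{i-1})$ together with the output prefix $(x_1,\dots,x_{i-1})$ it produces. At level $i$, evaluate both $h_i(\alpha_1,\dots,\alpha_{i-1},0)$ and $h_i(\alpha_1,\dots,\alpha_{i-1},1)$. If they agree, the bit $x_i$ is forced and the traversal descends on a single child (say $\alpha_i:=0$); if they differ, it splits into two children, one carrying $x_i=0$ and one carrying $x_i=1$. Each leaf at level $n$ yields one tuple ${\cal H}_\varphi(\alpha_1,\dots,\alpha_n)\in{\cal S}_\varphi$. Because a \emph{free} coordinate produces two children with distinct values of $x_i$, the two subtrees it spawns have disjoint output sets, so no solution is ever produced twice through a branching node.

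The hard part will be to justify the single-child descent at a forced coordinate, i.e. to show that fixing $\alpha_i:=0$ there loses no solution. Concretely the key lemma to prove is: whenever $h_i(\alpha_1,\dots,\alpha_{i-1},0)=h_i(\alpha_1,\dots,\alpha_{i-1},1)$, then for every $j>i$ and every $\alpha_{i+1},\dots,\alpha_j$ one also has $h_j(\dots,\alpha_{i-1},0,\alpha_{i+1},\dots,\alpha_j)=h_j(\dots,\alpha_{i-1},1,\alpha_{i+1},\dots,\alpha_j)$; that is, a forced coordinate makes $\alpha_i$ a dummy for all later coordinates as well. I would obtain this from the inductive construction of ${\cal H}_\varphi$ in the proof of Theorem \ref{thm_caracterization}: a coordinate is forced precisely when one of the two sub-matrices $[{\cal S}_1]$, $[{\cal S}_2]$ splitting on $x_i$ is empty, and in that degenerate case the recursion carries the surviving sub-descriptor without any genuine dependence on $\alpha_i$; applying this recursively along the prefix yields the claim. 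Granting the lemma, at a forced node the two children carry identical output sets, so descending on one of them is exhaustive, and the traversal produces each element of ${\cal S}_\varphi$ exactly once.

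Finally I would account the cost. With the lemma in hand the traversal tree has exactly $\Sigma_\varphi$ leaves, and the subtrees hanging from distinct nodes of a common level have pairwise disjoint (and jointly exhaustive) solution sets; since every node has at least one leaf below it, the number of nodes at any fixed level is at most $\Sigma_\varphi$, and as the depth is $n$ the whole tree has at most $n\,\Sigma_\varphi$ nodes. At each node the only work is the two evaluations $h_i(\cdot,0)$ and $h_i(\cdot,1)$ — this is the factor $2$ — each a unit operation once ${\cal H}_\varphi$ is available, so the total cost is ${\cal O}(2\,n\,\Sigma_\varphi)$. Since a \emph{hard} instance satisfies $\Sigma_\varphi=2^{{\cal O}(1)}$ by (\ref{hard}), this is $n\,2^{{\cal O}(1)}={\cal O}(n)$, as claimed.
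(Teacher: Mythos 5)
Your proposal is correct and follows essentially the same route as the paper: a depth-first traversal of the solution tree, deciding at each node whether the image of $h_i$ given the prefix is $\{0\}$, $\{1\}$ or $\{0,1\}$ (the factor $2$), and bounding the tree by $n\,\Sigma_\varphi$ nodes since every node lies on a branch to one of the $\Sigma_\varphi$ leaves. The only difference is that you explicitly state and justify the forced-coordinate lemma (that a collapsed $\alpha_i$ remains a dummy variable for all later $h_j$), which the paper's proof uses implicitly without comment.
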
 
\begin{proof}
\mbox{}\\
\noindent Let us note that if $ {\cal S}_\varphi = \emptyset$
(no solution), ${\cal H}_{\varphi}$ does not exist.  If there is only
one solution $\{\bar{s}_1\}$, ${\cal H}_{\varphi}$ is a constant function, as $Im \;
{\cal H}_{\varphi} = \{\bar{s}_1\}$. \\[12pt]
Let  $\displaystyle \; {\cal S}_{\varphi} = \{\bar{s}_1, \cdots,
\bar{s}_{\Sigma_\varphi}\}$ be the set of solutions with $\bar{s}_j =
(s_j^1, \cdots, s_j^n)$ and $s_j^i \in \{0,1\} $. We can describe the
solutions as leafs of a tree.\\[12pt]
\begin{center}
\fbox{\synttree[[$h_1(\alpha_1) \equiv 0$[$h_2(0,\alpha_2) = 0$[$h_3(0,0,\alpha_3)=0$[$\bar{s}_1
=(0,0,0)$]][$h_3(0,0,\alpha_3)=1$[$\bar{s}_2
=(0,0,1)$]]][$h_2(0,\alpha_2) = 1$[{\tiny No
solution (0,1,0,$\cdots$)}][$h_3(0,1,\alpha_3) \equiv 1$[$\bar{s}_3
=(0,1,1)$]]]][{\tiny No
solution (1,$\cdots$)}]]}
{\small Figure 3.  Tree representation example of the solutions for $\varphi$}
\end{center}
\mbox{}\\
\noindent For each node, one needs to find whether $Im \;
h_t(\cdots,\alpha_t) = \{0,1\}, \; \{0\}$ or $\{1\}$, where $``\cdots"$
represents the branch to the node.  This takes ${\cal O}(2 \times \#$
nodes) operations.  \\[12pt]
\noindent Each solution corresponds to a leaf of the tree, and
the branch to it contains $n$ nodes.  So, the maximal number of nodes
for $\Sigma_\varphi$ solutions is $n \times \Sigma_\varphi$.
Therefore, the complexity for listing the solutions of a {\bf ``hard" 3-CNF-SAT}
problem $\varphi$ is ${\cal O}(2\; n\; \Sigma_\varphi) \; = \; {\cal O}(n)$ as $\Sigma_\varphi = 2^{{\cal O}(1)}$.\\
\end{proof}
%\mbox{}\\
\begin{theorem}{The complexity for {\bf \em listing} the solutions common to
    many ``solutions trees'' is bound by their minimal complexity.}
\end{theorem}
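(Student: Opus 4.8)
The plan is to exploit the elementary set-theoretic fact that an intersection can be no larger than any of its members. Suppose we are given $p$ ``hard'' 3-CNF-SAT problems $\varphi_1, \ldots, \varphi_p$, each with its own descriptor function ${\cal H}_{\varphi_i}$ and associated solution tree, and that we wish to list the common solution set ${\cal S} = \bigcap_{i=1}^p {\cal S}_{\varphi_i}$. Let $k^*$ be an index realizing the smallest number of solutions, i.e. $\Sigma_{\varphi_{k^*}} = \min_{1 \leq i \leq p} \Sigma_{\varphi_i}$; by \emph{Theorem \ref{complexity_3}} this tree is also the one of minimal listing complexity ${\cal O}(2 \, n \, \Sigma_{\varphi_{k^*}})$. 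Since ${\cal S} \subseteq {\cal S}_{\varphi_{k^*}}$, every common solution is already a leaf of the minimal tree, so $\# {\cal S} \leq \Sigma_{\varphi_{k^*}}$.

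First I would enumerate the $\Sigma_{\varphi_{k^*}}$ leaves of the minimal tree, using the tree traversal of \emph{Theorem \ref{complexity_3}}, at cost ${\cal O}(2 \, n \, \Sigma_{\varphi_{k^*}})$. Then, for each such candidate leaf $\bar{s} = (s^1, \ldots, s^n)$, I would test whether $\bar{s}$ belongs to each of the remaining solution sets ${\cal S}_{\varphi_i}$ ($i \neq k^*$). This membership test is done by descending the tree of $\varphi_i$ along the single path dictated by the fixed bits $s^1, \ldots, s^n$ and checking that the descriptor values $h_t^{(i)}(s^1, \ldots, s^t)$ remain consistent with $\bar{s}$; equivalently one may simply evaluate $\varphi_i(\bar{s})$ in ${\cal O}(m_i)$ operations. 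A candidate is retained in the output exactly when it passes all $p-1$ tests, which by construction yields precisely ${\cal S}$.

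The total complexity is then the enumeration cost of the minimal tree, ${\cal O}(2 \, n \, \Sigma_{\varphi_{k^*}})$, multiplied by the per-candidate verification cost, which is polynomial in $n$ and in the total formula size and, crucially, independent of the other $\Sigma_{\varphi_i}$. Hence the listing of the common solutions is bounded, up to a polynomial factor, by the minimal listing complexity among the individual trees, which is the claim.

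The step I expect to be the main obstacle is making the verification genuinely cheap and, in particular, independent of the (possibly much larger) complexities $\Sigma_{\varphi_i}$ of the non-minimal trees: one must argue that checking membership $\bar{s} \in {\cal S}_{\varphi_i}$ requires unfolding only the single root-to-leaf path selected by $\bar{s}$, and not the whole tree of $\varphi_i$, so that no hidden exponential factor from the larger trees creeps back into the bound. A secondary point to pin down is the precise meaning of ``minimal complexity'' --- whether it refers to the minimal $\Sigma_{\varphi_i}$ or to the minimal listing time ${\cal O}(2 \, n \, \Sigma_{\varphi_i})$; since by \emph{Theorem \ref{complexity_3}} these coincide up to the common factor $2n$, the distinction is immaterial but should be stated cleanly.
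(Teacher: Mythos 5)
Your proposal rests on the same key fact as the paper's proof --- the common solutions form a subset of every individual solution set, hence their number is at most $\min_i \Sigma_{\varphi_i}$, and by Theorem~\ref{complexity_3} the listing cost is proportional to that count --- so the mathematical core is identical. The algorithmic realization differs slightly: the paper traverses the trees \emph{simultaneously}, following only the branches present in all of them (the red paths of its Figure~4), and thereby claims the bare bound ${\cal O}(2\,n\,\min_L \Sigma_{\varphi^{(L)}})$ with no extra factor; you instead enumerate the leaves of the single minimal tree and then filter each candidate by a membership test against the remaining formulas, which costs an additional polynomial factor (roughly $\sum_{i} m_i$ per candidate) but makes completely explicit how commonality is detected without ever unfolding the larger trees --- a point the paper's one-line ``follow the paths in common'' leaves unexamined. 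In that sense your enumerate-and-filter version is marginally weaker in the stated constant (polynomial overhead versus none) but more defensible, since the paper never explains how one knows a branch is ``common'' to a tree of size $\Sigma_{\varphi^{(L')}} \gg \Sigma_{\varphi^{(L)}}$ without touching it; your single-path descent (or direct evaluation of $\varphi_i(\bar{s})$) supplies exactly the missing mechanism. Your closing remark that ``minimal number of solutions'' and ``minimal listing complexity'' coincide up to the common factor $2n$ is also a clarification the paper tacitly assumes.
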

\begin{center}
\begin{figure}[htb]
\hspace{2cm} \includegraphics[height=6cm]{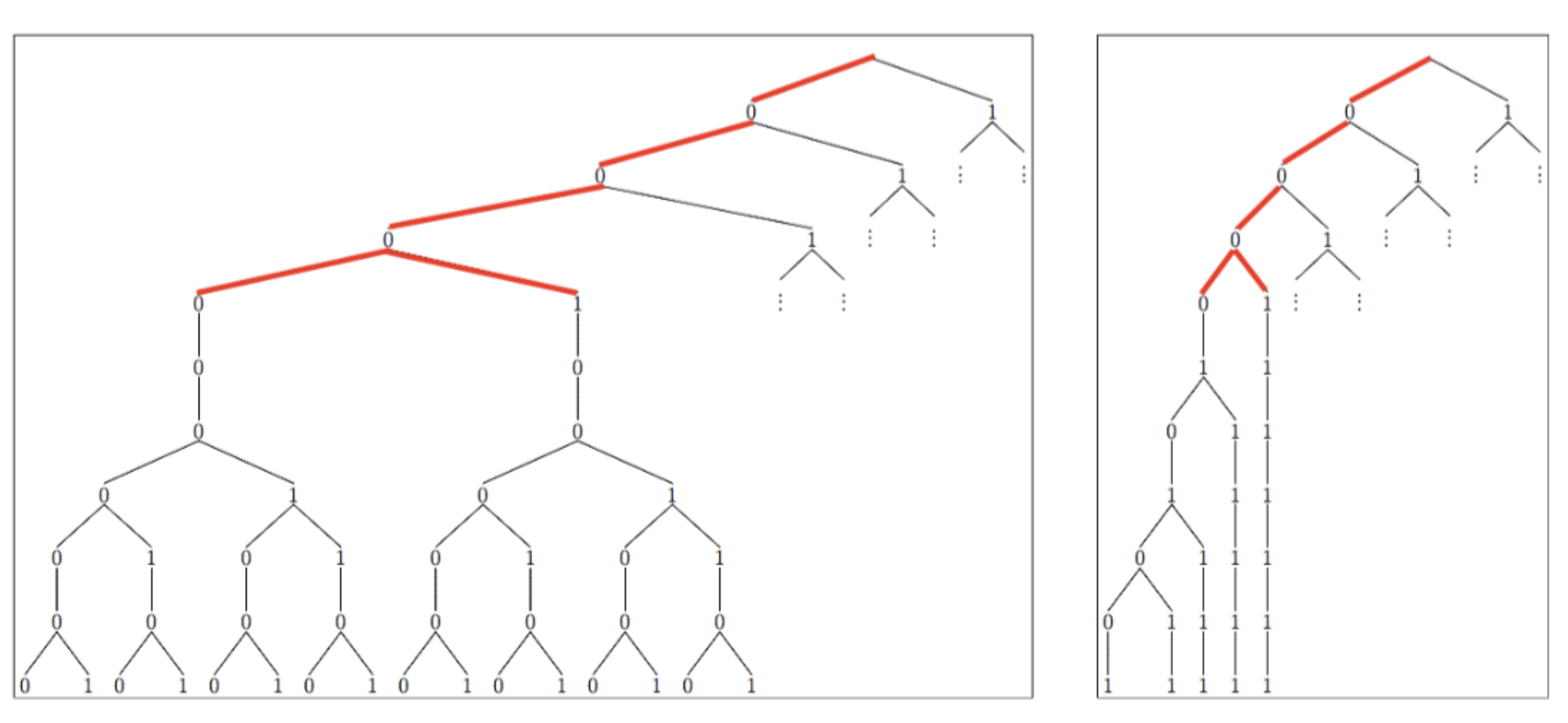} \\
\centerline{  {\small Figure 4 : Merging of two solutions trees
   $\Sigma_{\varphi^{(L)}}$ and $\Sigma_{\varphi^{(L')}}$, without solution for the {\em red} branches.}}\end{figure}
\end{center}
\begin{proof}\\
\noindent The complexity for listing the solutions of the sub-problem
$\varphi^{(L)}$  is ${\cal O}(2 \;n\; \Sigma_{\varphi^{(L)}}).$  To list
the solutions common to several ``solutions trees", one needs to follow
the paths in common (in red on figure 4).  The number of paths in common
will be less or equal to the minimum number of paths among the different
sub-problems, as the common paths should belong to this ``solutions tree''.
So, the complexity is ${\cal O}(2 \;n\;\;\min_{L} \; \Sigma_{\varphi^{(L)}})$.\\
\end{proof}
\section{The indicator function $\mathbbm{1}_{{\cal S}_\varphi}(x_1, \cdots,x_n)$ of the set of solutions ${\cal S}_\varphi$}
\subsection{Another description of ${\cal S}_{\varphi} = Im \; {\cal
  H}_\varphi$.}
\noindent Let consider a solution $(x_1, \cdots, x_n)$  for the 3-CNF-SAT problem $\varphi$ :
\begin{eqnarray}
 (x_1, \cdots, x_n) \in {\cal S}_{\varphi} 
&\Leftrightarrow &{\cal H}_{\varphi}(x_1, \cdots, x_n) 
 = (x_1, \cdots, x_n) \;\; \mbox{ by construction of } {\cal H}_{\varphi}  \nonumber  \\
& &  \nonumber \\
& \Leftrightarrow &  \raisebox{5ex}{t} \left[ \begin{array}{c}
h_{1,\varphi}(x_1)\\
\vdots\\
h_{n, \varphi}(x_1,\cdots,x_n)
\end{array}
\right] = \raisebox{5ex}{t} \left[ \begin{array}{c}
x_1\\
\vdots\\
x_n
\end{array}
\right]   \nonumber \\
& &  \nonumber \\
& \stackrel{\mbox{\tiny \em (mod 2)}}{\Leftrightarrow} &  \raisebox{5ex}{t} \left[ \begin{array}{c}
h_{1,\varphi}(x_1) + x_1\\
\vdots\\
h_{n, \varphi}(x_1,\cdots,x_n) + x_n
\end{array}
\right] = \raisebox{5ex}{t} \left[ \begin{array}{c}
0\\
\vdots\\
0
\end{array}
\right] \nonumber \\
& & \nonumber \\
& \stackrel{\mbox{\tiny \em (mod 2)}}\Leftrightarrow & \prod_{i=1}^{n} \; [ h_{i,\varphi}(x_1,\cdots,x_i) + x_i + 1 ] \; = \; 1 \label{indicator}
\end{eqnarray} 
\mbox{}\\
\begin{theorem} {\bf \em Indicator function $\mathbbm{1}_{{\cal S}_\varphi}(x_1, \cdots,x_n)$ and descriptor function ${\cal H}_{\varphi}$}
\mbox{} \\
Consider the 3-CNF-SAT problem $\varphi$ and its descriptor function ${\cal H}_{\varphi}$, if it exists.  The {\em indicator function} of the set of solutions ${\cal S}_\varphi$ is given by :
\begin{eqnarray}
\mathbbm{1}_{{\cal S}_\varphi}(x_1, \cdots,x_n) \equiv 
\left\{ \begin{array}{ll}
\displaystyle \prod_{i=1}^{n} \; [ h_{i,\varphi}(x_1,\cdots,x_i) + x_i + 1 ] \;
\;\; & \mbox{ if ${\cal H}_\varphi$ exists} \\ 
0 \;\;\; & \mbox{ otherwise} 
\end{array} \right. \label{indicator2}
\end{eqnarray}
\end{theorem}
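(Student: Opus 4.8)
The plan is to read the identity (\ref{indicator2}) off the chain of equivalences established in (\ref{indicator}) just above the statement, and then supply the two small observations that promote that biconditional into an equality of functions on all of $\{0,1\}^n$. The ingredient I would rely on is the defining property of the descriptor recalled at the head of the section: by construction of $\mathcal{H}_\varphi$ in Theorem~\ref{thm_caracterization}, one has $(x_1,\cdots,x_n)\in\mathcal{S}_\varphi \Leftrightarrow \mathcal{H}_\varphi(x_1,\cdots,x_n)=(x_1,\cdots,x_n)$, since $Im\,\mathcal{H}_\varphi=\mathcal{S}_\varphi$ and the recursive definition of each $h_i$ reproduces the $i$-th coordinate of any genuine solution.

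First I would treat the generic case where $\mathcal{H}_\varphi$ exists, equivalently $\mathcal{S}_\varphi\neq\emptyset$. The key remark is that every factor $h_{i,\varphi}(x_1,\cdots,x_i)+x_i+1$ lies in $\{0,1\}$ modulo~2, hence is idempotent, so the product over $i$ behaves as a logical conjunction: it equals $1$ exactly when each factor equals $1$, and it equals $0$ the moment one factor vanishes. A factor equals $1$ iff $h_{i,\varphi}(x_1,\cdots,x_i)=x_i$, so the product equals $1$ iff $\mathcal{H}_\varphi$ fixes $(x_1,\cdots,x_n)$ coordinate-wise. Combined with the defining property above, this is precisely the biconditional of (\ref{indicator}): the product is $1$ iff $(x_1,\cdots,x_n)\in\mathcal{S}_\varphi$.

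To conclude the generic case I would note that, the product being $\{0,1\}$-valued, ``$\neq 1$'' forces ``$=0$'', so the product vanishes on every non-solution; hence it agrees with $\mathbbm{1}_{\mathcal{S}_\varphi}$ everywhere, which is the first branch of (\ref{indicator2}). For the degenerate case, if $\mathcal{H}_\varphi$ does not exist then, by the remark opening the proof of Theorem~\ref{complexity_3}, $\mathcal{S}_\varphi=\emptyset$ and $\mathbbm{1}_{\mathcal{S}_\varphi}\equiv0$, which matches the second branch.

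The only delicate point, and where I would spend the most care, is the passage from the $n$ coordinate equalities $h_{i,\varphi}=x_i$ to the single scalar product: one must be sure that the mod-$2$ product really encodes their conjunction and does not collapse to something of lower degree. This is safe precisely because each factor is idempotent mod~$2$, so no spurious cancellation can occur and a single zero annihilates the whole product; I would state this idempotence explicitly rather than leave it implicit, as it is what makes the product a faithful indicator.
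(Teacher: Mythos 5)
Your proposal is correct and follows essentially the same route as the paper: both read the result off the equivalence chain in (\ref{indicator}), observe that the mod-2 product is $\{0,1\}$-valued so that non-membership forces the product to vanish, and dispose of the case where ${\cal H}_\varphi$ does not exist by noting ${\cal S}_\varphi=\emptyset$. Your explicit remark on the idempotence of each factor is a welcome clarification but does not change the argument.
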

\mbox{}\\
\begin{proof}
\mbox{}\\
If ${\cal H}_{\varphi}$ does not exist, this means that there is no solution for $\varphi$ : ${\cal S}_\varphi = \emptyset $.  \\
From (\ref{indicator}), it follows that :
\[  (x_1, \cdots, x_n) \not \in {\cal S}_{\varphi}  \stackrel{\mbox{\tiny \em (mod 2)}}\Leftrightarrow  \prod_{i=1}^{n} \; [ h_{i,\varphi}(x_1,\cdots,x_i) + x_i + 1 ]  \; = \; 0 \]
So, the definition given in (\ref{indicator2}) corresponds to the indicator function of ${\cal S}_\varphi$.
\end{proof}
\subsection{Properties}
\noindent $\bullet$ Let $\varphi$ and $\varphi'$ be two 3-CNF-SAT problems, and ${\cal S}_\varphi \cap {\cal S}_{\varphi'}$ the set of common solutions. Then, ${\cal S}_{\varphi \wedge \varphi'} = {\cal S}_\varphi \cap {\cal S}_{\varphi'}$ and $\mathbbm{1}_{{\cal S}_{\varphi \wedge \varphi'}}(\cdot) =  \mathbbm{1}_{{\cal S}_\varphi}(\cdot) \times \mathbbm{1}_{{\cal S}_{\varphi'}}(\cdot)$, following the normal properties of indicator functions.  \\[12pt]
\noindent $\bullet$ Let ${\cal S}_\varphi = \{(s_1, \cdots, s_n)\}$  [$\varphi$ has only one solution] then 
\begin{eqnarray}
\mathbbm{1}_{{\cal S}_\varphi}(x_1, \cdots, x_n) = \prod_{i=1}^n 
x_i^{s_i} (x_i + 1)^{(s_i + 1)} \;\;\; \mbox{\tiny \em (mod 2)}  \label{indicator5}
\end{eqnarray}

as $\mathbbm{1}_{{\cal S}_{\varphi}} (x_1, \cdots, x_n) = 1 \;\; \iff \;\; \forall i \; : \; x_i^{s_i} (x_i + 1)^{(s_i+1)} = 1 \; \; \iff \;\; \forall i \; : \; x_i = s_i.$ \\[12pt]
\noindent $\bullet$ Let ${\cal S}_\varphi$ have many solutions, then 
\begin{eqnarray}
\mathbbm{1}_{{\cal S}_\varphi}(x_1, \cdots, x_n) = \sum_{(s_1, \cdots,s_n) \in {\cal S}_\varphi} \;  \prod_{i=1}^n 
x_i^{s_i} (x_i + 1)^{(s_i + 1)} \;\;\; \mbox{\tiny \em (mod 2)}  \label{indicator4}
\end{eqnarray} 
The proof is easily done by recurrence. \\[12pt]
\noindent $\bullet$ Let  $\varphi$ be a  3-CNF formula.
\begin{eqnarray}
\varphi &=& \bigwedge_{k=1}^m \psi_k \nonumber \\
\mbox{ where  }  & \;& \psi_k = [\neg]^{\delta_{r_k}} \; x_{r_k} \vee [\neg]^{\delta_{s_k}} \; x_{s_k} \vee [\neg]^{\delta_{t_k}} \;  x_{t_k}
\;\; \mbox{ where } [\neg]^{\delta_i} x_i =
\left\{ \begin{array}{r}
 x_i \; \mbox{ if } \delta_i = 0  \\
 \neg \; x_i \; \mbox{ if } \delta_i = 1 
\end{array} \right. \nonumber \\
\mbox{Then } &\; & \nonumber \\
\mathbbm{1}_{{\cal S}_\varphi}(\cdot) &=& \prod_{k=1}^m \; \{ x_{r_k}^{\delta_{r_k}} \; (x_{r_k}+1)^{(\delta_{r_k}+1)} \;
x_{s_k}^{\delta_{s_k}} \; (x_{s_k}+1)^{(\delta_{s_k}+1)} \;
x_{t_k}^{\delta_{t_k}} \; (x_{t_k}+1)^{(\delta_{t_k}+1)} \; + \; 1 \} \label{indicator3}
\end{eqnarray}
The proof follows directly from (\ref{def_h}) and (\ref{indicator}), considering that :
\begin{eqnarray}
\mathbbm{1}_{{\cal S}_\varphi}(x_1, \cdots, x_n) = \prod_{k=1}^m \mathbbm{1}_{{\cal S}_{\psi_k}}(x_1, \cdots, x_n) \label{property5}
\end{eqnarray} 
Let us remark that (\ref{indicator3}) could be of exponential complexity, as it is a multi-linear product of $m$ sums of at least two terms, with $m = {\cal O}(n)$.  Numerical results show here again a {\em cluster effect} when computing (\ref{indicator3}).   \\
\section{A greedy polynomial algorithm for ``hard" 3-CNF-SAT problems}
\subsection{The sub-problems $\varphi_{(x_t)}^{\oplus}$ and $\varphi_{(x_t)}^{\ominus}$}
\begin{definition}
Let $\varphi_{(x_t)}^{\oplus}$ be the sorted sub-problem of $\varphi$, restricted to the clauses in $Cl(x_t)$ having $x_t$ as the highest indexed positive variable :
\[\varphi_{(x_t)}^{\oplus} = \bigwedge_{k=1}^{m_t^{\oplus}} [\neg] x_{r_k} \vee [\neg] x_{s_k} \vee x_{t} \;\; = \; \bigwedge_{k=1}^{m_t^\oplus} \psi_k  \;\; \mbox{where } r_k < s_k < t \]
and $\varphi_{x_t}^{\ominus}$ with the clauses in $Cl(x_t)$ having $x_t$ as the highest indexed negative variable :
\[\varphi_{(x_t)}^{\ominus} = \bigwedge_{k=1}^{m_t^{\ominus}} [\neg] x_{r_k} \vee [\neg] x_{s_k} \vee \neg \; x_{t_k} \;\; = \; \bigwedge_{k=1}^{m_t^\ominus} \psi_k  \;\; \mbox{where } r_k < s_k < t \]
\end{definition} 
\begin{eqnarray*}
\hspace{-5cm} \mbox{We get } \varphi = \bigwedge_{t=1}^n [\; \varphi_{(x_t)}^{\oplus} \; \wedge \; \varphi_{(x_t)}^{\ominus} \;] \; \mbox{ and } \;
m = \sum_{t=1}^n \; ( m_t^{\oplus} + m_t^{\ominus} ) \\
\end{eqnarray*}
Considering together the definition (\ref{property5}) of $\mathbbm{1}_{{\cal S}_\varphi}(x_1, \cdots,x_n)$ 
and (\ref{indicator2}), we get :
\begin{eqnarray}
\mathbbm{1}_{{\cal S}_\varphi}(x_1, \cdots, x_n) & = & \prod_{t=1}^n [ \mathbbm{1}_{{\cal S}_{\varphi_{(x_t)}^\oplus}(x_1, \cdots, x_n)} 
\; \cdot \; \mathbbm{1}_{{\cal S}_{\varphi_{(x_t)}^\ominus}(x_1, \cdots, x_n)} ] \nonumber \\
& = & \prod_{t=1}^n \; \prod_{i=1}^{t} \; [ h_{i,\varphi_{(x_t)}^\oplus}(x_1,\cdots,x_i) + x_i + 1 ] \; \cdot \; 
 \prod_{t=1}^n \; \prod_{i=1}^{t} \; [ h_{i,\varphi_{(x_t)}^\ominus}(x_1,\cdots,x_i) + x_i + 1 ]  \nonumber  \\
 & \stackrel{(\ref{property2})}{=} & \prod_{t=1}^n \; [ h_{t,\varphi_{(x_t)}^\oplus}(x_1,\cdots,x_t) + x_t + 1 ] \; \cdot \; 
 \prod_{t=1}^n \; [ h_{t,\varphi_{(x_t)}^\ominus}(x_1,\cdots,x_t) + x_t + 1 ] \label{indicator6}
\end{eqnarray}
where $\len(h_{t,\varphi_{(x_t)}^\oplus}(x_1,\cdots,x_t) + x_t + 1) = {\cal O}(2^\Delta)$ and $\len(h_{t,\varphi_{(x_t)}^\ominus}(x_1,\cdots,x_t) + x_t + 1) = {\cal O}(2^\Delta)$.  See (\ref{Delta}).
\mbox{}\\
\begin{theorem} {\em The computation of $\; {\cal H}_{\varphi_{(x_t)}^\oplus}$ and $\; {\cal H}_{\varphi_{(x_t)}^\ominus}$ is ${\cal O}(n^k)$ for 3-CNF-SAT problems where $\Delta = \frac{m}{n} = {\cal O}(1)$. }
\end{theorem}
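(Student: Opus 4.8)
The plan is to apply the complexity result of Theorem \ref{complexity_1} to each sub-problem $\varphi_{(x_t)}^\oplus$ (and symmetrically to $\varphi_{(x_t)}^\ominus$), and to show that, under $\Delta = {\cal O}(1)$, every descriptor function produced during the merge keeps a length bounded by a constant. First I would note that $\varphi_{(x_t)}^\oplus$ is itself a 3-CNF formula whose $m_t^\oplus$ clauses all carry $x_t$ as their highest-indexed variable, so each clause involves only variables among $\{x_1, \cdots, x_t\}$; consequently the set of variables occurring in $\varphi_{(x_t)}^\oplus$ is exactly $V(x_t)$, with $\# V(x_t) \le 2\, m_t^\oplus + 1$, since each clause contributes $x_t$ together with at most two lower-indexed variables. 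Substituting the clause count $m_t^\oplus$ and the variable bound $n$ into Theorem \ref{complexity_1}, the cost of computing ${\cal H}_{\varphi_{(x_t)}^\oplus}$ is
\[ {\cal O}\!\left( m_t^\oplus \, n^2 \, \max_{1 \le i \le n} \, \max_{1 \le l \le m} \len\bigl( h_i^{(\{l,\cdots,m\})} \bigr) \right) . \]

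The crucial step is to bound this maximal length. By the \textbf{Property} on $\combi$-sets, relations (\ref{combi_wedge}) and (\ref{combi_g}), every merge $f_i(\cdot) \wedge g_i(\cdot)$ remains in $\combi(A \cup B)$ and every recursive predecessor term $g_j^*(\cdot)$ remains in $\combi([A \cup B] \setminus \{\alpha_{j+1}, \cdots, \alpha_n\})$. Hence neither the merging formula (\ref{merge}) nor the predecessor recursion (\ref{recursive}) can ever introduce an $\alpha_i$ that is not already carried by the clauses being combined, and throughout the whole computation of ${\cal H}_{\varphi_{(x_t)}^\oplus}$ each intermediate descriptor $h_i^{(L)}$ is a multi-linear combination of the $\alpha_i$ attached to $V(x_t)$ alone. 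Invoking the length bound $\len(h_i) \le 2^{i}$ of the Corollary, with the variable count $\# V(x_t)$ now playing the role of $i$, this yields $\len(h_i^{(L)}) \le 2^{\# V(x_t)} \le 2^{\,2 m_t^\oplus + 1}$ uniformly.

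It then remains to control $m_t^\oplus$ by the ratio $\Delta$. Since each clause of $\varphi$ is counted exactly once, at the index of its highest variable, we have $\sum_{t=1}^n ( m_t^\oplus + m_t^\ominus ) = m = \Delta\, n$, the accounting already recorded just before the statement. Under $\Delta = {\cal O}(1)$ this forces $m_t^\oplus = {\cal O}(\Delta) = {\cal O}(1)$, whence $\# V(x_t) = {\cal O}(\Delta)$ and $\max_{i,l} \len( h_i^{(L)} ) = {\cal O}(2^\Delta) = {\cal O}(1)$, which is exactly the length estimate quoted for $h_{t,\varphi_{(x_t)}^\oplus}$. The per-index cost therefore collapses to ${\cal O}(n^2)$; adding the symmetric $\ominus$ contribution and summing over the $n$ choices of $t$ gives an overall ${\cal O}(n^3) = {\cal O}(n^k)$, proving the claim.

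The main obstacle I anticipate is precisely this last control of $m_t^\oplus$. The identity $\sum_t m_t^\oplus = {\cal O}(\Delta\, n)$ bounds only the \emph{average} number of clauses sharing a given highest-index variable, not the \emph{maximum}: a single $x_t$ could a priori be the top variable of many clauses, so that $\# V(x_t)$ grows with $n$ and the factor $2^{\# V(x_t)}$ degrades to something super-polynomial. The argument must therefore lean on the structural regularity encoded in $\Delta = {\cal O}(1)$--- together with the sorting convention of Definition \ref{sorted}--- to guarantee that $m_t^\oplus$ stays ${\cal O}(1)$ \emph{uniformly} in $t$ rather than merely on average. Establishing that uniform bound, and with it the ${\cal O}(2^\Delta)$ length estimate, is the delicate point, and it is here that the \emph{cluster effect} analysis of the preceding subsection is expected to do the real work.
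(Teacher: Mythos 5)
Your overall route is the paper's: apply Theorem \ref{complexity_1} to the sub-problem, observe that all but one of the $h_i$ are trivial so the cost is governed by $\len(h_{t,\varphi_{(x_t)}^\oplus})$, bound that length by $2^{(2 m_t^\oplus)+1}$ because the $m_t^\oplus$ clauses touch at most $2 m_t^\oplus + 1$ variables, and then convert $m_t^\oplus$ into ${\cal O}(\Delta)={\cal O}(1)$. The one substantive difference is how you control the intermediate descriptors: you invoke the $\combi$ Property (\ref{combi_wedge})--(\ref{combi_g}), whereas the paper instead proves the explicit structural invariant (\ref{property1})--(\ref{property2}) --- namely $h_{t,\cdot}(\alpha_1,\cdots,\alpha_{t-1},1)=1$ and $h_{i,\cdot}=\alpha_i$ for $i\neq t$ after every merge, which in particular shows that the recursive call (\ref{recursive}) never fires. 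Your version gets the same length bound with less work, but the paper's invariant is also what later justifies (\ref{property2})--(\ref{property4}) and (\ref{indicator6}), so it is doing double duty there.

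The obstacle you flag at the end is genuine, and you should know that the paper does not actually overcome it. The step from $\sum_{t}(m_t^\oplus+m_t^\ominus)=m=\Delta n$ to $m_t^\oplus={\cal O}(\Delta)$ \emph{uniformly in $t$} is exactly where the paper's proof is weakest: it treats only the ``exact uniform distribution'' case ($m_t^\oplus+m_t^\ominus\leq 3\Delta$ for all $t$), an extreme non-uniform case, and the ``variables at random'' case ($m_t^\oplus\approx\Delta/2$ for large $n$), and the proposed relabeling of variables by frequency does not produce a worst-case bound --- a single variable can be the highest-indexed literal of $\Theta(n)$ clauses even when $\Delta$ is constant, in which case $2^{2m_t^\oplus+1}$ is super-polynomial. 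The ``cluster effect'' subsection you hope will do the real work is presented only as a numerical observation, not a lemma. So your proposal reproduces the paper's argument faithfully, including its unclosed gap; the honest thing is to state the theorem under the additional hypothesis $\max_t (m_t^\oplus+m_t^\ominus)={\cal O}(\Delta)$, which is what both your argument and the paper's actually prove.
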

\mbox{}\\
\begin{proof}
\noindent For each $\psi_k$ in $\varphi_{(x_t)}^{\oplus}$, we get $h_{t,\psi_k}(\alpha_{r_k},\alpha_{s_k},1) = h_{t,\psi_k}(\alpha_1, \cdots ,\alpha_{t-1},1) = 1$  and $h_{i,\psi_k}(\alpha_1,\alpha_2,\cdots,\alpha_i) = \alpha_i \; $ for $ \; i < t$. See (\ref{def_h}). \\
Remember the computation algorithm (see {\em Theorem \ref{computation_algo}}). By (\ref{merge}), we get : 
\begin{eqnarray*}
h_{t,\psi_1 \wedge \psi_2}(\alpha_1,\cdots, \alpha_t) := 
& (\alpha_t +1) &\; \cdot \;  \{ \; [h_{t,\psi_1}(\alpha_1,\cdots, \alpha_{t-1},0) +  
 h_{t,\psi_2}(\alpha_1,\cdots, \alpha_{t-1},0)]  \\
& &  \; \; \;  \; \;\cdot \; [ h_{t,\psi_1}(\alpha_1,\cdots, \alpha_{t-1},1) \cdot  
 h_{t,\psi_2}(\alpha_1,\cdots, \alpha_{t-1},1)]   \\
& &  \; \; + \; [h_{t,\psi_1}(\alpha_1,\cdots, \alpha_{t-1},0) \cdot  
 h_{t,\psi_2}(\alpha_1,\cdots, \alpha_{t-1},0)] \; \}  \\
&  \; +\;\;  \; \alpha_t &\; \cdot \;
   \{ \; [h_{t,\psi_1}(\alpha_1,\cdots, \alpha_{t-1},1) +  
 h_{t,\psi_2}(\alpha_1,\cdots, \alpha_{t-1},1)]   \\
& &  \; \;\;  \; \;\cdot \; [h_{t,\psi_1}(\alpha_1,\cdots, \alpha_{t-1},0) +  
 h_{t,\psi_2}(\alpha_1,\cdots, \alpha_{t-1},0) ]   \\
& &  \; \; \; + \; [h_{t,\psi_1}(\alpha_1,\cdots, \alpha_{t-1},1) +  
 h_{t,\psi_2}(\alpha_1,\cdots, \alpha_{t-1},1)]    \\
& &  \;  \; \; \;  \; \cdot \; [h_{t,\psi_1}(\alpha_1,\cdots, \alpha_{t-1},0) \cdot  
 h_{t,\psi_2}(\alpha_1,\cdots, \alpha_{t-1},0)]    \\
& & \;  \; \;+   \; [h_{t,\psi_1}(\alpha_1,\cdots, \alpha_{t-1},1) \cdot  
 h_{t,\psi_2}(\alpha_1,\cdots, \alpha_{t-1},1) ] \} \\
:=  & (\alpha_t +1) &\; \cdot \;  \{ \; [h_{t,\psi_1}(\alpha_1,\cdots, \alpha_{t-1},0) +  
 h_{t,\psi_2}(\alpha_1,\cdots, \alpha_{t-1},0)]  \\
& &  \; \; + \; [h_{t,\psi_1}(\alpha_1,\cdots, \alpha_{t-1},0) \cdot  
 h_{t,\psi_2}(\alpha_1,\cdots, \alpha_{t-1},0)] \; \}  \\
&  \; +\;\;  \; \alpha_t &\; \; \cdot \; 1 \\
&\hspace{-9cm} \mbox{and }& \\
 h_{i,\psi_1 \wedge \psi_2}(\alpha_1,\cdots, \alpha_{i}) := & \alpha_i & \; \; \; \forall \; i < t \;\;\;\; \mbox{ as } 
h_{i,\psi_1}(\alpha_1,\cdots, \alpha_i) = h_{i,\psi_2}(\alpha_1,\cdots, \alpha_i) = \alpha_i
\end{eqnarray*}
\begin{eqnarray}
\stackrel{\mbox{\tiny \em (After step one)}}{\Longrightarrow} \;\;\; \left\{
\begin{array}{l}
h_{t,\psi_1 \wedge \psi_2}(\alpha_1,\cdots, \alpha_{t-1}, 1) = 1 \;\; \; \forall  (\alpha_1, \cdots, \alpha_{t-1}) \in \{0,1\}^{t-1} \\
h_{i,\psi_1 \wedge \psi_2}(\alpha_1,\cdots, \alpha_i) = \alpha_i  \;\; \forall \; i \neq t \;\; \; \forall  (\alpha_1, \cdots, \alpha_i) \in \{0,1\}^i 
\end{array} \right. \label{property1}
\end{eqnarray}  
For the next step of the algorithm [the computation of $h_{t,(\psi_1 \wedge \psi_2)\; \wedge \psi_3}(\alpha_1,\cdots, \alpha_t)$], we will get the same property (\ref{property1}) as $h_{t,\psi_1 \wedge \psi_2}(\alpha_1,\cdots, \alpha_{t})$
 will replace $h_{t,\psi_1}(\alpha_1,\cdots, \alpha_t)$ in the formula (\ref{merge}) and $h_{t,\psi_3}(\alpha_1,\cdots, \alpha_{t})$ the term $h_{t,\psi_2}(\alpha_1,\cdots, \alpha_t)$.

\noindent Let us note that there will be no recursive call in the algorithm, as the condition (\ref{recursive}) never occurs.  Indeed :
\[ [h_{t,\psi_1}(\alpha_{r_1},\alpha_{s_1},0) +  
 h_{t,\psi_2}(\alpha_{r_2},\alpha_{s_2},0)]  \; \cdot \;
 [h_{t,\psi_1}(\alpha_{r_1},\alpha_{s_1},1) +  
 h_{t,\psi_2}(\alpha_{r_2},\alpha_{s_2},1)] \; \neq 1 \] 
as $h_{t,\psi_1}(\alpha_{r_1},\alpha_{s_1},1) = 1$ and $h_{t,\psi_2}(\alpha_{r_2},\alpha_{s_2},1)=1$.  

\noindent Therefore, at the end of the algorithm, we get ${\cal H}_{\varphi_{(x_t)}}^{\oplus} = [h_{1,\varphi_{(x_t)}^{\oplus}}(\cdot) \hspace{6pt} \cdots \hspace{6pt} h_{t,\varphi_{(x_t)}^{\oplus}}(\cdot)] $  : 
\begin{eqnarray}
\mbox{with  } \;\; \left\{ \begin{array}{l}
h_{t,\varphi_{(x_t)}^{\oplus}}(\alpha_1,\cdots, \alpha_{t-1}, 1) = 1 \;\; \; \forall  (\alpha_1, \cdots, \alpha_{t-1}) \in \{0,1\}^{t-1} \\
h_{i,\varphi_{(x_t)}^{\oplus}}(\alpha_1,\cdots, \alpha_i) = \alpha_i  \;\; \forall \; i \neq t \;\; \; \forall  (\alpha_1, \cdots, \alpha_i) \in \{0,1\}^i 
\end{array} \right. \label{property2}
\end{eqnarray}

\noindent Similarly, we  get ${\cal H}_{\varphi_{(x_t)}}^{\ominus} = [h_{1,\varphi_{(x_t)}^{\ominus}}(\cdot) \hspace{6pt} \cdots \hspace{6pt} h_{t,\varphi_{(x_t)}^{\ominus}}(\cdot)] $ : 
\begin{eqnarray}
\mbox{with  } \;\; \left\{ \begin{array}{l}
h_{t,\varphi_{(x_t)}^{\ominus}}(\alpha_1,\cdots, \alpha_{t-1}, 0) = 0 \;\; \; \forall  (\alpha_1, \cdots, \alpha_{t-1}) \in \{0,1\}^{t-1} \\
h_{i,\varphi_{(x_t)}^{\ominus}}(\alpha_1,\cdots, \alpha_i) = \alpha_i  \;\; \forall \; i \neq t \;\; \; \forall  (\alpha_1, \cdots, \alpha_i) \in \{0,1\}^i 
\end{array} \right. \label{property3}
\end{eqnarray}

\noindent Theorem \ref{complexity_1} states that the complexity to compute ${\cal H}_{\varphi_{(x_t)}}^{\oplus}$ is :
\begin{eqnarray}
{\cal
  O}( m_t^\oplus \; n^2 \; \max_{1 \leq i \leq t} \;  \len(h_{i,\varphi_{(x_t)}^{\oplus}}(\cdot)) \; ) =
  {\cal
  O}( m_t^\oplus \; n^2 \; \len(h_{t,\varphi_{(x_t)}^{\oplus}}(\alpha_1,\cdots, \alpha_{t-1}, 0)) \; ) \label{property4}
\end{eqnarray}
as $\forall i < t : \len(h_{i,\varphi_{(x_t)}^{\oplus}}(\alpha_1,\cdots, \alpha_{i})) = \len(\alpha_i) =  \len(h_{t,\varphi_{(x_t)}^{\oplus}}(\alpha_1,\cdots, \alpha_{t-1}, 1)) = \len(1) =  1$.
\mbox{}\\[12pt]
\noindent All together, the $m_t^\oplus$ clauses of $\varphi_{(x_t)}^\oplus$ concern at most $(2 \cdot m_t^\oplus) + 1$ variables.
Therefore, 
\begin{eqnarray}
\len(h_{t,\varphi_{(x_t)}^{\oplus}}(\alpha_1,\cdots, \alpha_{t-1}, 0)) \leq 2^{(2 \cdot m_t^\oplus) + 1} \; = \;
{\cal O}(2^{\Delta + 1}) \label{Delta}
\end{eqnarray}
where $\Delta = \frac{m}{n} $ is the {\em ratio} of the 3-CNF-SAT problem, see (\ref{no1}).  
Without loss of generality, one can {\bf relabel} the variables so that the minimal index $1$ is attributed to the most frequent variable and so on
for the remaining variables.  An exact uniform distribution of the variables yields to $m_t^\oplus + m_t^\ominus \leq 3 \Delta$ for all $t$. An extreme non uniform distribution, i.e. when each variable occurs only once except for one variable, relabeled $x_1$, that occurs
$3 m - (n - 1)$, yields to  $m_t^\oplus + m_t^\ominus = 1$ for all $t$. 

If the variables are at random in $\varphi$, $\neg x_t$ and $x_t$ should occur approximately $\frac{1}{2}\frac{m}{n}$ times, and $m_t^\oplus \approx m_t^\ominus \approx \frac{\Delta}{2}$ for large $n$.  This ratio is a good indicator of the hardness of the 3-CNF-SAT problem (see \cite{Crawford199631}).  

\noindent $\Delta$ being a constant with respect to $n$, the complexity for 
${\cal H}_{\varphi_{(x_t)}}^{\oplus}$ given in (\ref{property4}) is thus :
\begin{eqnarray}
{\cal O}(\frac{\Delta}{2} \cdot n^2 \cdot 2^\Delta) = {\cal O}(n^2) \label{property6}
\end{eqnarray}

\noindent The same arguments are used to prove that the complexity to get ${\cal H}_{\varphi_{(x_t)}}^{\ominus}$ is ${\cal O}(n^2)$.
\end{proof}
\subsection{Complexity theorem for 3-CNF-SAT problem $\varphi$ with one or zero solution}
\begin{theorem}{\bf Necessary and sufficient condition for satisfiability \linebreak when $ \# {\cal S}_\varphi \leq 1$} \\
  Let $ \# {\cal S}_\varphi \; \leq 1$, then $\varphi$ is satisfiable [$\# {\cal S}_\varphi \; = 1$] if and only if
\begin{eqnarray}
 \prod_{t=1}^n 
\; [ h_{t,\varphi_{(x_t)}^\oplus}(x_1,\cdots,x_t) + x_t + 1 ] \;& \cdot& \; 
 \prod_{t=1}^n \; [ h_{t,\varphi_{(x_t)}^\ominus}(x_1,\cdots,x_t) + x_t + 1 ] \nonumber \\ 
 & = & \left( \prod_{t = 1}^n x_t  \right) + {\cal E}(x_1,\cdots,x_n)  \label{satis1} \\
 \nonumber \\
 \mbox{where  } \;\; {\cal E}(x_1,\cdots,x_n) \;\;& \mbox{ is}&\mbox{ a linear combination of rank strictly less than $n$} \nonumber
\end{eqnarray}
\end{theorem}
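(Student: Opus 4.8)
The plan is to recognise the left-hand side of (\ref{satis1}) as nothing but the indicator function $\mathbbm{1}_{{\cal S}_\varphi}$. Indeed, the product over $t$ of the $\oplus$-factors times the product of the $\ominus$-factors is exactly the expression computed in (\ref{indicator6}), so the claimed identity reads $\mathbbm{1}_{{\cal S}_\varphi}(x_1,\cdots,x_n) = (\prod_{t=1}^n x_t) + {\cal E}(x_1,\cdots,x_n)$ with ${\cal E}$ of rank strictly less than $n$. The whole statement therefore reduces to understanding the multilinear (algebraic normal form) expansion of $\mathbbm{1}_{{\cal S}_\varphi}$ over $\mathbb{F}_2$, and in particular the coefficient of the single top monomial $\prod_{t=1}^n x_t$, the unique monomial of rank $n$. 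Throughout, by the \emph{rank} of a term I mean its multilinear degree, i.e. the number of distinct variables occurring in it, and all arithmetic is mod $2$ with $x_i^2 = x_i$.

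For the direct implication, I would assume $\varphi$ satisfiable, so by the hypothesis $\# {\cal S}_\varphi \leq 1$ there is exactly one solution, ${\cal S}_\varphi = \{(s_1,\cdots,s_n)\}$. Then (\ref{indicator5}) gives $\mathbbm{1}_{{\cal S}_\varphi}(x_1,\cdots,x_n) = \prod_{i=1}^n \ell_i(x_i)$, where each factor $\ell_i(x_i) = x_i^{s_i}(x_i+1)^{(s_i+1)}$ is, once the exponents are read mod $2$, either $x_i$ (when $s_i = 1$) or $x_i+1$ (when $s_i = 0$); in both cases it is affine-linear with leading coefficient $1$. Expanding the product, the only way to produce a monomial involving all $n$ variables is to take the $x_i$ term from every factor, which yields $\prod_{t=1}^n x_t$ with coefficient $1$; any other choice drops at least one variable and contributes a monomial of rank strictly less than $n$. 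Collecting those remainders into ${\cal E}$ gives precisely the right-hand side of (\ref{satis1}).

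For the converse, I would use that the $2^n$ multilinear monomials $\prod_{i \in T} x_i$, $T \subseteq \{1,\cdots,n\}$, are linearly independent as functions $\{0,1\}^n \to \mathbb{F}_2$, so that the algebraic normal form of any Boolean function is unique. If (\ref{satis1}) holds, then $\prod_{t=1}^n x_t$ occurs in $\mathbbm{1}_{{\cal S}_\varphi}$ with coefficient $1$ and cannot be cancelled by the lower-rank terms of ${\cal E}$; hence $\mathbbm{1}_{{\cal S}_\varphi} \not\equiv 0$, which forces ${\cal S}_\varphi \neq \emptyset$. Together with $\# {\cal S}_\varphi \leq 1$ this yields $\# {\cal S}_\varphi = 1$, i.e. $\varphi$ is satisfiable. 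Equivalently, in the excluded case $\# {\cal S}_\varphi = 0$ the indicator (\ref{indicator2}) is identically $0$, whose expansion carries no term of rank $n$, so (\ref{satis1}) necessarily fails.

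The main obstacle is not any routine computation but the uniqueness of the multilinear representation: one must justify that the degree-$n$ coefficient of $\mathbbm{1}_{{\cal S}_\varphi}$ faithfully detects satisfiability, i.e. that no combination of lower-rank monomials can either create or annihilate the term $\prod_{t=1}^n x_t$. This is exactly the linear independence of the monomial basis over $\mathbb{F}_2$ (equivalently, the bijectivity of the Reed--Muller / M\"obius transform). The one subtlety to handle with care is that the exponents in (\ref{indicator5}) must be taken mod $2$, so that each $\ell_i$ remains affine-linear and keeps leading coefficient $1$; with that convention the top-degree bookkeeping goes through with no cancellation, and the equivalence follows.
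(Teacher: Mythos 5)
Your proposal is correct and follows essentially the same route as the paper: identify the left-hand side with $\mathbbm{1}_{{\cal S}_\varphi}$ via (\ref{indicator6}), expand the single-solution indicator (\ref{indicator5}) as a product of affine factors $x_i$ or $x_i+1$ to extract the top monomial $\prod_{t=1}^n x_t$ plus lower-rank terms, and dispose of the converse by noting the empty-solution case gives the identically zero function. Your explicit appeal to the uniqueness of the multilinear (algebraic normal form) representation over $\mathbb{F}_2$ makes rigorous a step the paper leaves implicit, but it is the same argument.
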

\mbox{}\\
\begin{proof}
From the hypothesis, $ \varphi$ is satisfiable if and only if $\# {\cal S}_\varphi \; = 1 $. Let $(s_1, \cdots, s_n)$ be this only solution of $\varphi$. Thus,
\begin{eqnarray*}
\mathbbm{1}_{{\cal S}_\varphi}(x_1, \cdots, x_n)  & \stackrel{(\ref{indicator6})}{=} & \prod_{t=1}^n 
\; [ h_{t,\varphi_{(x_t)}^\oplus}(x_1,\cdots,x_t) + x_t + 1 ] \; \cdot \; 
 \prod_{t=1}^n \; [ h_{t,\varphi_{(x_t)}^\ominus}(x_1,\cdots,x_t) + x_t + 1 ] \\
 & \stackrel{(\ref{indicator5})}{=} & \prod_{t=1}^n 
x_t^{s_t} (x_t + 1)^{(s_t + 1)} = \prod_{t \dagger s_t=1} (x_t) \cdot \prod_{t \dagger s_t=0} (x_t+1) \\
& = & \left( \prod_{t \dagger s_t=1} x_t  \right) \cdot \left[ \left( \prod_{t \dagger s_t=0} x_t  \right) + \sum_{j \dagger s_j=0} \left( \prod_{t \neq j \dagger s_t=0} x_t  \right) \; + \; \cdots \right] \\
 & = & \left( \prod_{t = 1}^n x_t  \right) + {\cal E}(x_1,\cdots,x_n)
\end{eqnarray*}
${\cal E}(x_1,\cdots,x_n)$ depends on the value of $(s_1,\cdots,s_n) : {\cal E}(x_1,\cdots,x_n) = 0$ when $(s_1,\cdots,s_n) = (1, \cdots,1)$ and ${\cal E}(x_1,\cdots,x_n) =  \prod_{t=1}^n (x_t + 1) - \prod_{t=1}^n (x_t) $ when $(s_1,\cdots,s_n) = (0, \cdots,0)$. Whatever the solution, $(\prod_{t = 1}^n x_t)$ appears in (\ref{indicator6}). \\
As
\[ [ {\cal S}_\varphi \; = \emptyset ] \;\; \Leftrightarrow \;\; \prod_{t=1}^n \;
[ h_{t,\varphi_{(x_t)}^\oplus}(x_1,\cdots,x_t) + x_t + 1 ] \; \cdot \;
\prod_{t=1}^n \; [ h_{t,\varphi_{(x_t)}^\ominus}(x_1,\cdots,x_t) + x_t + 1 ] \equiv 0 \]
then, the apparition of $\displaystyle (\prod_{t = 1}^n x_t)$  in (\ref{indicator6})  is equivalent to prove that $\# {\cal S}_\varphi \; = 1$.\\ 
\end{proof}
\mbox{}\\
\noindent {\bf Example :}
Let us consider the following 3-CNF-SAT problem with 6 variables and no solution :
\begin{eqnarray*}
\varphi &= & (x_6 \vee x_2 \vee x_1) 
  \wedge  (x_6 \vee x_3  \vee \neg \;x_2) \\ 
 & \wedge & (\neg \;x_6 \vee \neg \;x_5  \vee x_1) 
  \wedge  (\neg \;x_6 \vee \neg \;x_4  \vee x_1) 
  \wedge  (\neg \;x_6 \vee \neg \;x_3  \vee \neg \;x_1) 
  \wedge  (\neg \;x_6 \vee x_3  \vee \neg \;x_2) \\
 & \wedge & (x_5 \vee x_4  \vee x_1) 
  \wedge  (x_5 \vee \neg \;x_4  \vee \neg \;x_1)  
  \wedge  (x_5 \vee \neg \;x_3  \vee \neg \;x_1) \\ 
 & \wedge & (\neg \;x_5 \vee x_2  \vee \neg \;x_1) 
  \wedge  (\neg \;x_5 \vee x_4  \vee \neg \;x_1) 
  \wedge  (\neg \;x_5 \vee x_3  \vee \neg \;x_1) \\
 & \wedge & (x_4 \vee \neg \;x_2  \vee x_1)  
  \wedge  (\neg \;x_4 \vee \neg \;x_2  \vee x_1) 
  \wedge  (\neg \;x_4 \vee \neg \;x_3  \vee \neg \;x_2) \\
 & \wedge & (x_3 \vee x_2  \vee \neg \;x_1) 
 \wedge  (\neg \;x_3 \vee x_2  \vee \neg \;x_1) \\
 \end{eqnarray*}
This gives us :
\begin{eqnarray*}
h_{1,\varphi_{(x_1)}}^\oplus(\cdot) & = & x_1 \\
h_{2,\varphi_{(x_2)}}^\oplus(\cdot) & = & x_2 \\
h_{3,\varphi_{(x_3)}}^\oplus(\cdot) & = & x_1 + x_3 + x_1 x_2 + x_1 x_3 + x_1 x_2 x_3 \\
h_{4,\varphi_{(x_4)}}^\oplus(\cdot) & = & x_2 + x_4 + x_1 x_2 + x_2 x_4 + x_1 x_2 x_4 \\
h_{5,\varphi_{(x_5)}}^\oplus(\cdot) & = & 1 + x_1 + x_4 + x_1 x_3 + x_1 x_5 + x_4 x_5 + x_1 x_3 x_4 + x_1 x_3 x_5 + x_1 x_3 x_4 x_5 \\
h_{6,\varphi_{(x_6)}}^\oplus(\cdot) & = & 1 + x_1 + x_1 x_2 + x_1 x_6 + x_2 x_3 + x_1 x_2 x_6 + x_2 x_3 x_6  \\
\mbox{and} & & \\
h_{1,\varphi_{(x_1)}}^\ominus(\cdot) & = & x_1 \\
h_{2,\varphi_{(x_2)}}^\ominus(\cdot) & = & x_2 \\
h_{3,\varphi_{(x_3)}}^\ominus(\cdot) & = & x_3 + x_1 x_3 + x_1 x_2 x_3 \\
h_{4,\varphi_{(x_4)}}^\ominus(\cdot) & = & x_4 + x_2 x_4 + x_1 x_2 x_4 + x_1 x_2 x_3 x_4  \\
h_{5,\varphi_{(x_5)}}^\ominus(\cdot) & = & x_5 + x_1 x_5 + x_1 x_2 x_3 x_4 x_5 \\
h_{6,\varphi_{(x_6)}}^\ominus(\cdot) & = & x_6 + x_5 x_6 + x_4 x_6 + x_4 x_5 x_6 + x_2 x_6 + x_2 x_5 x_6 + x_2 x_4 x_6 + 
 x_2 x_4 x_5 x_6 + x_2 x_3 x_6 + \\
 & & x_2 x_3 x_5 x_6 + x_2 x_3 x_4 x_6 + x_2 x_3 x_4 x_5 x_6 + x_1 x_5 x_6 + x_1 x_4 x_6 + x_1 x_4 x_5 x_6 + x_1 x_3 x_6 + \\
 & & x_1 x_2 x_5 x_6 + x_1 x_2 x_4 x_6 + x_1 x_2 x_4 x_5 x_6 + x_1 x_2 x_3 x_5 x_6 + x_1 x_2 x_3 x_4 x_6 + x_1 x_2 x_3 x_4 x_5 x_6  \\
\end{eqnarray*}
\begin{eqnarray*}
\mathbbm{1}_{{\cal S}_\varphi}(&x_1&,\cdots, x_n)   =  \prod_{t=1}^n 
\; [ h_{t,\varphi_{(x_t)}^\oplus}(x_1,\cdots,x_t) + x_t + 1 ] \; \cdot \; 
 \prod_{t=1}^n \; [ h_{t,\varphi_{(x_t)}^\ominus}(x_1,\cdots,x_t) + x_t + 1 ] \\
& = & 2 \; x_1
+ 16 \; x_1 x_2
+ 10  \; x_1 x_3
+ 2  \; x_1 x_4
+ 10  \; x_1 x_5
+ 4  \; x_1 x_6
+ 2  \; x_4 x_6
+ 2  \; x_5 x_6
+ 242  \; x_1 x_2 x_3 \\
& + & 160  \; x_1 x_2 x_4
+ 80  \; x_1 x_2 x_5
+ 68  \; x_1 x_2 x_6
+ 10  \; x_1 x_3 x_4
+ 38  \; x_1 x_3 x_5
+ 56  \; x_1 x_3 x_6
+ 6  \; x_1 x_4 x_5 \\
& + & 26  \; x_1 x_4 x_6
+ 90  \; x_1 x_5 x_6
+ 6  \; x_2 x_3 x_4
+ 2 \;  x_2 x_3 x_5
+ 24  \;  x_2 x_4 x_6
+ 6  \; x_2 x_5 x_6
+ 8 \;  x_4 x_5 x_6 \\
&+& 3 \; 152  \; x_1 x_2 x_3 x_4 
+ 950  \; x_1 x_2 x_3 x_5
+ 2 \;  122  \; x_1 x_2 x_3 x_6
+ 624 \;  x_1 x_2 x_4 x_5
+ 2 \;  396  \; x_1 x_2 x_4 x_6 \\
&+& 1 \;  352 \;  x_1 x_2 x_5 x_6
+ 30  \; x_1 x_3 x_4 x_5
+ 176 \;  x_1 x_3 x_4 x_6
+ 508 \;  x_1 x_3 x_5 x_6
+ 268 \;  x_1 x_4 x_5 x_6 \\
&+& 10  \; x_2 x_3 x_4 x_5
+ 76 \;  x_2 x_3 x_4 x_6
+ 26 \;  x_2 x_3 x_5 x_6
+ 102 \;  x_2 x_4 x_5 x_6
+ 18 \;  950 \;  x_1 x_2 x_3 x_4 x_5 \\
&+& 75 \;  450  \; x_1 x_2 x_3 x_4 x_6
+ 25 \;  916  \; x_1 x_2 x_3 x_5 x_6
+ 26 \;  252  \; x_1 x_2 x_4 x_5 x_6
+ 1 \;  344  \; x_1 x_3 x_4 x_5 x_6 \\
&+& 384  \; x_2 x_3 x_4 x_5 x_6
+ 1 \;  086 \;  442 \;  x_1 x_2 x_3 x_4 x_5 x_6 \stackrel{mod \; 2}{=} 0  
\end{eqnarray*}
\noindent Let us consider the similar 3-CNF-SAT problem $\varphi'$ without the $13^{\mbox{th}}$ clause : 
$(x_4 \vee \neg \;x_2  \vee x_1)$.  $\varphi'$ has a unique solution : $(s_1, \cdots, s_6)=(0,1,1,0,1,0)$.
Here, we get :
\begin{eqnarray*}
\mathbbm{1}_{{\cal S}_\varphi}(x_1, &\cdots,& x_n)   =  \prod_{t=1}^n 
\; [ h_{t,\varphi_{(x_t)}^\oplus}(x_1,\cdots,x_t) + x_t + 1 ] \; \cdot \; 
 \prod_{t=1}^n \; [ h_{t,\varphi_{(x_t)}^\ominus}(x_1,\cdots,x_t) + x_t + 1 ] \\
& \stackrel{mod \; 2}{=} &  1 \;  x_2 x_3 x_5
+ 291  \; x_1 x_2 x_3 x_5
+ 3  \; x_2 x_3 x_4 x_5
+ 13 \;  x_2 x_3 x_5 x_6
+ 3\;633 \;  x_1 x_2 x_3 x_4 x_5 \\
&&+ 8 \;  293  \; x_1 x_2 x_3 x_5 x_6
+ 125  \; x_2 x_3 x_4 x_5 x_6
+ 257 \;  641 \;  x_1 x_2 x_3 x_4 x_5 x_6 \\
& \stackrel{mod \; 2}{=} & (1+ x_1)\; x_2 \; x_3 \; (1+x_4)\; x_5 \; (1+x_6) \\
\end{eqnarray*}
\begin{theorem} \label{big_thm}
{\em The 3-CNF-SAT problems, with $\#{\cal S} \leq 1$ and  $\Delta = \frac{m}{n} = {\cal O}(1)$, are in \classP}  
\end{theorem}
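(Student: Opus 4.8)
The plan is to assemble the two preceding results of this section into a single polynomial decision procedure. The first pillar is equation (\ref{property6}): when $\Delta={\cal O}(1)$, each of the $2n$ reduced descriptors ${\cal H}_{\varphi_{(x_t)}^\oplus}$ and ${\cal H}_{\varphi_{(x_t)}^\ominus}$ is computable in ${\cal O}(n^2)$, hence all of them in ${\cal O}(n^3)$. The second pillar is the necessary and sufficient condition (\ref{satis1}): since $\#{\cal S}_\varphi\le 1$ by hypothesis, $\varphi$ is satisfiable \emph{if and only if} the full-rank monomial $\prod_{t=1}^n x_t$ occurs (with coefficient $1$ modulo $2$) in the factored indicator function (\ref{indicator6}). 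The whole problem thus reduces to deciding, in polynomial time, whether the coefficient of $x_1\cdots x_n$ in that product equals $1$.

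First I would compute and store the $2n$ factors of (\ref{indicator6}). Writing $F_t := [\,h_{t,\varphi_{(x_t)}^\oplus}(x_1,\cdots,x_t)+x_t+1\,]\cdot[\,h_{t,\varphi_{(x_t)}^\ominus}(x_1,\cdots,x_t)+x_t+1\,]$, the length bound (\ref{Delta}) gives $\len(F_t)={\cal O}(2^{2\Delta})={\cal O}(1)$, so each $F_t$ is a sparse multilinear polynomial with ${\cal O}(1)$ monomials supported on the ${\cal O}(\Delta)$ variables connected to $x_t$. As $F_t$ is multilinear and its only variable of index $\ge t$ is $x_t$, it is affine in $x_t$, namely $F_t=x_t\,U_t+V_t$ with $U_t,V_t\in\combi(\{\alpha_1,\cdots,\alpha_{t-1}\})$, and these also have ${\cal O}(1)$ terms by (\ref{combi_g}). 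The indicator function is therefore the \emph{triangular} product $\mathbbm{1}_{{\cal S}_\varphi}=\prod_{t=1}^n(x_t\,U_t+V_t)$, in which $x_t$ occurs in no factor $F_s$ with $s\ne t$.

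The decision step exploits this triangular structure by a greedy peeling from the highest index downwards. Since $x_n$ occurs only in $F_n$, the monomial $x_1\cdots x_n$ can acquire its $x_n$ only from the $x_nU_n$ branch; hence the coefficient of $x_1\cdots x_n$ in $\prod_{t=1}^n F_t$ equals the coefficient of $x_1\cdots x_{n-1}$ in $U_n\cdot\prod_{t=1}^{n-1}F_t$, a product again in $x_1,\cdots,x_{n-1}$ only. Iterating, at stage $t$ one peels $x_t$ (supplied either by the $x_tU_t$ branch of $F_t$ or by a monomial of an already-committed factor), accumulates the residual multilinear polynomial, and terminates after $n$ steps with the sought coefficient. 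Each peeling touches only the ${\cal O}(\Delta)$ variables in the current factor, mirroring the bounded \emph{cluster} sizes $\#W(x_t)$ observed numerically, so a single step is ${\cal O}(1)$ in the number of variables it involves.

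The step I expect to be the main obstacle is controlling the \emph{branching} of this peeling. When $U_n$ itself contains variables of index $<n$, peeling $x_n$ forces us to track the coefficients of \emph{several} lower monomials of $\prod_{t<n}F_t$ rather than of the single top one, and a naive recursion could fan out exponentially — precisely the behaviour a full expansion of (\ref{indicator6}) shows, and the reason (\ref{indicator3}) is of exponential complexity in general. The heart of the argument is therefore to prove that, under $\Delta={\cal O}(1)$, the family of monomials ever required stays of polynomial size: each such monomial must be supported on the union of the bounded supports $W(x_t)$ of the factors consumed so far, and the descending elimination order (Definition \ref{sorted}) guarantees that a variable, once peeled, never re-enters a later obligation. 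Bounding this live working set by a polynomial in $n$ — i.e. showing the greedy peeling never sustains more than polynomially many simultaneous monomials — is what closes the argument and places these problems in \classP.
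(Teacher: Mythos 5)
Your setup coincides with the paper's own: compute the $2n$ reduced descriptors ${\cal H}_{\varphi_{(x_t)}^\oplus}$, ${\cal H}_{\varphi_{(x_t)}^\ominus}$ in ${\cal O}(n^2)$ each via (\ref{property6}), invoke (\ref{satis1}) to reduce satisfiability (given $\#{\cal S}_\varphi\le 1$) to deciding whether the coefficient of $\prod_{t=1}^n x_t$ in the product (\ref{indicator6}) is $1$ modulo $2$, and then extract that coefficient by eliminating $x_n, x_{n-1},\cdots$ in descending order. Your ``peeling'' of $x_n$ is exactly the paper's Lemma \ref{lemma1}, which writes $C^{\;[1:n]}_{(1,\cdots,1)}$ as a sum over the ${\cal O}(2^\Delta)$ nonzero coefficients of the top factor times lower-level coefficients $C^{\;[1:n-1]}_{(\cdot)}$. (One slip along the way: it is not true that $x_t$ occurs in no factor $F_s$ with $s\neq t$; it occurs in no $F_s$ with $s<t$, but may occur in every $F_s$ with $s>t$ through $V(x_s)$, which is precisely why the recursion branches. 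You implicitly correct this a few lines later.)

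The genuine gap is that you stop exactly where the proof has to be carried out. You correctly identify that the back-substitution fans out, and you state that ``bounding this live working set by a polynomial in $n$ \ldots is what closes the argument'' --- but you supply no such bound. The only control you offer is that each live monomial is supported on a union of bounded sets $W(x_t)$, and that observation alone permits the union, hence the number of distinct monomials whose coefficients must be tracked, to grow exponentially in the number of factors already consumed; this is the same ${\cal O}(2^n)$ blow-up the paper itself concedes in (\ref{lemma3}) when \emph{all} coefficients are requested. The entire final step of the paper's proof (its ``turning point'') is devoted to showing that when only $C^{\;[1:n]}_{(1,\cdots,1)}$ is wanted, each level requests only ${\cal O}(2^\Delta)$ coefficients $C^{\;[1:i]}_{(\cdot)}$ and each costs ${\cal O}(4^\Delta)$ via Lemma \ref{lemma2}, yielding the ${\cal O}(n)$ total of (\ref{main_result}); that argument leans on the normalization $V(x_i)=\{x_{i-k\Delta},\cdots,x_i\}$, which is asserted ``without loss of generality'' and is where all the difficulty is concentrated. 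Whatever one thinks of that step, it is the mathematical content of the theorem, and a proposal that defers it establishes only that the decision problem reduces to a coefficient extraction whose cost has not been controlled.
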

\mbox{}\\
{\em Proof : } \\
\raisebox{.5pt}{\textcircled{\raisebox{-.9pt} {1}}}    
$h_{t,\varphi_{( x_t)}^\oplus}( x_1,\cdots, x_t) + x_t + 1$ is a linear combination of the variables $x_i$ present 
in $\varphi_{(x_t)}^\oplus$, and $\len (h_{t,\varphi_{( x_t)}^\oplus}( x_1,\cdots, x_t) + x_t + 1) = {\cal O}(2^\Delta).$ 
See (\ref{Delta}).
The same is true for $h_{t,\varphi_{(x_t)}^\ominus}(x_1,\cdots,x_t) + x_t + 1$. Therefore, the computation of 
\[ g_t(x_1,\cdots,x_t) \stackrel{def}{=} [ h_{t,\varphi_{(x_t)}^\oplus}(x_1,\cdots,x_t) + x_t + 1 ] \; \cdot \; [ h_{t,\varphi_{(x_t)}^\ominus}(x_1,\cdots,x_t) + x_t + 1 ] \]
is ${\cal O}(2^{\; 2 \; \Delta})$ for all $t$ and $g_t(x_1,\cdots,x_t)$ is a linear combination of the variables $x_i$ present in 
$\varphi_{(x_t)}^\oplus(x_1,\cdots,x_t)$ or $\varphi_{(x_t)}^\ominus(x_1,\cdots,x_t)$, that is $V(x_t)$ as defined in (\ref{V_t}).  \\
\raisebox{.5pt}{\textcircled{\raisebox{-.9pt} {2}}}    
Let us consider $g_t(x_1,\cdots,x_t)$ :
\begin{eqnarray*}
g_t(x_1,\cdots,x_t) = \sum_{(\delta_1,\cdots,\delta_t) \in \{0,1\}^t} c^t_{(\delta_1,\cdots,\delta_t)} \; \; x_1^{\delta_1} 
\cdots x_t^{\delta_t}
\end{eqnarray*}
We have just seen that the computation of all the coefficients $c^t_{(\delta_1,\cdots,\delta_t)}$ is ${\cal O}(2^{\; 2 \; \Delta})$.
Let us note that $g_t(s_1, \cdots, s_t)$ corresponds to the total of the terms in $g_t(x_1,\cdots,x_t)$ not including 
$x_j$ such that $s_j = 0$.  For example, if $g_3(x_1,x_2,x_3) = 1 + x_1 + 3 \; x_1 x_2 + 7 \; x_1 x_2 x_3$,  we get
$g_3(1,1,0) = 5 \stackrel{mod \;2}{=} 1$.   It takes ${\cal O}(2^{\Delta})$ to get this result, as $g_t(x_1,\cdots,x_t)$
is a linear combination of ${\cal O}(\Delta)$ variables, as explained in (\ref{property6}).
Therefore, we have access, in polynomial time, to any of the coefficients $c^t_{(\delta_1, \cdots, \delta_t)}$ and to any
value for $g_t(s_1, \cdots, s_t)$. \\[12pt]
\raisebox{.5pt}{\textcircled{\raisebox{-.9pt} {3}}}    
We have now to compute the coefficient of $\displaystyle ( \prod_{t=1}^n x_t )$ in 
$\displaystyle \prod_{t=1}^n g_t(x_1, \cdots, x_t)$.
But the entire computation of $\displaystyle \prod_{t=1}^n g_t(x_1, \cdots, x_t)$ is ${\cal O}(2^{\;n\;\Delta})$ as each factor has 
${\cal O}(2^\Delta)$ terms. \\
\begin{definition} \mbox{}\\
\begin{itemize}
    \item Let us define the sub-product : $G^{\;[i_1:i_2]}(x_1,\cdots,x_{i_2}) \equiv \displaystyle \prod_{j=i_1}^{i_2} g_j(x_1, \cdots, x_j)$ \\
    \item And $C^{\;[i_1:i_2]}_{(\delta_1, \cdots, \delta_{i_2})}$ as
    $\displaystyle \sum_{(\delta_1,\cdots,\delta_{i_2}) \in \{0,1\}^{i_2}} \hspace{-12pt} C^{\;[i_1:i_2]}_{(\delta_1, \cdots, \delta_{i_2})} \; x_1^{\delta_1} \cdots x_{i_2}^{\delta_{i_2}} \;
    = \; G^{\;[i_1:i_2]}(x_1, \cdots, x_{i_2}) $ \\
    \item[] What we are looking for is $C^{\;[1:n]}_{(1,\cdots,1)}$.\\
\end{itemize}
\end{definition}
\begin{lemma}  \label{lemma1}
\begin{eqnarray*}
C^{\;[1:n]}_{(\delta_1,\cdots,\delta_n)} 
& = &    \sum_{\xi_1=0}^{\delta_1} \;\; \sum_{\zeta_1=\delta_1-\xi_1}^{\delta_1} \cdots 
  \sum_{\xi_{n-1}=0}^{\delta_{n-1}} \;\; \sum_{\zeta_{n-1}=\delta_{n-1}-\xi_{n-1}}^{\delta_{n-1}}
\; \; c^{\;n}_{(\xi_1,\cdots,\xi_{n-1},\delta_n)}  
 \; C^{\;[1:n-1]}_{(\zeta_1,\cdots,\zeta_{n-1})} \\
C^{\;[1:n]}_{(1,\cdots,1)} & = &  \hspace{-24pt}  \sum_{(\xi_1,\cdots,\xi_{n-1}) \in \{0,1\}^{n-1}} 
\; \left( c^{\;n}_{(\xi_1,\cdots,\xi_{n-1},1)} \;\; \sum_{\zeta_1=1-\xi_1}^{1} \cdots 
  \sum_{\zeta_{n-1}=1-\xi_{n-1}}^{1}
 \; C^{\;[1:n-1]}_{(\zeta_1,\cdots,\zeta_{n-1})} \right) \\
\end{eqnarray*}
\mbox{}\\
{\em Proof of the lemma : }  
\begin{eqnarray*} 
G^{\;[1:n]}(x_1, \cdots, x_{n}) & = &  g^{\;n}(x_1,\cdots,x_{n}) \cdot G^{\;[1:n-1]}(x_1,\cdots,x_{n-1})  \\
& &   \\
&  = & \left(  \sum_{(\xi_1,\cdots,\xi_{n}) \in \{0,1\}^{n}} 
c^{\;n}_{(\xi_1,\cdots,\xi_{n})} \; \; x_1^{\xi_1} 
\cdots x_{n}^{\xi_{n}} \right) \\
&  &  \hspace{24pt} \times
\left( \sum_{(\zeta_1,\cdots,\zeta_{n-1}) \in \{0,1\}^{n-1}} C^{\;[1:n-1]}_{(\zeta_1,\cdots,\zeta_{n-1})} \; \; x_1^{\zeta_1} \cdots x_{n-1}^{\zeta_{n-1}} 
\right) \\
& &\\
& &\hspace{-93pt} = \sum_{(\delta_1,\cdots,\delta_{n}) \in \{0,1\}^{n}} \left( \sum_{\xi_1=0}^{\delta_1} \;\; \sum_{\zeta_1=\delta_1 - \xi_1}^{\delta_1} 
\cdots \sum_{\xi_{n-1}=0}^{\delta_{n-1}} \;\; \sum_{\zeta_{n-1} = \delta_{n-1} - \xi_{n-1}}^{\delta_{n-1}}
\hspace{-24pt} c^{\;n}_{(\xi_1,\cdots,\xi_{n-1},\delta_{n})}  
 \; C^{\;[1:n-1]}_{(\zeta_1,\cdots,\zeta_{n-1})} \right) \; \; x_1^{\delta_1} \cdots x_{n}^{\delta_{n}}  \\
\end{eqnarray*}
The coefficient $C^{\;[1:n]}_{(1,\cdots,1)}$ is given by the sum of the terms where 
$x_j^{\;\xi_j} x_j^{\;\zeta_j} = x_j$, i.e. $(\zeta_j,\xi_j) \in \{(0,1),(1,0)(1,1)\}$ as $x^2 \stackrel{mod  \; 2}{=} x$, for
$j \leq i$. \\
\mbox{} \hfill $\blacksquare$
\end{lemma}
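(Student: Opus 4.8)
The plan is to prove the recursion by a single ``peel-off'' of the top factor $g_n$ from the product $G^{[1:n]}$, multiplying out the two monomial expansions and reading off the coefficient of a fixed monomial; the entire argument is driven by the Boolean idempotence $x_j^2 = x_j \pmod 2$. First I would invoke the definition of $G^{[i_1:i_2]}$ as a product to write $G^{[1:n]}(x_1,\cdots,x_n) = g_n(x_1,\cdots,x_n)\cdot G^{[1:n-1]}(x_1,\cdots,x_{n-1})$, stressing that the second factor does not involve $x_n$.

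Next I would substitute the expansions $g_n = \sum_{(\xi_1,\cdots,\xi_n)\in\{0,1\}^n} c^{n}_{(\xi_1,\cdots,\xi_n)}\, x_1^{\xi_1}\cdots x_n^{\xi_n}$ and $G^{[1:n-1]} = \sum_{(\zeta_1,\cdots,\zeta_{n-1})\in\{0,1\}^{n-1}} C^{[1:n-1]}_{(\zeta_1,\cdots,\zeta_{n-1})}\, x_1^{\zeta_1}\cdots x_{n-1}^{\zeta_{n-1}}$ and distribute the product over both sums, obtaining a double sum of terms $c^{n}_{(\xi)}\, C^{[1:n-1]}_{(\zeta)}\, x_1^{\xi_1+\zeta_1}\cdots x_{n-1}^{\xi_{n-1}+\zeta_{n-1}}\, x_n^{\xi_n}$.

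The crux is the exponent collapse. Since every variable lies in $\{0,1\}$ and all arithmetic is mod $2$, we have $x_j^{\xi_j} x_j^{\zeta_j} = x_j^{\delta_j}$ with $\delta_j = \xi_j \vee \zeta_j$ (Boolean OR), i.e.\ $\delta_j = 0$ exactly when $(\xi_j,\zeta_j)=(0,0)$ and $\delta_j = 1$ otherwise. I would regroup the double sum according to the resulting multi-index $(\delta_1,\cdots,\delta_n)$, so that $C^{[1:n]}_{(\delta_1,\cdots,\delta_n)}$ becomes the sum, over all $(\xi,\zeta)$ that produce the monomial $x_1^{\delta_1}\cdots x_n^{\delta_n}$, of $c^{n}_{(\xi)}\, C^{[1:n-1]}_{(\zeta)}$. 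Because $x_n$ occurs only in $g_n$, the last coordinate is forced, $\xi_n=\delta_n$, which is exactly why the last slot of the coefficient reads $c^{n}_{(\xi_1,\cdots,\xi_{n-1},\delta_n)}$.

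Finally I would check that the stated index ranges encode this OR-condition coordinatewise. For each $j<n$ the inner double sum $\sum_{\xi_j=0}^{\delta_j}\sum_{\zeta_j=\delta_j-\xi_j}^{\delta_j}$ enumerates, when $\delta_j=0$, only the pair $(0,0)$, and when $\delta_j=1$, the pairs $(0,1),(1,0),(1,1)$ — precisely the solutions of $\xi_j\vee\zeta_j=\delta_j$. This identification yields the first displayed equality. The second, for $(\delta_1,\cdots,\delta_n)=(1,\cdots,1)$, then follows by specialising every $\delta_j$ to $1$: the constraint $\xi_n=1$ is absorbed into writing $c^{n}_{(\xi_1,\cdots,\xi_{n-1},1)}$, the outer sum runs over $(\xi_1,\cdots,\xi_{n-1})\in\{0,1\}^{n-1}$, and each remaining inner range becomes $\sum_{\zeta_j=1-\xi_j}^{1}$. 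I expect no serious obstacle; the only point demanding care is the idempotence step, since forgetting that $x_j^2=x_j$ would keep the exponents additive and destroy the collapse to a Boolean multi-index, after which everything reduces to bookkeeping of the two summation ranges.
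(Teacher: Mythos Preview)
Your proposal is correct and follows essentially the same approach as the paper: peel off $g_n$, expand both factors as multilinear sums, multiply out, and use the idempotence $x_j^2=x_j$ to collapse exponents so that the contributing pairs $(\xi_j,\zeta_j)$ are exactly those with $\xi_j\vee\zeta_j=\delta_j$. Your write-up is in fact more explicit than the paper's, spelling out why $\xi_n=\delta_n$ and verifying that the nested summation bounds enumerate precisely the three pairs $(0,1),(1,0),(1,1)$ when $\delta_j=1$ and only $(0,0)$ when $\delta_j=0$.
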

\raisebox{.5pt}{\textcircled{\raisebox{-.9pt} {4}}}    
Therefore,
\begin{eqnarray*}
C^{\;[1:2]}_{(1,1)} = & & c^2_{(0,1)} c^1_{(1)} + c^2_{(1,1)} (c^1_{(0)} + c^1_{(1)}) \\
C^{\;[1:2]}_{(0,1)} = & & c^2_{(0,1)} c^1_{(0)}  \\
C^{\;[1:2]}_{(1,0)} = & & c^2_{(0,0)} c^1_{(1)} + c^2_{(1,0)} (c^1_{(0)} + c^1_{(1)}) \\
C^{\;[1:2]}_{(0,0)} = & & c^2_{(0,0)} c^1_{(0)}  \\
\mbox{} & &\\
C^{\;[1:3]}_{(1,1,1)}  = & & c^3_{(0,0,1)} C^{\;[1:2]}_{(1,1)}  \\
& + & c^3_{(0,1,1)} \left( C^{\;[1:2]}_{(1,0)} + C^{\;[1:2]}_{(1,1)} \right) 
 +  c^3_{(1,0,1)} \left( C^{\;[1:2]}_{(0,1)} + C^{\;[1:2]}_{(1,1)} \right) \\
& + & c^3_{(1,1,1)} \left( C^{\;[1:2]}_{(0,0)} + C^{\;[1:2]}_{(1,0)} + 
C^{\;[1:2]}_{(0,1)} + C^{\;[1:2]}_{(1,1)} \right)  \\
C^{\;[1:3]}_{(0,1,1)}  = & & c^3_{(0,0,1)} C^{\;[1:2]}_{(0,1)} +  c^3_{(0,1,1)} \left( C^{\;[1:2]}_{(0,0)} + C^{\;[1:2]}_{(0,1)} \right) \\
C^{\;[1:3]}_{(0,0,1)}  = & & c^3_{(0,0,1)} C^{\;[1:2]}_{(0,0)} \\
\mbox{and so on } \cdots  & &
\end{eqnarray*}
Each $c^i_{(\delta_1,\cdots,\delta_{i-1},\delta_i)}$ is multiplied by $2^{(\delta_1+ \cdots + \delta_{i-1})} \;$ terms $C^{\;[1:i-1]}_{(\cdot)}$, as every ``1" in \linebreak $c^i_{(\delta_1,\cdots,\delta_{i-1},\delta_i)}$, except for $\delta_i$, is replaced by a ``0" and a ``1" in the factors $C^{\;[1:i-1]}_{(\delta_1,\cdots,\delta_{i-1})}$.
So, to get $C^{\;[1:i]}_{(\delta_1,\cdots,\delta_i)}$, one should compute  :
\begin{eqnarray}  \label{proof1} 
3^{(\delta_1+ \cdots + \delta_{i-1})} &=& \sum_{j=0}^{\delta_1 + \cdots + \delta_{i-1}} \binom{\delta_1+ \cdots + \delta_{i-1}}{i}
 2^{j}  \;\;\; \mbox{ terms,} \\ \nonumber
\end{eqnarray}
and to get all the $C^{\;[1:i]}_{(\cdot)}$, this needs : 
\begin{eqnarray} \label{proof2}
2 \cdot 4^{(i-1)} &=& 2 \cdot \; \sum_{j=0}^{i-1} \binom{i-1}{j} 3^{j}  \;\;\; \mbox{ terms, considering }
\delta_i \in \{0,1\}. \\ \nonumber
\end{eqnarray}
\raisebox{.5pt}{\textcircled{\raisebox{-.9pt} {5}}}    
{\bf But:} 
\begin{enumerate}
    \item A very limited number of $c^i_{(\cdot)}$ are non zero in a hard 3-CNF-SAT problem.
    \item $g_i(x_1,\cdots,x_i)$ is a linear combination of $V(x_i)$ variables, with $\# V(x_i) = {\cal O}(\Delta)$.
    \item So, the maximum number of non-zero $c^i_{(\cdot)}$ is ${\cal O}(2^\Delta)$ in $g_i(x_1,\cdots,x_i)$.
    \item And the maximum total number of non-zero $c^i_{(\cdot)}$ for all $1 \leq i \leq n$ is ${\cal O}(n \cdot 2^\Delta)$.
\end{enumerate}
\mbox{}\\
\begin{lemma} \label{lemma2}
The complexity to compute $C^{\;[1:i]}_{(\delta_1,\cdots,\delta_i)}$ from $C^{\;[1:i-1]}_{(\delta_1,\cdots,\delta_{i-1})}$ is ${\cal O}(3^\Delta)$
\end{lemma}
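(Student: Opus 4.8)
\noindent The plan is to read the complexity straight off the recurrence of Lemma~\ref{lemma1} and to bound the number of \emph{nonzero} summands by invoking the sparsity of $g_i$ recorded above. Written for a single coefficient, that recurrence reads
\[
C^{\;[1:i]}_{(\delta_1,\cdots,\delta_i)} = \sum_{\xi_1=0}^{\delta_1} \sum_{\zeta_1=\delta_1-\xi_1}^{\delta_1} \cdots \sum_{\xi_{i-1}=0}^{\delta_{i-1}} \sum_{\zeta_{i-1}=\delta_{i-1}-\xi_{i-1}}^{\delta_{i-1}} c^{\;i}_{(\xi_1,\cdots,\xi_{i-1},\delta_i)} \; C^{\;[1:i-1]}_{(\zeta_1,\cdots,\zeta_{i-1})}.
\]
First I would assume the coefficients $C^{\;[1:i-1]}_{(\cdot)}$ are already stored, so that evaluating one summand is ${\cal O}(1)$ modulo-2 multiply-and-add; the whole question then collapses to counting the pairs $(\xi,\zeta)$ for which $c^{\;i}_{(\xi_1,\cdots,\xi_{i-1},\delta_i)} \neq 0$.

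\noindent The structural input is that $g_i(x_1,\cdots,x_i)$ is a multilinear combination of the $V(x_i)$ variables only, with $\# V(x_i) = {\cal O}(\Delta)$. Hence $c^{\;i}_{(\xi_1,\cdots,\xi_{i-1},\delta_i)} \neq 0$ forces the support of $(\xi_1,\cdots,\xi_{i-1})$ to lie inside $S := V(x_i) \cap \{1,\cdots,i-1\}$, a set with $\# S \leq \# V(x_i) = {\cal O}(\Delta)$. For every position $j \notin S$ we are therefore forced to take $\xi_j = 0$, and the inner range $\zeta_j \in \{\delta_j-\xi_j,\cdots,\delta_j\}$ then collapses to the single value $\zeta_j = \delta_j$; these positions contribute no branching. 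Only the $\# S = {\cal O}(\Delta)$ active positions can branch, and exactly as in the three-admissible-pairs analysis: $(\xi_j,\zeta_j)$ is forced to $(0,0)$ when $\delta_j=0$ and ranges over $\{(0,1),(1,0),(1,1)\}$ when $\delta_j=1$.

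\noindent The count is then a one-line estimate. For a fixed $\xi$ supported on $S$ with exactly $k$ ones, each such one contributes two admissible values $\zeta_j \in \{0,1\}$ while the remaining positions are determined, giving $2^k$ compatible $\zeta$; summing over all $\xi$ supported on $S$ yields
\[
\sum_{k=0}^{\# S} \binom{\# S}{k} 2^k = 3^{\# S} = {\cal O}(3^\Delta),
\]
which is precisely the sparse realisation of the count in (\ref{proof1}). Multiplying by the ${\cal O}(1)$ cost per summand gives the claimed bound. The main obstacle is purely bookkeeping: one must verify that the count genuinely \emph{factorises} over positions, so that the sparse sum equals exactly $3^{\# S}$ rather than the cruder product $({\cal O}(2^\Delta)) \cdot ({\cal O}(2^\Delta)) = {\cal O}(4^\Delta)$ one would obtain by multiplying the number of nonzero $\xi$ by the maximal number of compatible $\zeta$. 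Once the off-support collapse $\xi_j = 0 \Rightarrow \zeta_j = \delta_j$ and the binomial identity $\sum_k \binom{\# S}{k} 2^k = 3^{\# S}$ are in place, bounding $\# S$ by ${\cal O}(\Delta)$ closes the argument.
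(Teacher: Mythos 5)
Your proposal is correct and follows essentially the same route as the paper: both read the cost off the recurrence of Lemma \ref{lemma1}, use the fact that $g_i$ involves only the ${\cal O}(\Delta)$ variables of $V(x_i)$ to force $\xi_j=0$ (and hence $\zeta_j=\delta_j$) off the support, and count the surviving summands via $\sum_k \binom{\#S}{k}2^k = 3^{\#S}$ — the paper merely relabels so that $V(x_i)$ is the contiguous block $\{x_{i-k\Delta},\dots,x_i\}$ where you keep $S$ abstract. Your explicit remark that the count factorises over positions (yielding $3^{\#S}$ rather than the cruder $4^{\Delta}$-type product) and your flagging of the stored-coefficients assumption are both faithful to, and slightly more careful than, the paper's own argument.
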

\mbox{}\\
{\em Proof of the lemma : }  \\
Without loss of generality, one can suppose that $V(x_i) = \{x_{i - k \Delta}, \cdots, x_i \}$, as the number of variables in $g_i(x_1,\cdots,x_i)$ is ${\cal O}(\Delta)$. Thus, 
\[ c^i_{(\delta_1,\cdots,\delta_i)} = 0 \;\;\; \mbox{if } \;\; \; \delta_j = 1 \;\;\; \mbox{ for some } j < i- k \Delta \]
The formula (\ref{proof1}) to get $C^{\;[1:i]}_{(\delta_1,\cdots,\delta_i)}$ should be modified :
\begin{eqnarray}  \label{proof3} 
3^{(\delta_{i - k \Delta}+ \cdots + \delta_{i-1})} &=& \sum_{j=0}^{\delta_{i-k\Delta} + \cdots + \delta_{i-1}} \binom{\delta_{i-k\Delta}+ \cdots + \delta_{i-1}}{i}
 2^{j}   \\ \nonumber 
& & \\ \nonumber 
& =& {\cal O}(3^{\Delta}) \mbox{ as } \;\;
(\delta_{i-k\Delta}+ \cdots + \delta_{i-1}) \leq k \Delta 
\end{eqnarray}
\mbox{} \hfill $\blacksquare$ \\
Unfortunately, the formula (\ref{proof2}) to get all $C^{\;[1:i]}_{(\delta_1,\cdots,\delta_i)}$ is still ${\cal O}(2^i)$. 
\begin{eqnarray} \label{lemma3}
2^{(i-k\Delta)} \cdot 4^{(k \Delta)} &=& 2^{i-k\Delta} 
\cdot \; \sum_{j=0}^{k \Delta} \binom{k \Delta}{j} 3^{j}  \\\nonumber
& & \mbox{ as } \;\;
(\delta_{i-k\Delta}+ \cdots + \delta_{i-1}) \leq k \Delta \\ \nonumber
& & \mbox{ and } \;\; (\delta_1,\cdots,\delta_{i- k\Delta -1)}) \in \{0,1\}^{(i-k\Delta-1)} \; \; \mbox{ and } \;\; (\delta_i) \in \{0,1\}  \\ \nonumber
& = & {\cal O}(2^i)
\end{eqnarray}
So, the complexity to get all $C^{\;[1:n]}_{(\delta_1,\cdots,\delta_n)}$ from $C^{\;[1:n-1]}_{(\delta_1,\cdots,\delta_{n-1})}$ is ${\cal O}(2^n)$.\\[12pt]
\raisebox{.5pt}{\textcircled{\raisebox{-.9pt} {6}}}    
{\bf BUT (this is the turning point of the proof)
if we only need to know $C^{\;[1:n]}_{(1,\cdots,1)}$ :} \\
\begin{enumerate}
    \item The complexity to {\em compute} $C^{\;[1:n]}_{(1,\cdots,1)}$ from $C^{\;[1:n-1]}_{(\delta_1,\cdots,\delta_{n-1})}$ is ${\cal O}(3^{\Delta})$, by {\em Lemma \ref{lemma2}} \\
    \item To compute $C^{\;[1:n]}_{(1,\cdots,1)}$, one need to {\em know} ${\cal O}(2^{\Delta})$ terms $C^{\;[1:n-1]}_{(\delta_1,\cdots,\delta_{n-1})}$.  Indeed, if $V(x_n) = \{x_{n-k\Delta},\cdots, x_n\}$ :
    \begin{eqnarray*} [ \delta_i = 1 \; \mbox{ for } \;  i < n-k\Delta] \;\;& \Rightarrow &\;\;  c^n_{(\delta_1,\cdots,\delta_{n-k\Delta},\cdots,\delta_n)} = 0 \\ 
    & \Downarrow & \\
    \mbox{Only }\;\; C^{\;[1:n-1]}_{(0,\cdots,0,\delta_{n-k\Delta},\cdots,\delta_{n-1})} \;\; &\mbox{ with } & \; \; 
    (\delta_{n-k\Delta},\cdots,\delta_n) \in \{0,1\}^{k\Delta} \\[9pt]
    \mbox{can occur in } C^{\;[1:n]}_{(1,\cdots,1)}, \mbox{ depending} & \mbox{of the}& \mbox{values for } \; 
c^{n}_{(0,\cdots,0,\delta_{n-k\Delta},\cdots,\delta_n)}.
    \end{eqnarray*}
\item {\bf AND} to {\em compute} these ${\cal O}(2^{\Delta})$ terms $C^{\;[1:n-1]}_{(0,\cdots,0,\delta_{n-k\Delta},\cdots,\delta_{n-1})}$, the complexity given by (\ref{lemma3}) is not exponential but ${\cal O}(4^\Delta)$ : 
    \begin{eqnarray} \label{lemma4}
2 \cdot 4^{(k \Delta - 1)} &=&  
2 \cdot \; \sum_{j=0}^{k \Delta - 1} \binom{k \Delta -1}{j} 3^{j}  \\\nonumber
& & \mbox{ as } \;\;
(\delta_{n-k\Delta}+ \cdots + \delta_{n-2}) \leq k \Delta -1 \\ \nonumber
& & \mbox{ and } \;\; (\delta_1,\cdots,\delta_{i- k\Delta -1)}) = (0,\cdots,0) \; \; \mbox{ and } \;\; (\delta_{n-1}) \in \{0,1\}  \\ \nonumber
& = & {\cal O}(4^\Delta) \\ \nonumber
& & \mbox{ (instead of $ {\cal O}(3^\Delta)$ for the previous step for only one calculation)}
\end{eqnarray}
\item And we still need to {\em know only} ${\cal O}(2^{\Delta})$ terms $C^{\;[1:n-2]}_{(\delta_1,\cdots,\delta_{n-2})}$ to compute the ${\cal O}(2^{\Delta})$ terms $C^{\;[1:n-1]}_{(0,\cdots,0,\delta_{n-k\Delta},\cdots,\delta_{n-1})}$.  Indeed, $g^{n-1}(x_1,\cdots,x_{n-1})$ is a linear combination of ${\cal O}(2^{\Delta})$ variables, the ones in $V(x_{n-1}).$
    \begin{eqnarray*} 
\mbox{So, }\;\;    \delta_i = 1 \; \mbox{ for } \;  i \;:\; x_i \not \in V(x_{n-1})  \;\;& \Rightarrow &\;\;  c^{n-1}_{(\delta_1,\cdots,\delta_{n-1})} = 0 \\ 
    & \Downarrow & \\
    \mbox{Only }\;\; C^{\;[1:n-2]}_{(\delta_1,\cdots,\delta_{n-2})} \;\; &\mbox{ with } & \; \; 
    \delta_{i}=0 \; \mbox{ for } \;  i \;:\; x_i \not \in V(x_{n-1})  \\[9pt]
    \mbox{can occur in } C^{\;[1:n-1]}_{(0,\cdots,0,\delta_{n-k\Delta},\cdots,\delta_{n-1})},& & 
    \hspace{-30pt} \mbox{ depending of the values for } \; 
c^{n-1}_{(\delta_1,\cdots,\delta_{n-1})}.
    \end{eqnarray*} 
Thus, at most ${\cal O}(2^{\Delta})$ terms $C^{\;[1:n-2]}_{(\delta_1,\cdots,\delta_{n-2})}$ will be needed.\\  
\item This step is similar to step (3) : the complexity to {\em compute} these ${\cal O}(2^{\Delta})$ terms $C^{\;[1:n-2]}_{(\delta_{1},\cdots,\delta_{n-2})}$ with 
    $\delta_{i}=0 \; \mbox{ for } \;  i \;:\; x_i \not \in V(x_{n-2})$, is once again $ {\cal O}(4^\Delta)$ :
    \begin{eqnarray*} 
2 \cdot 4^{(k \Delta - 1)} &=&  
2 \cdot \; \sum_{j=0}^{k \Delta - 1} \binom{k \Delta -1}{j} 3^{j}  \\
& & \mbox{ as } \;\;
(\delta_{1}+ \cdots + \delta_{n-3}) \leq [\# V(x_{n-2})-1] = k\Delta-1 \\ 
& & \mbox{ and } \;\;\delta_{i}=0 \; \mbox{ for } \;  i \;:\; x_i \not \in V(x_{n-2}) \; \; \mbox{ and } \;\; (\delta_{n-2}) \in \{0,1\}  \\ 
& = & {\cal O}(4^\Delta) \\
\end{eqnarray*} 
\item This step is similar to step (4) : the same arguments with $V(x_{n-2})$ yields to the same conclusion, that one need to know at most ${\cal O}(2^{\Delta})$ terms $C^{\;[1:n-3]}_{(\delta_1,\cdots,\delta_{n-3})}$ to compute the previous step.\\
\item And so on, till $g^1(x_1)$.
 \end{enumerate}
 \mbox{}\\
{\bf In conclusion, the complexity to compute $C^{\;[1:n]}_{(1,\cdots,1)}$ is the sum of the computing complexity of each even step : }
\begin{eqnarray}
{\cal O}(3^\Delta) + (n-1) {\cal O}(4^\Delta) = {\cal O}(n) \label{main_result}
\end{eqnarray}
\mbox{} \hfill $\blacksquare$
\mbox{}\\
\mbox{}\\
\mbox{}\\
{\bf Example :} Let us consider this situation :
\begin{itemize}
    \item $V(x_i) = \{x_{i-2},x_{i-1},x_i\}$ for $3 \leq i \leq n$
    \item $g^i(x_1,\cdots,x_i) = 1 + x_{i-2} + x_{i-1} + x_i + x_{i-2}\; x_{i-1} + x_{i-2}\; x_i
+ x_{i-1}\; x_i + x_{i-2}\; x_{i-1}\; x_i$
%\item $g^i(x_1,\cdots,x_i) \stackrel{mod\;2}{=} (1 + x_{i-2} )(1+ x_{i-1})(1 + x_i)$
\item[$\Rightarrow$] $c^i_{(\delta_1,\cdots,\delta_i)} = 0$ if $\delta_j \neq 0$ for $j<i-2$ and 
$c^i_{(0,\cdots,0,\delta_{i-2},\delta_{i-1},\delta_i)} = 1$ otherwise
\end{itemize}
Then,  
\begin{eqnarray*}
C^{\;[3:n]}_{(1,\cdots,1,1,1)}& = &\;\;\;c^n_{(0,\cdots,0,\stackrel{\delta_{n-2}}{0},\stackrel{\delta_{n-1}}{0},\stackrel{\delta_{n}}{1})} \big(C^{\;[3:n-1]}_{(1,\cdots,1,\stackrel{\delta_{n-2}}{1},\stackrel{\delta_{n-1}}{1})}\big) \\
& & + \;\; c^n_{(0,\cdots,0,0,1,1)} \big(C^{\;[3:n-1]}_{(1,\cdots,1,1,0)}+C^{\;[3:n-1]}_{(1,\cdots,1,1,1)}\big) \\
& & + \;\;c^n_{(0,\cdots,0,1,0,1)} \big(C^{\;[3:n-1]}_{(1,\cdots,1,0,1)}+C^{\;[3:n-1]}_{(1,\cdots,1,1,1)}\big) \\
& & + \;\; c^n_{(0,\cdots,0,1,1,1)} \big(C^{\;[3:n-1]}_{(1,\cdots,1,0,0)}+C^{\;[3:n-1]}_{(1,\cdots,1,0,1)} +
C^{\;[3:n-1]}_{(1,\cdots,1,1,0)}+C^{\;[3:n-1]}_{(1,\cdots,1,1,1)}\big) \\[12pt]
& = & 4 \; C^{\;[3:n-1]}_{(1,\cdots,1,1,1)} + 2  \; C^{\;[3:n-1]}_{(1,\cdots,1,0,1)} + 2 \;  C^{\;[3:n-1]}_{(1,\cdots,1,1,0)} +  C^{\;[3:n-1]}_{(1,\cdots,1,0,0)} \\
& \stackrel{mod\;2}{=} &  C^{\;[3:n-1]}_{(1,\cdots,1,0,0)} \\
& &\\
C^{\;[3:n-1]}_{(1,\cdots,\stackrel{\delta_{n-3}}{1},\stackrel{\delta_{n-2}}{0},\stackrel{\delta_{n-1}}{0})} & = &\;\;\;c^{n-1}_{(0,\cdots,0,\stackrel{\delta_{n-3}}{0},\stackrel{\delta_{n-2}}{0},\stackrel{\delta_{n-1}}{0})} \big(C^{\;[3:n-2]}_{(1,\cdots,1,\stackrel{\delta_{n-3}}{1},\stackrel{\delta_{n-2}}{1})}\big) \\
& & + \;\;c^{n-1}_{(0,\cdots,0,\stackrel{\delta_{n-3}}{1},\stackrel{\delta_{n-2}}{0},\stackrel{\delta_{n-1}}{0})} \big(C^{\;[3:n-2]}_{(1,\cdots,1,\stackrel{\delta_{n-3}}{0},\stackrel{\delta_{n-2}}{1})} + 
C^{\;[3:n-2]}_{(1,\cdots,1,\stackrel{\delta_{n-3}}{1},\stackrel{\delta_{n-2}}{1})} \big)\\[12pt]
& = & 2 \; C^{\;[3:n-2]}_{(1,\cdots,1,1,1)}  +  C^{\;[3:n-2]}_{(1,\cdots,1,0,0)} \\
& \stackrel{mod\;2}{=} &  C^{\;[3:n-2]}_{(1,\cdots,1,0,0)} \\
& &\\
C^{\;[3:n-2]}_{(1,\cdots,\stackrel{\delta_{n-4}}{1},\stackrel{\delta_{n-3}}{0},\stackrel{\delta_{n-2}}{0})} & = &\;\;\;c^{n-2}_{(0,\cdots,0,\stackrel{\delta_{n-4}}{0},\stackrel{\delta_{n-3}}{0},\stackrel{\delta_{n-2}}{0})} \big(C^{\;[3:n-3]}_{(1,\cdots,1,\stackrel{\delta_{n-4}}{1},\stackrel{\delta_{n-3}}{1})}\big) \\
& & + \;\;c^{n-2}_{(0,\cdots,0,\stackrel{\delta_{n-4}}{1},\stackrel{\delta_{n-3}}{0},\stackrel{\delta_{n-2}}{0})} \big(C^{\;[3:n-3]}_{(1,\cdots,1,\stackrel{\delta_{n-4}}{0},\stackrel{\delta_{n-3}}{1})} + 
C^{\;[3:n-3]}_{(1,\cdots,1,\stackrel{\delta_{n-4}}{1},\stackrel{\delta_{n-3}}{1})} \big)\\[12pt]
& = & 2 \; C^{\;[3:n-3]}_{(1,\cdots,1,1,1)}  +  C^{\;[3:n-3]}_{(1,\cdots,1,0,0)} \\
& \stackrel{mod\;2}{=} &  C^{\;[3:n-3]}_{(1,\cdots,1,0,0)} \\
& &\\
\Rightarrow \;\;\; C^{\;[3:n]}_{(1,\cdots,1,1,1)} & = & C^{\;[3:3]}_{(1,0,0)} = c^3_{(1,0,0)} = 1 \\
\end{eqnarray*}
\begin{corollary} \label{corollary1}
\mbox{}\\
{\bf A 3-CNF-SAT problem $\varphi$ with $\Delta  = {\cal O}(1)$ such that [$C^{\;[1:n]}_{(1,\cdots,1)} = 1 \Leftrightarrow  {\cal S}_\varphi \neq \emptyset$] is \classP.}
\end{corollary}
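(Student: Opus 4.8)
The plan is to package the polynomial-time ingredients already assembled in Theorem~\ref{big_thm} into an explicit decision procedure and then apply the hypothesised equivalence. The corollary adds essentially nothing beyond Theorem~\ref{big_thm} and the master estimate~(\ref{main_result}): once the single coefficient $C^{[1:n]}_{(1,\cdots,1)}$ is computable in time polynomial in $n$, and its value is assumed to decide whether ${\cal S}_\varphi\neq\emptyset$, the satisfiability of $\varphi$ is settled in polynomial time, which is precisely membership in \classP.

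First I would account for the pre-processing. By the theorem on $\varphi_{(x_t)}^\oplus$ and $\varphi_{(x_t)}^\ominus$, each of the $2n$ descriptors ${\cal H}_{\varphi_{(x_t)}^\oplus}$ and ${\cal H}_{\varphi_{(x_t)}^\ominus}$ is obtained in ${\cal O}(n^2)$ by~(\ref{property6}), for a total of ${\cal O}(n^3)$. Forming each factor $g_t(x_1,\cdots,x_t)=[h_{t,\varphi_{(x_t)}^\oplus}(\cdot)+x_t+1]\cdot[h_{t,\varphi_{(x_t)}^\ominus}(\cdot)+x_t+1]$ costs ${\cal O}(2^{2\Delta})$, as noted at the opening of the proof of Theorem~\ref{big_thm}, so producing all $n$ factors is ${\cal O}(n\,2^{2\Delta})={\cal O}(n)$ since $\Delta={\cal O}(1)$. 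At this stage every factor of $\prod_{t=1}^n g_t$ is in hand.

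Next I would invoke the recursion of Lemma~\ref{lemma1}, controlled by Lemma~\ref{lemma2} and the turning-point argument, to reach the master estimate~(\ref{main_result}): the lone coefficient $C^{[1:n]}_{(1,\cdots,1)}$ is obtained in ${\cal O}(3^\Delta)+(n-1){\cal O}(4^\Delta)={\cal O}(n)$ operations. Combining this with the pre-processing, $C^{[1:n]}_{(1,\cdots,1)}$ is produced in polynomial time, and the hypothesis $[\,C^{[1:n]}_{(1,\cdots,1)}=1\Leftrightarrow{\cal S}_\varphi\neq\emptyset\,]$ turns the test ``is this bit equal to $1$?'' into a polynomial-time satisfiability decision, so $\varphi$ is in \classP.

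The \emph{delicate point} is not this bookkeeping but the legitimacy of the turning-point claim underlying~(\ref{main_result}): that to determine only $C^{[1:n]}_{(1,\cdots,1)}$ one need retain at each stage merely ${\cal O}(2^\Delta)$ intermediate coefficients $C^{[1:i]}_{(\cdot)}$, instead of the ${\cal O}(2^i)$ frontier of the naive count~(\ref{lemma3}). I would therefore focus the verification on the structural fact that, since each $g_i$ involves only the ${\cal O}(\Delta)$ variables of $V(x_i)$ (see~(\ref{V_t})), every coefficient $c^i_{(\delta_1,\cdots,\delta_i)}$ with a non-zero $\delta_j$ outside $V(x_i)$ vanishes; back-propagating the required index patterns then forces all but ${\cal O}(\Delta)$ of the $\delta$'s to be $0$ at each level, so the frontier of needed $C^{[1:i-1]}_{(\cdot)}$ never exceeds ${\cal O}(2^\Delta)$. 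This is the quantitative content of the cluster effect, and granting it the corollary follows at once from Theorem~\ref{big_thm}.
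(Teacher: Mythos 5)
Your proposal is correct and follows essentially the same route as the paper: the paper's own proof is the one-line observation that $C^{\;[1:n]}_{(1,\cdots,1)}$ is computable in ${\cal O}(n)$ when $\Delta = {\cal O}(1)$ (by the estimate (\ref{main_result})), so the hypothesised equivalence immediately yields a polynomial-time satisfiability test. Your additional bookkeeping of the pre-processing cost and your flagging of the turning-point argument as the genuinely load-bearing step are reasonable elaborations, but they do not change the argument, which in both versions consists entirely of citing Theorem \ref{big_thm} and applying the assumed equivalence.
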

\begin{proof}
As the computation of $C^{\;[1:n]}_{(1,\cdots,1)}$ is $ {\cal O}(n)$ when  $\Delta  = {\cal O}(1)$, 
the satisfiability problem of $\varphi$ is $ {\cal O}(n)$.
\end{proof}
\mbox{}\\
\begin{corollary}
\mbox{}\\
{\bf A 3-CNF-SAT problem $\varphi$ with $\Delta  = {\cal O}(1)$ and $\# {\cal S}_\varphi = 2\;k + 1 \; (k\in \mathbbm{N}) $ is \classP.}
\end{corollary}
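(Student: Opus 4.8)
The plan is to obtain this as an immediate application of Corollary~\ref{corollary1}. That corollary places a problem in \classP as soon as the biconditional $[\,C^{\;[1:n]}_{(1,\cdots,1)} = 1 \Leftrightarrow {\cal S}_\varphi \neq \emptyset\,]$ holds, since the proof of Theorem~\ref{big_thm} has already shown (see (\ref{main_result})) that $C^{\;[1:n]}_{(1,\cdots,1)}$ can be computed in ${\cal O}(n)$ time when $\Delta = {\cal O}(1)$. Thus the whole argument reduces to verifying that this biconditional is forced by the hypothesis $\# {\cal S}_\varphi = 2k+1$.

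First I would pin down the arithmetic content of the top coefficient. By definition $C^{\;[1:n]}_{(1,\cdots,1)}$ is the coefficient of $x_1 \cdots x_n$ in $G^{\;[1:n]} = \prod_{t=1}^n g_t = \mathbbm{1}_{{\cal S}_\varphi}$ (see (\ref{indicator6})). Reading $\mathbbm{1}_{{\cal S}_\varphi}$ instead through the sum expansion (\ref{indicator4}), each solution $(s_1,\cdots,s_n) \in {\cal S}_\varphi$ contributes the product $\prod_{i : s_i=1} x_i \cdot \prod_{i : s_i=0} (x_i+1)$, and the single monomial $x_1 \cdots x_n$ arises from it in exactly one way --- by selecting the $x_i$ term from every factor. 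Hence each solution adds exactly $1$ to the coefficient, giving
\[
C^{\;[1:n]}_{(1,\cdots,1)} \;=\; \# {\cal S}_\varphi \pmod 2 .
\]

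Next I would feed in the assumption $\# {\cal S}_\varphi = 2k+1$. Being odd, this count makes the displayed coefficient equal to $1$; being at least $1$, it also forces ${\cal S}_\varphi \neq \emptyset$. Both sides of the biconditional in Corollary~\ref{corollary1} are therefore simultaneously \emph{true}, so the biconditional holds. This is the heart of the matter, and it exactly parallels the $\# {\cal S}_\varphi \le 1$ regime of Theorem~\ref{big_thm}: there the only counts $0$ and $1$ match ``empty'' with ``coefficient $0$'' and ``nonempty'' with ``coefficient $1$'', whereas here the odd promise keeps both sides pinned to \emph{true}. Corollary~\ref{corollary1} now applies verbatim, and the satisfiability of $\varphi$ is decided in ${\cal O}(n)$.

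The main obstacle is conceptual rather than computational: the only genuine step is the observation that $C^{\;[1:n]}_{(1,\cdots,1)}$ equals the number of solutions modulo $2$, together with the remark that an odd promised cardinality turns the single-coefficient parity test into a sound and complete satisfiability oracle. No new complexity estimate is required, the polynomial cost being inherited from (\ref{main_result}). It is worth flagging that the confinement to odd counts is essential to soundness of the test: for an even count --- including the unsatisfiable case $0$ --- the same coefficient would vanish, so the parity test alone could not distinguish ``no solution'' from ``an even positive number of solutions''.
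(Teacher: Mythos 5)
Your proof is correct and follows essentially the same route as the paper: both arguments reduce to showing that the coefficient of $\prod_{t=1}^n x_t$ in $\mathbbm{1}_{{\cal S}_\varphi}$ equals $\#{\cal S}_\varphi \bmod 2$ (the paper via the per-solution expansion $\prod_t x_t + {\cal E}_j$ from (\ref{satis1}) summed over $2k+1$ solutions, you via direct extraction of the top monomial from each product $\prod_i x_i^{s_i}(x_i+1)^{s_i+1}$), and then invoke Corollary \ref{corollary1}. Your closing remark on why oddness is essential is a useful observation the paper leaves implicit, but it does not change the substance of the argument.
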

\begin{proof}
From (\ref{indicator4}), we get :
\begin{eqnarray*}
\mathbbm{1}_{{\cal S}_\varphi}(x_1, \cdots, x_n) & = & \sum_{(s_1, \cdots,s_n) \;\in\; {\cal S}_\varphi} \;  \prod_{i=1}^n 
x_i^{s_i} (x_i + 1)^{(s_i + 1)} \;\;\; \mbox{\tiny \em (mod 2)} \\
&\stackrel{(\ref{satis1})}{=} & \;\;\; \sum_{j=1}^{2k+1} \; 
\left( \prod_{t = 1}^n x_t   + {\cal E}_j(x_1,\cdots,x_n) \right) \\
& = & (2k + 1) \left( \prod_{t = 1}^n x_t \right) \;\; +\;\; \sum_{j=1}^{2k+1} \;
{\cal E}_j(x_1,\cdots,x_n) \\
& \stackrel{mod\;2}{=} & \left( \prod_{t = 1}^n x_t \right) \;\; +\;\;  {\cal E}(x_1,\cdots,x_n) 
\end{eqnarray*} 
Therefore, the hypotheses of {\em Corollary \ref{corollary1}} are satisfied, and the 3-CNF-SAT problem is \classP \\ 
\end{proof}
\subsection{Complexity theorems for 3-CNF-SAT problems with $\# {\cal S}_\varphi \leq 2^k$}
\noindent We have found a polynomial algorithm to solve the satisfiability of any 3-CNF-SAT problem, 
assuming only zero or an {\em odd} number of solutions can occur.
What about 3-CNF-SAT problems with a possible {\em even} number of solutions ? \\[18pt]
\raisebox{.5pt}{\textcircled{\raisebox{-.9pt} {1}}}     
Let us begin with $\varphi$ such that $\# {\cal S}_\varphi$ is zero or $2$ and 
${\cal S}_\varphi = \{(s_1,\cdots,s_n),(s'_1,\cdots,s'_n)\}$ or $\emptyset$.\\[12pt]
$\bullet \;\;\mbox{ Suppose } \exists \; ! j \;\in \{1,\cdots,n\}$ such that $s_j \neq s'_j$. 
\begin{eqnarray*}
\mathbbm{1}_{{\cal S}_\varphi}(x_1,\cdots,x_n) &=& \prod_{i=1}^n \; x_i^{s_i}\; (1+x_i)^{1+s_i}
\; + \; \prod_{i=1}^n \; x_i^{s'_i}\; (1+x_i)^{1+s'_i}\\
& = & \prod_{i \; \neq j} \; x_i^{s_i}\; (1+x_i)^{1+s_i}
\; \big( \;  x_j^{s_j}\; (1+x_j)^{1+s_j} \; + \;
 x_j^{s'_j}\; (1+x_j)^{1+s'_j} \; \big) \\
& = & \prod_{i \; \neq j} \; x_i^{s_i}\; (1+x_i)^{1+s_i}
\; \big( \;  x_j\; + \; (1+x_j) \; \big) \\
& = & \prod_{i \; \neq j} \; x_i^{s_i}\; (1+x_i)^{1+s_i} \\
& = & \big(\;\prod_{\stackrel{i=1}{i \; \neq j}}^n \; x_i\;\big) + {\cal E}(x_1,\cdots,x_{j-1},x_{j+1},\cdots,x_n) \\
\Rightarrow \;\; \varphi \;\mbox{ is satisfiable } & \Leftrightarrow & \exists \; C^{\;[1:n]}_{(\delta_1,\cdots,\delta_n)} \; \neq 0 \; \mbox{ with $\sum$} \delta_i = n-1 \; \mbox{ [one } \delta_i \mbox{ equals 0, i.e. } \delta_j=0].
\end{eqnarray*}
$\bullet\;\;$ Let us consider the general situation : $I^{(1)}= \{i \;:\; s_i = 0\}$ and $I^{(2)}= \{i \;:\; s'_i = 0\}$. Let $I^{(1)} = \{i^{(1)}_1,\cdots,i^{(1)}_{m^{(1)}}\} \;,\; I^{(2)} = \{i^{(2)}_1,\cdots,i^{(2)}_{m^{(2)}}\}$ and $\mu (j) \stackrel{def}{=} 
\#\{ l : j \;\in \; I^{(l)} \}$.
\begin{eqnarray*}
\mathbbm{1}_{{\cal S}_\varphi}(x_1,\cdots,x_n) &=&  \prod_{i \; \not \in I^{(1)}} x_i\; \prod_{i \; \in I^{(1)}} (1+x_i)
 \; + \;  \prod_{i \; \not \in I^{(2)}} x_i \prod_{i \; \in I^{(2)}} (1+x_i) \\[18pt]
& & \hspace{-20pt}
\left[ \begin{array}{ll}
\displaystyle \;\; \prod_{i\;\not \in I^{(1)}}  x_i \;
\prod_{i \; \in I^{(1)}}  (1+x_i)&\displaystyle  = ( \prod_{i\;\not \in I^{(1)}}  x_i )\;
(\sum_{(\delta_{1},\cdots,\delta_{m^{(1)}}) \;\in \{0,1\}^{m^{(1)}}} \;\;\prod_{j=1}^{m^{(1)}} x_{i_j}^{\delta_j}) \\[18pt]
&\displaystyle = \prod_{i=1}^n x_{i} + \prod_{i=1\;,\;i \neq i^{(1)}_1}^n x_{i} +
\prod_{i\neq i^{(1)}_2} x_{i} +  \cdots \\[18pt]
\displaystyle \;\; \prod_{i\;\not \in I^{(2)}}  x_i \;
\prod_{i \; \in I^{(2)}}  (1+x_i)&\displaystyle  =  \prod_{i=1}^n x_{i} + \prod_{i \neq i^{(2)}_1} x_{i} +
\prod_{i\neq i^{(2)}_2} x_{i} +  \cdots 
\end{array} \right] \\[18pt]
& \stackrel{mod\;2}{=} & \sum_{j\;:\;\mu(j)=1}  \big(\;\prod_{\stackrel{i=1}{i \; \neq j}}^n \; x_i\;\big)  + {\cal E}(x_1,\cdots,x_n) \;\; \;\; \big[ (\# {\cal S}_\varphi = 2) \; \Rightarrow \{j\;:\; \mu(j)=1\} \neq \emptyset \big]
\\[18pt]
\varphi \;\mbox{ is satisfiable } & \Leftrightarrow & \exists \; C^{\;[1:n]}_{(\delta_1,\cdots,\delta_n)} \; \neq 0 \; \mbox{ with $\sum$} \delta_j = n-1 \; \mbox{[any } \delta_{j}=0\mbox{ with } \mu(j)=1].
\end{eqnarray*}
{\em So, the unsatisfiability of any 3-CNF-SAT $\varphi$ with maximum 2 solutions will be proved only if one gets :}
\[ C^{\;[1:n]}_{(1,\cdots,1)} = 0 \; \mbox{ and } C^{\;[1:n]}_{(\delta_1,\cdots,\delta_n)} = 0 \; \mbox{ for the $\binom{n}{1}$ situations where }\sum_{j=1}^n \delta_j = n-1 \] 
 Using the same arguments as to get (\ref{main_result}), the complexity to compute any $C^{\;[1:n]}_{(\delta_1,\cdots,\delta_n)} \linebreak \mbox{with $\sum$} \delta_i = n-1 $ is ${\cal O}(n)$.  So the general complexity for any 3-CNF-SAT problem with $\# {\cal S}_\varphi \leq 2$ is ${\cal O}(n) +
 \binom{n}{1} {\cal O}(n) = {\cal O}(n^2)$.\\[24pt]
\raisebox{.5pt}{\textcircled{\raisebox{-.9pt} {2}}}     
Let us consider $\varphi$ such that $\# {\cal S}_\varphi = K = 2^k$ and 
${\cal S}_\varphi = \{(s^{(1)}_1,\cdots,s^{(1)}_n),\cdots,(s^{(K)}_1,\cdots,s^{(K)}_n)\}$.\\[12pt]
\begin{tabular}{ll}
\mbox{Let }& $I^{(l)}  = \{i \;:\; s^{(l)}_i = 0\} \mbox{ for } 1 \leq l \leq K $\\[9pt]
 & $I^{(l)}  =  \{i^{(l)}_1,\cdots,i^{(l)}_{m^{(l)}}\} $ \\[9pt]
\mbox{and }&  $\mu (\{j_1,\cdots,j_p\})  \stackrel{def}{=} 
\#\{ l : \{j_1,\cdots,j_p\} \;\subseteq \; I^{(l)} \}.$ 
\end{tabular}
\begin{eqnarray*}
\mathbbm{1}_{{\cal S}_\varphi}(x_1,\cdots,x_n) &=&  \sum_{l=1}^K \big( \prod_{i \; \not \in I^{(l)}} x_i\; \prod_{i \; \in I^{(l)}} (1+x_i) \big) \\[18pt]
& & \hspace{-35pt}
\left[ \begin{array}{ll}
\displaystyle \;\; \prod_{i\;\not \in I^{(l)}}  x_i \;
\prod_{i \; \in I^{(l)}}  (1+x_i)&\displaystyle  = ( \prod_{i\;\not \in I^{(l)}}  x_i )\;
(\sum_{(\delta_{1},\cdots,\delta_{m^{(l)}}) \;\in \{0,1\}^{m^{(l)}}} \;\;\prod_{j=1}^{m^{(l)}} x_{i_j}^{\delta_j}) \\[18pt]
&\displaystyle = \prod_{i=1}^n x_{i} + \sum_{j_1\;\in I^{(l)}} \;\; \prod_{i \neq j_1} x_{i} +
\sum_{\{j_1,j_2\}\;\subseteq I^{(l)}} \;\; \prod_{i \not \in \{j_1,j_2\}} x_{i} \\ [18pt]
&\displaystyle \;\;\; +\;\; \cdots  \sum_{\{j_1,\cdots,j_{m^{(l)}}\}\;\subseteq I^{(l)}} \;\; \prod_{i \not \in \{j_1,\cdots,j_{m^{(l)}}\}} x_{i} 
\end{array} \right] \\[18pt]
& \stackrel{mod\;2}{=} &  \sum_{j_1\;:\; \mu(\{j_1\}) = 1 } \;\; \prod_{i \neq j_1} x_{i} +
\sum_{\{j_1,j_2\}\;:\; \mu(\{j_1,j_2\}) = 1 } \;\; \prod_{i \not \in \{j_1,j_2\}} x_{i} 
\;\; +\;\; \cdots \\
& & \;\;\; + \sum_{\{j_1,\cdots,j_p\}\;:\; \mu(\{j_1,\cdots,j_p\}) = 1 }\;\; \prod_{i \not \in \{j_1,\cdots,j_p\}} x_{i} \;+\;\; {\cal E}(x_1,\cdots,x_n) \\[18pt]
& & \hspace{-35pt}
\left[ \begin{array}{l}
\displaystyle \;\; \# {\cal S}_\varphi \leq 2^k \;\;\Rightarrow  \;\; \big\{ \{j_1,\cdots,j_k\}\;:\; 
\mu(\{j_1,\cdots,j_k\}) = 1 \big\} \neq \emptyset \\[18pt]
\mbox{It is possible that : }   \big\{ \{j_1,\cdots,j_p\}\;:\; 
\mu(\{j_1,\cdots,j_p\}) = 1 \big\} = \emptyset \;\mbox{ for } p < k \\
\mbox{For example, } {\cal S}_\varphi  =  \{(1,1,0,0,0),(1,1,1,0,0),(1,1,0,1,0), (1,1,0,0,1),\\
 \hspace{90pt} (1,1,1,1,0),(1,1,1,0,1),(1,1,0,1,1),(1,1,1,1,1)\} \\
 \Rightarrow 
\big\{ \{j_1\}\;:\; \mu(\{j_1\}) = 1 \big\} = \big\{ \{j_1,j_2\}\;:\; 
\mu(\{j_1,j_2\}) = 1 \big\} = \emptyset \\
 \Rightarrow  \big\{ \{j_1,j_2,j_3\}\;:\; 
\mu(\{j_1,j_2,j_3\}) = 1 \big\} \neq \emptyset
\end{array} \right] \\[18pt]
\mbox{Assuming } & & \hspace{-24pt} \# {\cal S}_\varphi \leq 2^k\;:\; \varphi \;\mbox{ is satisfiable }  \Leftrightarrow  \exists \; C^{\;[1:n]}_{(\delta_1,\cdots,\delta_n)} \; \neq 0 \; \mbox{ with $\sum$} \delta_j \leq n-k.
\end{eqnarray*}
{\em So, assuming that our 3-CNF-SAT problems have at most $2^k$ solutions, the unsatisfiability of any such 3-CNF-SAT $\varphi$  will be proved only if one gets :}
\[ C^{\;[1:n]}_{(\delta_1,\cdots,\delta_n)} = 0 \; \mbox{ for the $\displaystyle \sum_{i=0}^k \binom{n}{i}$ situations where }\sum_{j=1}^n \delta_j \leq n-k \] 
 Using the same arguments as to get (\ref{main_result}), the complexity to compute any $C^{\;[1:n]}_{(\delta_1,\cdots,\delta_n)} \linebreak \mbox{with $\sum$} \delta_i \leq n $ is shown to be ${\cal O}(n)$.  \\[18pt]
\textbf{\em In conclusion, on the assumption of a $2^k$ limit for the number of solutions, the general complexity for any 3-CNF-SAT problem with $\Delta = \frac{m}{n} = {\cal O}(1)$ is}
\[\sum_{i=0}^k \binom{n}{i} \; {\cal O}(n) \le (n+1)^k \; {\cal O}(n)
 \;=\; {\cal O}(n^k) \; \mbox{ for large $n$ wrt $k$.} \]
 \mbox{}\\
\textbf{\em Therefore, the ``hard" 3-CNF-SAT problems are in \classP, as their number of solutions 
is limited by $2^k$.  } \\[24pt]

\section{Remarks and further researches}
\noindent It is important to stress that this is not
a heuristic proof. The fact that 
our polynomial algorithm {\bf does not deliver any solution}, but only states whether they exist or not, is
a key issue for downgrading the complexity level in our paper.  This is a very high price to pay, perhaps further
researches could lighten this price.   \\[12pt]
\noindent It is also essential to underline that {\bf this is not a proof that \classNP = \classP}.  It is
a first insight in the complex question of the boundary between \classP and \classNP.  The search of 
a polynomial algorithm for {\em easy} 3-CNF-SAT problems [$\# {\cal S}_\varphi \neq {\cal O}(2^k)$]
is under way, but the main issue seems to be {\em how to distinguish between easy and hard 3-CNF-SAT
problems} and {\em if it is possible to do that in a polynomial complexity.} \\[12pt]
\noindent An algorithm, freely available, exists and was heavily used to check the theoretical 
results of this paper.

%============  Bibliographie ==================
%
\nocite{Sipser92}
%\nocite{Sanjeev09}

\addcontentsline{toc}{part}{\mbox{Bibliography}}  
\bibliographystyle{plain}   
\bibliography{mybib}   

\newcommand{\noopsort}[1]{} \newcommand{\printfirst}[2]{#1}
  \newcommand{\singleletter}[1]{#1} \newcommand{\switchargs}[2]{#2#1}
\begin{thebibliography}{1}

\bibitem{burris2012a}
Stanley Burris.
\newblock {\em A Course in Universal Algebra}.
\newblock Dover Pubns, City, 2012.

\bibitem{cormen2001}
Th. Cormen, Ch. Leiserson, R.~Rivest, and Cl. Stein.
\newblock {\em Introduction to Algoritmics}.
\newblock MIT Press, Cambridge, 2nd edition, 2001.

\bibitem{Crawford199631}
James~M. Crawford and Larry~D. Auton.
\newblock Experimental results on the crossover point in random 3-sat.
\newblock {\em Artificial Intelligence}, 81(1–2):31 -- 57, 1996.
\newblock Frontiers in Problem Solving: Phase Transitions and Complexity.

\bibitem{Sipser92}
M.~Sipser.
\newblock \mbox{The History and Status of the P versus NP Question}.
\newblock {\em Proceedings of the 24th Annual Meeting ACM}, pages 603--618,
  1992.

\end{thebibliography}
\end{document}